\newtheorem{theorem}{Theorem}
\newtheorem{corollary}{Corollary} % above says to follow Theorem counter. This gives Corollary it's own counter.
\newtheorem{lemma}{Lemma}  % or should it be ...[theorem]{Lemma}?
\theoremstyle{definition}
\newtheorem{definition}{Definition}
\newtheorem{example}{Example}[section]
\newcommand{\1}[1]{\mathds{1}_{[#1]}}
\begin{document}

\title{Generalizations of the `Linear Chain Trick':\\ Incorporating more flexible dwell time distributions into mean field ODE models.}

\author{Paul J. Hurtado\\
	University of Nevada, Reno \\
	ORCID: 0000-0002-8499-5986 \\
	\texttt{phurtado@unr.edu}
	\And
	Adam S. Kirosingh \\
	Stanford University \\
	ORCID: 0000-0003-0500-9269
}

\date{Last Updated: \today}

\maketitle

% THIS NEEDS TO BE 250 WORDS OR LESS
\begin{abstract} 
	Mathematical modelers have long known of a ``rule of thumb" referred to as the Linear Chain Trick (LCT; aka the Gamma Chain Trick): a technique used to construct mean field ODE models from continuous-time stochastic state transition models where the time an individual spends in a given state (i.e., the dwell time) is Erlang distributed (i.e., gamma distributed with integer shape parameter). Despite the LCT's widespread use, we lack general theory to facilitate the easy application of this technique, especially for complex models. This has forced modelers to choose between constructing ODE models using heuristics with oversimplified dwell time assumptions, using time consuming derivations from first principles, or to instead use non-ODE models (like integro-differential equations or delay differential equations) which can be cumbersome to derive and analyze. Here, we provide analytical results that enable modelers to more efficiently construct ODE models using the LCT or related extensions. Specifically, we 1) provide novel extensions of the LCT to various scenarios found in applications; 2) provide formulations of the LCT and it's extensions that bypass the need to derive ODEs from integral or stochastic model equations; and 3) introduce a novel Generalized Linear Chain Trick (GLCT) framework that extends the LCT to a much broader family of distributions, including the flexible \textit{phase-type} distributions which can approximate distributions on $\mathbb{R}^+$ and be fit to data. These results give modelers more flexibility to incorporate appropriate dwell time assumptions into mean field ODEs, including conditional dwell time distributions, and these results help clarify connections between individual-level stochastic model assumptions and the structure of corresponding mean field ODEs. 

\end{abstract}

\clearpage
\tableofcontents
\clearpage

\section{Introduction} \label{Intro} 

Many scientific applications involve systems that can be framed as continuous time state transition models \citep[e.g., see][]{Strogatz2014}, and these are often modeled using mean field ordinary differential equations (ODE) of the form $$\frac{d\mathbf{x}}{dt}=f(\mathbf{x},\theta,t),$$ where $\mathbf{x}(t) \in \mathbb{R}^n$, parameters $\theta\in\mathbb{R}^p$, and $f:\; \mathbb{R}^n \mapsto \mathbb{R}^n$ is smooth. The abundance of such applications, and the accessibility of analytical and computational tools for analyzing ODE models, have made ODEs one of the most popular modeling frameworks in scientific applications.

Despite their widespread use, one shortcoming of ODE models is their inflexibility when it comes to specifying probability distributions that describe the duration of time spent in a given state. The basic options available for assuming specific dwell time distributions within an ODE framework can really be considered as a single option: the $1^\text{st}$  event time distribution for a (nonhomogeneous) Poisson process, which includes the exponential distribution as a special case. 

To illustrate this, consider the following SIR model of infectious disease transmission by \citet{Kermack1927},  

\begin{subequations}\label{eq:SIR}\begin{align}
	\frac{d}{dt}S(t) =&\; -\lambda(t)\,S(t) \label{eq:S}\\
	\frac{d}{dt}I(t) =&\; \lambda(t)\,S(t) - \gamma\,I(t) \label{eq:I} \\
	\frac{d}{dt}R(t) =&\; \gamma\,I(t) \label{eq:R} 
	\end{align} \end{subequations} where $S(t)$, $I(t)$, and $R(t)$ correspond to the number of susceptible, infected, and recovered individuals in a closed population at time $t$ and $\lambda(t)\equiv \beta\,I(t)$ is the per-capita infection rate (also called the \textit{force of infection} by \citet {AndersonMay1992} and others). This model can be thought of as the mean field model for some underlying stochastic state transition model where a large but finite number of individuals transition from state S to I to R (see \citet{Kermack1927} for a derivation, and see \citet{Armbruster2017}, \citet{Banks2013}, and references therein for examples of the convergence of stochastic models to mean field ODEs). 

Although multiple stochastic models can yield the same mean field deterministic model, it is common to consider a stochastic model based on Poisson processes. For the SIR model above, for example, a stochastic analog would assume that, over the time interval $[t,t+\Delta t]$ (for very small $\Delta t$), each individual in S or I at time $t$ is assumed to transition from S to I with probability $\lambda(t)\,\Delta t$, or from I to R with probability $\gamma\,\Delta t$, respectively. Taking $\Delta t \to 0$ yields the desired continuous time stochastic model. Here, the linear rate of transitions from I to R ($\gamma\,I(t)$) arises from assuming the dwell time for an individual in the infected state (I) follows an exponential distribution with rate $\gamma$ (i.e., the $1^\text{st}$ event time distribution for a homogeneous Poisson process with rate $\gamma$). Similarly, assuming the time spent in state S follows the $1^\text{st}$ event time distribution under a nonhomogeneous (also called inhomogeneous) Poisson process with rate $\lambda(t)$ yields a time-varying per capita transition rate $\lambda(t)$. This association of a mean field ODE with a specific underlying stochastic model provides very valuable intuition in an applied context. For example, it allows modelers to ascribe application-specific (e.g., biological) interpretations to parameters and thus estimate parameter values (e.g., for $\gamma$ above, the mean time spent infectious is $1/\gamma$), and it provides intuition and a clear mathematical foundation from which to construct and evaluate mean field ODE models based on individual-level, stochastic assumptions.

To construct models using other dwell time distributions, a standard approach is to formulate a continuous time stochastic model and from it derive mean field \textit{distributed delay equations}, typically represented as integro-differential equations (IDEs) or sometimes integral equations (IEs) \citep[e.g., see][]{Kermack1927, Hethcote1980, Feng2007, Feng2016}. Readers unfamiliar with IEs and IDEs are referred to \citet{Burton2005} or similar texts. IEs and IDEs have proven to be quite useful models in biology, e.g., they have been used to model chemical kinetics \citep{Roussel1996}, gene expression \citep{Smolen2000,Takashima2011,Guan2018}, physiological processes such as glucose-insulin regulation \citep[][and references therein]{Makroglou2006}, cell proliferation and differentiation \citep{Ozbay2008,Clapp2015,Yates2017}, cancer biology and treatment \citep{Piotrowska2018,Krzyzanski2018,CamaraDeSouza2018}, pathogen and immune response dynamics \citep{Fenton2006}, infectious disease transmission \citep{Anderson1980,Lloyd2001a,Lloyd2001b,Feng2000,Wearing2005,Lloyd2009,Feng2007,Ciaravino2018}, and population dynamics  \citep{MacDonald1978,Blythe1984, Metz1986, Boese1989,Nisbet1989, Cushing1994, Wolkowicz1997, Gyllenberg2007, Wang2016, Lin2018, Robertson2018}. See also \citet{Campbell2009} and the applications reviewed therein. 

However, while distributed delay equations are very flexible, in that they can incorporate arbitrary dwell time distributions, they also can be more challenging to derive, to analyze mathematically, and to simulate \citep{Cushing1994, Burton2005}. Thus, many modelers face a trade-off between building appropriate dwell time distributions into their mean field models (i.e., opting for an IE or IDE model) and constructing parsimonious models that are more easily analyzed both mathematically and computationally (i.e., opting for an ODE model).  For example, the following system of integral equations generalizes the SIR example above by incorporating an arbitrary distribution for the duration of infectiousness (i.e., the dwell time in state I):  

\begin{subequations}\label{eq:SIR2}\begin{align}
	S(t) =&\; S(0)(1-F_S(t)) \label{eq:S2}\\
	I(t) =&\; I(0)(1-F_I(t)) + \int_0^t \beta\,I(u)\,S(u)\,(1-F_I(u))\,du  \label{eq:I2} \\
	R(t) =&\; N - S(t) - I(t) \label{eq:R2} 
	\end{align} \end{subequations} where $N=S(0)+I(0)+R(0)$, $1-F_S(t)=\exp\big(-\int_0^t\beta\,I(u)\,du\big)$ is the survival function for the distribution of time spent in susceptible state S (i.e. the 1$^\text{st}$  event time under a Poisson process with rate $\lambda(t)=\beta\,I(t)$), and $1-F_I(t)=\exp\big(-\gamma\,t\big)$ is the survival function for the time spent in the infected state I \citep[related models can be found in, e.g.,][]{Feng2000, Ma2006, Krylova2013, Champredon2018}. A different choice of the CDF $F_I$ allows us to generalize the SIR model to other dwell time distributions that describe the time individuals spend in the infected state. Integral equations like those above can also be differentiated (assuming the integrands are differentiable) and represented as integrodifferential equations \citep[e.g., as in][]{Hethcote1980}.

There have been some efforts in the past to identify which categories of integral and integro-differential equations can be reduced to systems of ODEs \citep[e.g.,][and references therein]{MacDonald1989, Metz1991, Ponosov2002, Jacquez2002, Burton2005, Goltser2013, Diekmann2017}, but in practice the most well known case is the reduction of IEs and IDEs that assume Erlang\footnote{Erlang distributions are Gamma distributions with integer-valued shape parameters.} distributed dwell times. This is done using what has become known as the Linear Chain Trick \citep[LCT, also referred to as the Gamma Chain Trick;][]{MacDonald1978ch2, Smith2010} which dates at least back to \citet{Fargue1973} and earlier work by Theodore Vogel \citep[e.g.,][according to \citet{CamaraDeSouza2018}]{Vogel1961, Vogel1965}.  However, for more complex models that exceed the level of complexity that can be handled by existing ``rules of thumb" like the LCT, the current approach is to derive mean field ODEs from mean field integral equations that might themselves first need to be derived from system-specific stochastic state transition models \citep[e.g.,][and see the Appendix for an example.]{Kermack1927, Feng2007, Banks2013,Feng2016}. Unfortunately, modelers often avoid these extra (often laborious) steps in practice by assuming (sometimes only implicitly) very simplistic dwell time distributions based on Poisson process 1$^\text{st}$  event times as in the SIR example above.  

In light of the widespread use of ODE models, these challenges and trade-offs underscore a need for a more rigorous theoretical foundation to more effectively and more efficiently construct mean field ODE models that include more flexible dwell time distribution assumptions \citep{Wearing2005, Feng2016, Robertson2018}. The goal of this paper is to address these needs by 1) providing a theoretical foundation for constructing the desired system of ODEs directly from ``first principles" (i.e., stochastic model assumptions), without the need to derive ODEs from intermediate IDEs or explicit stochastic models, and by 2) providing similar analytical results for novel extensions of the LCT which allow more flexible dwell time distributions, and conditional relationships among dwell time distributions, to be incorporated into ODE models. We also aim to clarify how underlying (often implicit) stochastic model assumptions are reflected in the structure of corresponding mean field ODE model equations.  

The remainder of this paper is organized as follows. An intuitive description of the Linear Chain Trick (LCT) is given in \S \ref{sec:intuition} as a foundation for the extensions that follow. In \S \ref{sec:defs} we review key notation and properties of Poisson processes and certain probability distributions needed for the results that follow. In \S \ref{sec:pcode} we detail the association between Poisson process intensity functions and per capita rates in mean field ODEs, and in \S \ref{sec:weakmem} we introduce what we call the \textit{weak memorylessness} property of (nonhomogeneous) Poisson process $1^{st}$ event time distributions. In \S \ref{sec:simpleX} and \S \ref{sec:base} we give a formal statement of the LCT and in \S \ref{sec:extendedlct} a generalization that allows time-varying rates in the underlying Poisson processes. We then provide similar generalizations for more complex cases: In \S \ref{sec:tomulti} we provide results for multiple ways to implement transitions from one state to multiple states (which arise from different stochastic model assumptions and lead to different systems of mean field ODEs), and we address dwell times that obey Erlang mixture distributions. In \S\ref{sec:intermediate} we provide results that detail how the choice to ``reset the clock" (or not) following a sub-state transition is reflected in the corresponding mean field ODEs. Lastly, in \S\ref{sec:glct} we present a Generalized Linear Chain Trick (GLCT) which details how to construct mean field ODEs from first principles based on assuming a very flexible family of dwell time distributions that include the \textit{phase-type} distributions, i.e., hitting time distributions for certain families of continuous time Markov chains \citep[][]{Reinecke2012a,Horvath2016}. Tools for fitting phase-type distributions to data, or using them to approximate other distributions, are mentioned in the Discussion section \S\ref{sec:discussion} and the appendices, which also include additional information on deriving mean field integral equations from continuous time stochastic models.  

\subsection{Intuitive description of the Linear Chain Trick}\label{sec:intuition}

To begin, an intuitive understanding of the Linear Chain Trick (LCT) based on some basic properties of Poisson processes, is helpful for drawing connections between underlying stochastic model assumptions and the structure of their corresponding mean field ODEs. Here we consider a very basic case: the mean field ODE model for a stochastic process in which particles in state X remain there for an Erlang($r,k$) distributed amount of time before exiting to some other state (see Figure \ref{fig:simpleX} and \S\ref{sec:simpleX}). 

In short, the LCT exploits a natural stage structure within state X imposed by assuming an Erlang distributed dwell time with rate $r>0$ and shape parameter $k>0$ (i.e., a gamma($r,k$) distribution with integer shape $k$). Recall that an Erlang($r,k$) distribution models the time until the $k^\text{th}$ event under a homogeneous Poisson process with rate $r$. In that context, each event is preceded by a length of time that is exponentially distributed with rate $r$, and thus the time to the $k^\text{th}$ event is the sum of $k$ independent and identically distributed exponential random variables (i.e., the sum of $k$  $iid$ exponential random variables with rate $r$ is Erlang($r,k$) distributed). Particles in state X at a given time can therefore be classified by which event they are awaiting, i.e., each particle is in exactly one of $k$ sub-states of X$=$X$_1\cup\cdots\cup$X$_k$ where a particle is in state X$_i$ if it is waiting for the $i^\text{th}$ event to occur. The dwell time distribution for each sub-state X$_i$ is exponential with rate $r$, and particles leave the last state X$_k$ (and thus X) upon the occurrence of the $k^\text{th}$ event.

This sub-state partition is useful to impose on X because we may then exploit the fact that the mean field equations corresponding to these sub-state transitions are systems of linear (or nearly linear) ODEs.  Specifically, if we let $x_i(t)$ denote the expected number of particles at time $t$ in state X$_i$, then the mean field equations for this scenario are given by 

\begin{equation} \begin{split} \frac{d}{dt}x_1(t) =& -r\,x_0(t), \\ \frac{d}{dt}x_i(t) =& r\,x_{i-1}(t)\,-\,r\,x_i(t) \quad\text{ for } i=2,\ldots,k    \end{split} \end{equation} where the total amount in X at time $t$ is $x(t)=\sum_{i=1}^k x_i(t)$, $x_1(0)=x_0$ and $x_i(0)=0$ for $i=2,\ldots,k$.

As we show below, a Poisson process based perspective allows us to generalize the LCT in two main ways: First, we can extend the basic LCT to other more complex cases where we ultimately partition a focal state X in a similar fashion, including sub-state transitions with conditional dwell time distributions (see \S\ref{sec:tomulti}). Second, this reduction of states to sub-states with exponential dwell time distributions (i.e., dwell times distributed as $1^\text{st}$ event times under homogeneous Poisson processes) can also be extended to $1^\text{st}$ event time distributions under a nonhomogeneous Poisson processes with time varying rate $r(t)$, allowing for time-varying dwell time distributions to be used in extensionss of LCT.

\section{Model Framework}\label{sec:defs}

The context in which we consider applications of the Linear Chain Trick (LCT) is the derivation of continuous time mean field model equations for stochastic state transition models with a distributed dwell time in a focal state, X. Such mean field models might otherwise be modeled as integral equations (IEs) or integro-differential equations (IDEs), and we seek to identify generalizations of the LCT that allow us to replace such mean field integral equations with equivalent systems of 1$^{st}$ order ODEs. To do this, we first introduce some notation and review key properties of the Erlang family of gamma distributions, and their time-varying counterparts, $k^\text{th}$ event time distributions under nonhomogeneous Poisson processes.

\subsection{Distributions \& notation}\label{sec:defns}

Below we will extend the LCT from Erlang($r,k$) distributions (i.e., $k^\text{th}$ event time distributions under homogeneous Poisson processes with rate $r$) to event time distributions under nonhomogeneous Poisson processes with time varying rate $r(t)$, and related distributions like the minimum of multiple Erlang random variables. In this section we will first review properties of event time distributions under homogeneous Poisson processes, i.e., Erlang distributions, then analogous properties of event time distributions under nonhomogeneous Poisson processes.

Gamma distributions can be parameterized\footnote{They can also be parameterized in terms of their mean and variance (see Appendix \ref{sec:approxgamma}), or with a \textit{shape} and \textit{scale} parameters, where the \textit{scale} parameter is the inverse of the \textit{rate}.} by two strictly positive quantities: \textit{rate} $r$ and \textit{shape} $k$ (sometimes denoted $\alpha$ and $\beta$, respectively). The Erlang family of distributions can also be thought of as the a subfamily of gamma distributions with integer-valued shape parameters $k\in\mathbb{Z_+}$, or equivalently as the distributions resulting from the sum of $k$ $iid$ exponential distributions. That is, if a random variable $T=\sum_{i=1}^k T_i$, where all $T_i$ are independent exponential distributions with rate $r$, then $T$ is Erlang($r,k$) distributed.  Since the inter-event times under a homogeneous Poisson process are exponentially distributed, the time to the $k^\text{th}$ event is thus Erlang($r,k$). This construction is foundational to a proper intuitive understanding of the LCT and its extensions below. 

If random variable $T$ is gamma$(r,k)$ distributed, then its mean $\mu$, variance $\sigma^2$, and coefficient of variation $c_v$ are given by 

\begin{equation}\label{eq:meanvar}\mu=\frac{k}{r} \text{, } \sigma^2=\frac{k}{r^2} \text{, and } c_v=\frac{1}{\sqrt{k}}.\end{equation} Note that by solving \eqref{eq:meanvar}, one can parameterize a gamma distributed random variable by writing the rate $r$ and shape $k$ in terms of a target mean $\mu$ and variance $\sigma^2$ as  

\begin{equation} \label{eq:meanvarrk} r = \frac{\mu}{\sigma^2} \text{, and } k = \frac{\mu^2}{\sigma^2} = r\,\mu. \end{equation} However, to ensure this gamma distribution is also Erlang (i.e., to ensure the shape parameter $k$ is an integer) one must adjust the assumed variance up or down by rounding the value of $k$ in eq. \eqref{eq:meanvarrk} down or up, respectively, to the nearest integer (see Appendix \ref{sec:approxgamma} for details, and alternatives).

The Erlang density function ($g$), CDF ($G$), and survival\footnote{A useful interpretation of survival functions, which is used below, is that they give the expected proportion remaining after a give amount time.} function ($S=1-G$; also called the \textit{complementary CDF}) are given by 

\begin{subequations}\label{eq:gamma}\begin{align}
	g^k_r(t) =&\; r\,\frac{(r\,t)^{k-1}}{(k-1)!}e^{-rt} \label{eq:dgamma}\\
	G^k_r(t) =&\; 1 - \sum_{j=1}^{k} \frac{(r\,t)^{j-1}}{(j-1)!}e^{-r\,t} = 1- \sum_{j=1}^{k} \frac1r\,g_r^j(t) \label{eq:pgamma} \\
	S^k_r(t) =&\; 1-G^k_r(t) =  \sum_{j=1}^{k} \frac1r\,g_r^j(t). \label{eq:sgamma} 
	\end{align}\end{subequations}  

The results below use (and generalize) the following property of Erlang distributions, detailed in Lemma \ref{lem:smith711} \citep[eqs. 7.11 in][restated here without proof]{Smith2010}, which is the linchpin of the LCT.

\begin{lemma}\label{lem:smith711} The Erlang distribution density functions $g_r^j(t)$, with rate $r$ and shape $j$, satisfy 
	\begin{subequations}\label{eq:smith711}\begin{align}
		\frac{d}{dt}g^1_{r}(t) =& -r g^1_{r}(t),\text{ where }g^1_{r}(0)=r, \\
		\frac{d}{dt}g^j_{r}(t) =& r[g^{j-1}_{r}(t)-g^{j}_{r}(t)],\text{ where }g^j_{r}(0)=0 \text{ for }j\geq 2.
		\end{align}\end{subequations}
\end{lemma}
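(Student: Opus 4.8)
The plan is to verify both identities by direct differentiation of the closed-form Erlang density in \eqref{eq:dgamma}, since the lemma is really a statement about how the time derivative of $g^j_r$ decomposes into a ``gain'' term and a ``loss'' term. I would not invoke any probabilistic machinery or the earlier survival-function identities; the entire content is elementary calculus together with one careful factorial bookkeeping step.

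First I would handle the base case $j=1$. Here $g^1_r(t) = r\,e^{-rt}$, so differentiating gives $\frac{d}{dt}g^1_r(t) = -r^2 e^{-rt} = -r\,g^1_r(t)$, and evaluating at $t=0$ gives $g^1_r(0)=r$. This part is immediate.

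For the general case $j\geq 2$, I would write $g^j_r(t) = \frac{r^j t^{j-1}}{(j-1)!}\,e^{-rt}$ and apply the product rule to the factor $t^{j-1}e^{-rt}$. This produces two terms: one proportional to $t^{j-2}e^{-rt}$ (from differentiating the power) and one proportional to $-r\,t^{j-1}e^{-rt}$ (from differentiating the exponential). The second term is immediately $-r\,g^j_r(t)$. The one step requiring care --- and essentially the only place the computation can go wrong --- is recognizing that the first term equals $r\,g^{j-1}_r(t)$: this uses the identity $(j-1)! = (j-1)(j-2)!$ to cancel the factor $(j-1)$ produced by the power rule, so that $\frac{r^j(j-1)}{(j-1)!}\,t^{j-2}e^{-rt} = r\cdot\frac{r^{j-1}t^{j-2}}{(j-2)!}\,e^{-rt} = r\,g^{j-1}_r(t)$. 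Combining the two terms yields $\frac{d}{dt}g^j_r(t) = r[g^{j-1}_r(t) - g^j_r(t)]$.

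Finally I would read off the initial conditions: for $j\geq 2$ the density $g^j_r(t)$ carries a factor $t^{j-1}$ with exponent $j-1\geq 1$, so $g^j_r(0)=0$. The main (and admittedly minor) obstacle is just keeping the factorial indices aligned when re-indexing $j\mapsto j-1$; there is no genuine analytic difficulty here, which is precisely why this identity can serve as a clean linchpin for assembling the LCT's linear system of sub-state ODEs.
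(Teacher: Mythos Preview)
Your argument is correct: direct differentiation of the closed-form density \eqref{eq:dgamma} with the product rule, followed by the factorial re-indexing $(j-1)!=(j-1)(j-2)!$, yields both recursion relations and the stated initial values exactly as you describe. Note, however, that the paper does not actually supply a proof of this lemma --- it is explicitly ``restated here without proof'' from \citet{Smith2010} --- so there is no paper-side argument to compare against. Your computation is the standard verification and would serve perfectly well as the omitted proof.
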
 

Since homogeneous Poisson processes are a special case of nonhomogeneous Poisson processes\footnote{... despite the implied exclusivity of the adjective \textit{nonhomogeneous}.} from here on we will use ``Poisson process" or ``Poisson process with rate $r(t)$" to refer to cases that apply to both homogeneous (i.e., $r(t)=r$ constant) and nonhomogeneous Poisson processes. The event time distributions under these more general Poisson processes have the following properties. 

The $k^\text{th}$ event time distribution under a Poisson process with rate $r(t)$, starting from some time $\tau < t$ has a density function ($h_{r}^k$), survival function ($\mathcal{S}_{r}^k$), and CDF ($H_{r}^k\equiv 1-\mathcal{S}_{r}^k$) given by

\begin{subequations}\label{eq:ktheventdist}\begin{align}
	h^k_{r}(t,\tau) =&\; r(t)\,\frac{m(t,\tau)^{k-1}}{(k-1)!}\,e^{-m(t,\tau)} \quad \text{ and} \label{eq:dktheventdist}\\
	%H^k_{r}(t,\tau) =&\; 1 - \sum_{j=1}^{k} \frac{h^j_{r}(t,\tau)}{r(t)} \label{eq:pktheventdist} \\
	\mathcal{S}^k_{r}(t,\tau) =&\;\sum_{j=1}^{k} \frac{h^j_{r}(t,\tau)}{r(t)} \label{eq:sktheventdist} 
	\end{align}\end{subequations} where 

\begin{equation}\label{eq:m} m(t,\tau) \equiv \int_\tau^t r(s)\,ds\end{equation} and $\frac{d}{dt}m(t,\tau)=r(t)$.

For an arbitrary survival function starting at time $\tau$ (i.e., over the period $[\tau,t]$ where $t\geq\tau$) we will use the notation $S(t,\tau)$. In some instances, we also use the notation $S(t)\equiv S(t,0)$. 

Lastly, in the context of state transitions models, it is common to assume that, upon leaving a given state (e.g., state X) at time $t$, individuals are distributed across multiple recipient states according to a generalized Bernoulli distribution (also known as the \textit{categorical distribution} or the multinomial distribution with $1$ trials) defined on the integers 1 through $k$ where the probability of a particle entering the $j^\text{th}$ of $k$ recipient states ($j\in{1,\ldots,k}$) is $p_j(t)$ and $\sum_{j=1}^k p_j(t)=1$.

\section{Results}\label{sec:results}

The results below focus on one or more states, within a potentially larger state transition model, for which we would like to assume a particular dwell time distribution and derive a corresponding system of mean field ODEs using the LCT or a generalization of the LCT. In particular, the results below describe how to construct those mean field ODEs directly from stochastic model assumptions without needing to derive them from equivalent mean field integral equations (which themselves may need to be derived from an explicit continuous-time stochastic model). 

\subsection{Preliminaries}
Before presenting extensions of the LCT, we first illustrate in \S \ref{sec:pcode} how mean field ODEs (for a given stochastic continuous-time state transition model) include terms that reflect underlying Poisson process rates using a simple generalization of the exponential decay equation $\frac{d}{dt}x(t)=-r\,x(t)$ where each particle is assumed to exit state X after an exponentially  distributed amount of time (i.e., after the 1$^\text{st}$  even under a Poisson process with constant rate $r$). We extend this model by (1) incorporating an influx rate ($\mathcal{I}(t)$) into state X, and (2) allowing a time varying rate $r(t)$ for the underlying Poisson process. In \S \ref{sec:weakmem}, we highlight a key property of these Poisson process 1$^\text{st}$  event time distributions that we refer to as a \textit{weak memorylessness property} since it is a generalization of the well known memorylessness property of the exponential and geometric distributions. 

\subsubsection{Per capita transition rates in ODEs reflect underlying Poisson process rates} \label{sec:pcode}
To build upon the intuition spelled out above in \S \ref{sec:intuition}, consider the basic exponential decay equation as a mean field model for a stochastic model where particles are assumed to leave state X following an exponentially distributed dwell time. Now assume instead that particles exit X following the 1$^{st}$ event time under Nonhomogeneous Poisson processes with rate $r(t)$ (recall the 1$^\text{st}$  event time distribution is exponential if $r(t)=r$ is constant), and that there is an additional influx rate $\mathcal{I}(t)$ into state X. As illustrated by the corresponding mean field equations given below, the rate function $r(t)$ can be viewed as either the intensity function\footnote{That is, the probability of a given individual exiting state X during a brief time period [$t,t+\Delta t$] is approximately $r(t)\Delta t$.} for the Poisson process governing when individuals leave state X, or as the (mean field) per-capita rate of loss from state X as shown in eq. \eqref{eq:nhpp}.

\begin{example}[Equivalence between Poisson process rates \& per capita rates in mean field ODEs]\label{ex:nhpp}
	Consider the scenario described above. The survival function for the dwell time distribution for a particle entering X at time $\tau$ is $S(t,\tau)=\exp(-\int_{\tau}^{t} r(u)\,du)$, and it follows from the Law of Large Numbers that the expected proportion of such particles remaining in X at time $t>\tau$ is given by $S(t,\tau)$.  Let $x(t)$ be the total amount in state X at time $t$, $x(0)=x_0$, and that $\mathcal{I}(t)$ and $r(t)$ are integrable, non-negative functions of $t$.  Then the corresponding mean field integral equation for this scenario is
	
	\begin{equation} 
	x(t) =\; x_0\,S(t,0) + \int^t_{0} \mathcal{I}(\tau)\, S(t,\tau) d\tau \label{eq:Xnhpp}
	\end{equation} and equation \eqref{eq:Xnhpp} above is equivalent to \begin{equation} \frac{d}{dt}{x}(t) =\; \mathcal{I}(t) - r(t)\, x(t), \; \text{ with } x(0)=x_0. \label{eq:nhpp} \end{equation}
	
\end{example} 

\begin{proof} Using the Leibniz rule for integrals to differentiate \eqref{eq:Xnhpp}, and using Lemma \ref{lem:smith711}, yields 
	
	\begin{equation}
	\begin{split}
	\frac{d}{dt}{x}(t) =&\; x_0\frac{d}{dt}S(t,\tau) + \frac{d}{dt}\int^t_{0} \mathcal{I}(\tau)\,S(t,\tau)d\tau\\
	%=&\; -r(t)\,x_0\,e^{-\int_{0}^{t}r(u)\,du}  +\mathcal{I}(t) + \int^t_{0}\mathcal{I}(\tau) \frac{d}{dt} e^{-\int_{\tau}^{t}r(u)\,du} d\tau \\
	=&\; -r(t)\,x_0\,e^{-\int_{0}^{t}r(u)\,du} +\mathcal{I}(t)  -r(t) \int^t_{0}\mathcal{I}(\tau) e^{-\int_{\tau}^{t}r(u)\,du} d\tau \\
	=&\; \mathcal{I}(t) - r(t)\bigg[x_0\,e^{-\int_{0}^{t}r(u)\,du}  + \int^t_{0} \mathcal{I}(\tau)e^{-\int_{\tau}^{t}r(u)\,du} d\tau \bigg] \\
	=&\; \mathcal{I}(t)   - r(t) x(t).
	\end{split}
	\end{equation}  
	%\qed 
\end{proof}

The intuition behind the LCT relies in part on the memorylessness property of the exponential distribution. For example, when particles accumulate in a state with an exponentially distributed dwell time distribution, then at any given time all particles currently in that state have \textit{iid} exponentially distributed amounts of time left before they leave that state regardless of the duration of time already spent in that state, thus the memorylessness property of the exponential distribution imparts a Markov property (i.e., the remaining time duration depends only on the current state, not the history of time spent in that state) which permits a mean field ODE. As detailed in the next section, there is an analogous Markov property imparted by the more general weak memorylessness property of (nonhomogeneous) Poisson process 1$^\text{st}$ event time distributions, which we use to extend the LCT.

\subsubsection{Weak memoryless property of Poisson process 1$^\text{st}$ event time distributions} \label{sec:weakmem}
The familiar memorylessness property of exponential and geometric distributions can, in a sense, be generalized to (nonhomogeneous) Poisson process $1^\text{st}$ event time distributions. Recall that if an exponentially distributed (rate $r$) random variable $T$ represents the time until some event, then if the event has not occurred by time $s$ the remaining duration of time until the event occurs is also exponential with rate $r$. The analogous \textit{weak memorylessness} property of nonhomogeneous Poisson process $1^\text{st}$  event time distributions is detailed in the following definition.

\begin{definition}[Weak memorylessness property of Poisson process 1$^\text{st}$ event times]\label{Th:wm} Assume $T$ is a (possibly nonhomogeneous) Poisson process $1^\text{st}$  event time starting at time $\tau$, which has CDF $H_r^1(t,\tau)=1-\exp(-m(t,\tau))$ (see eqs. \eqref{eq:ktheventdist} and \eqref{eq:m}). If the event has not occurred by time $s>\tau$ the distribution of the remaining time $T_s \equiv T-s\;|\;T>s$ follows a shifted but otherwise identical Poisson process 1$^\text{st}$ event time distribution with CDF $P(T_s \leq t)=H_r^1(t+s,s)$. 
\end{definition} 

\clearpage
\begin{proof} The CDF of $T_s$ (for $t>\tau$) is given by
	
	\begin{equation} \begin{split}
	P(T_s \leq t) =&\; P(T - s \leq t \;|\; T>s) =\; \frac{P(s < T \leq s+ t)}{P(s<T)} \\
	=&\; \frac{H_r^1(t+s,\tau) - H_r^1(s,\tau)}{1-H_r^1(s,\tau)} =\; 1-\frac{1-H_r^1(t+s,\tau)}{1-H_r^1(s,\tau)}\\
	=&\; 1-\frac{e^{-m(t+s,\tau)}}{e^{-m(s,\tau)}} %\\
	=\; 1-e^{-m(t+s,s)} =\; H_r^1(t+s,s). \\
	\end{split} \end{equation} If $r(t)=r$ is a positive constant we recover the memorylessness property of the exponential distribution.
	%%\qed 
\end{proof} 

That is, Poisson process 1$^\text{st}$ event time distributions are memoryless up to a time shift in their rate functions. Viewed another way, in the context of multiple particles entering a given state X at different times and leaving according to independent Poisson process 1$^\text{st}$ event times with identical rates $r(t)$ (i.e., $t$ is absolute time, not time since entry into X), then for all particles in state X at a given time the distribution of time remaining in state X is (1) independent of how much time each particle has already spent in X and (2) follows \textit{iid} Poisson process 1$^\text{st}$ event time distributions with rate $r(t)$.

\subsection{Simple case of the LCT}\label{sec:simpleX}

\begin{figure}[htb]
	\hspace{1.6cm} (a) \hspace*{4.2cm} (b)\\
	\centerline{\includegraphics[width=0.28\textwidth]{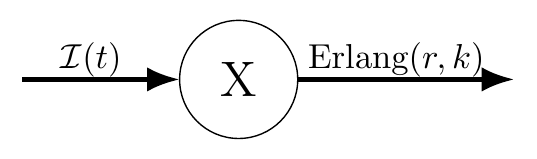} \vline \includegraphics[width=0.484\textwidth]{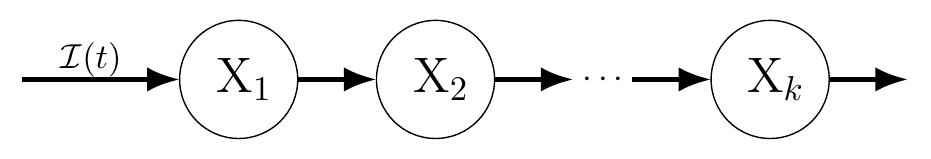}}
	\caption{Example diagram where state X has an Erlang($r,k$) distributed dwell time, represented either as (a) a single state and corresponding integral equation, or (b) as a set of $k$ sub-states each with exponential dwell time distributions whose mean field equations can be represented as either integral equations or a system of ODEs (see Theorem \ref{Th:simple}). Rate $\mathcal{I}(t)$ is an integrable non-negative function describing the mean field influx rate into state X.}
	\label{fig:simpleX}
\end{figure}

To illustrate how the LCT follows from Lemma \ref{lem:smith711}, consider the following simple case of the LCT as illustrated in Figure \ref{fig:simpleX}, where a higher dimensional model includes a state transition into, then out of, a focal state X. Assume the time spent in that state ($T_X$) follows an Erlang($r,k$) distribution (i.e., $T_X \sim $ Erlang($r,k$)). Then the LCT provides a system of ODEs equivalent to the mean field integral equations for this process as discussed in \S \ref{sec:intuition} and as detailed in the following theorem: \\

\begin{theorem}[Simple LCT]\label{Th:simple}
	Consider a continuous time state transition model with inflow rate $\mathcal{I}(t)$ (an integrable non-negative function of $t$) into state X which has an Erlang($r,k$) distributed dwell time (with survival function $S_r^k$ from eq. \eqref{eq:sgamma}). Let $x(t)$ be the amount in state X at time $t$ and assume $x(0)=x_0$.
	
	The mean field integral equation for this scenario is (see Fig. \ref{fig:simpleX}a)
	
	\begin{equation} 
	x(t) =\; x_0S_r^k(t) + \int^t_{0} \mathcal{I}(s)\,S_r^k(t-s) ds. \label{eq:X1}\\
	\end{equation} State X can be partitioned into $k$ sub-states X$_i$, $i=1,\ldots,k$, where particles in X$_i$ are those awaiting the $i^\text{th}$ event as the next event under a homogeneous Poisson process with rate $r$. Let $x_i(t)$ be the amount in X$_i$ at time $t$. Equation \eqref{eq:X1} above is equivalent to the mean field ODEs (see Fig. \ref{fig:simpleX}b) 
	
	\begin{subequations}\label{eq:chain1}\begin{align} 
		\frac{d}{dt}{x_1}(t) =&\; \mathcal{I}(t) -r\, x_1(t) \label{eq:chain1a} \\
		\frac{d}{dt}{x_j}(t) =&\; r\, x_{j-1}(t) - r\, x_j(t), \quad j=2,\ldots,k  \label{eq:chain1b} 
		\end{align}\end{subequations} with initial conditions $x_1(0)=x_0$, $x_j(0)=0$ for $j\geq2$. Here $x(t) = \sum_{j=1}^{k} x_j(t)$ and 
	
	\begin{equation}\label{eq:Xj1}
	x_j(t)= x_0\,\frac1r\,g_r^j(t) + \int^t_{0} \mathcal{I}(s) \frac1r\,g_r^j(t-s)ds. % substituted in terms of g(t-s)
	\end{equation}
\end{theorem}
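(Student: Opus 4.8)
The plan is to treat the claimed equivalence between the scalar integral equation \eqref{eq:X1} and the ODE system \eqref{eq:chain1} as being mediated entirely by the explicit sub-state formulas \eqref{eq:Xj1}. Concretely, I would (i) verify that the functions $x_j$ defined by \eqref{eq:Xj1} satisfy the ODE system \eqref{eq:chain1} together with the initial conditions $x_1(0)=x_0$ and $x_j(0)=0$ for $j\ge 2$; (ii) sum these $k$ identities and invoke the survival-function identity \eqref{eq:sgamma} to recover \eqref{eq:X1} for $x=\sum_{j=1}^k x_j$; and (iii) note that since \eqref{eq:chain1} is a linear initial value problem with integrable coefficients its solution is unique, so the ODE system and the family \eqref{eq:Xj1} (hence \eqref{eq:X1}) carry exactly the same information. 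This reduces everything to a differentiation computation plus a telescoping sum.

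For step (i), I would differentiate \eqref{eq:Xj1} using the Leibniz rule exactly as in the proof of Example~\ref{ex:nhpp}. The derivative splits into three pieces: the differentiated initial term $x_0\frac1r\frac{d}{dt}g_r^j(t)$, the Leibniz boundary term $\mathcal{I}(t)\frac1r g_r^j(0)$, and the differentiated convolution $\int_0^t \mathcal{I}(s)\frac1r\frac{d}{dt}g_r^j(t-s)\,ds$. The crucial inputs are the two initial values supplied by Lemma~\ref{lem:smith711}, namely $g_r^1(0)=r$ and $g_r^j(0)=0$ for $j\ge 2$: these make the boundary term reproduce the influx $\mathcal{I}(t)$ in the $j=1$ equation and vanish in all higher equations, which is precisely the asymmetry between \eqref{eq:chain1a} and \eqref{eq:chain1b}. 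Substituting the derivative identities $\frac{d}{dt}g_r^1=-r g_r^1$ and $\frac{d}{dt}g_r^j=r(g_r^{j-1}-g_r^j)$ from Lemma~\ref{lem:smith711} and regrouping, the $j=1$ line collapses to $\mathcal{I}(t)-r\,x_1(t)$ and each $j\ge 2$ line collapses to $r\,x_{j-1}(t)-r\,x_j(t)$, once one recognizes that the $g_r^{j-1}$ contributions reassemble via \eqref{eq:Xj1} at index $j-1$ into exactly $r\,x_{j-1}(t)$. The initial conditions follow immediately by evaluating \eqref{eq:Xj1} at $t=0$ using the same two boundary values.

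For step (ii), I would sum \eqref{eq:Xj1} over $j=1,\dots,k$, exchange the finite sum with the integral, and apply \eqref{eq:sgamma} in the form $S_r^k=\sum_{j=1}^k \frac1r g_r^j$ to both the initial-condition term and the convolution kernel, giving $x(t)=x_0 S_r^k(t)+\int_0^t \mathcal{I}(s)\,S_r^k(t-s)\,ds$, which is \eqref{eq:X1}. The integral equation \eqref{eq:X1} itself I would regard as the \emph{definitional} mean-field description of the scenario, justified by the same Law-of-Large-Numbers and survival-function reasoning used in Example~\ref{ex:nhpp}: the surviving fraction of the initial cohort is $S_r^k(t)$, the surviving fraction of the cohort entering at time $s$ is $S_r^k(t-s)$, and these superpose linearly.

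I expect the only real obstacle to be analytic bookkeeping rather than conceptual difficulty. The delicate point is differentiating the convolution $\int_0^t \mathcal{I}(s)\,g_r^j(t-s)\,ds$ when $\mathcal{I}$ is merely integrable and non-negative: the Leibniz boundary term and the differentiation under the integral sign must be justified (for instance almost everywhere, or by first smoothing $\mathcal{I}$ and passing to the limit), exactly as is done tacitly in the proof of Example~\ref{ex:nhpp}. The second thing to watch is the telescoping structure, i.e.\ correctly matching the $g_r^{j-1}$ term produced by $\frac{d}{dt}g_r^j$ with the independently written expression \eqref{eq:Xj1} for $x_{j-1}$, so that the chain of equations closes with the influx entering only at the first stage. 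Once those two points are handled, the equivalence is immediate.
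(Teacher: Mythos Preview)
Your proposal is correct and follows essentially the same route as the paper: define $x_j$ by \eqref{eq:Xj1}, differentiate using the Leibniz rule together with Lemma~\ref{lem:smith711} to obtain \eqref{eq:chain1}, and use the survival-function identity \eqref{eq:sgamma} to show $\sum_j x_j$ satisfies \eqref{eq:X1}. The paper performs these two computations in the opposite order and omits your explicit uniqueness remark (iii), but the substance is the same.
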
 % END THEOREM 1

\begin{proof}
	
	Substituting eq. \eqref{eq:sgamma} into eq. \eqref{eq:X1} and then substituting eq. \eqref{eq:Xj1} yields
	
	\begin{equation}\label{eq:X1reduce}\begin{split}
	x(t) =&\;  x_0\,S_r^k(t) + \int^t_{0} \mathcal{I}(s)\,S_r^k(t-s) \,ds \\
	=&\;  x_0\,\sum_{j=1}^{k} \frac1r\,g_r^j(t) + \int^t_{0}\mathcal{I}(s)\,\sum_{j=1}^{k} \frac1r\,g_r^j(t-s) \,ds \\
	=&  \sum_{j=1}^{k} \left(x_0\, \frac1r\,g_r^j(t) + \int^t_{0} \mathcal{I}(s)\;\frac1r\,g_r^j(t-s) \,ds \right) 
	= \sum_{j=1}^{k} x_j(t).
	\end{split}
	\end{equation} Differentiating equations \eqref{eq:Xj1} (for $j=1,\ldots,k$) yields equations \eqref{eq:chain1} as follows.
	
	For $j=1$, equation \eqref{eq:Xj1} reduces to 
	
	\begin{equation}\label{eq:x1integral1}
	x_1(t) = x_0 e^{-r\,t} + \int^t_{0} \mathcal{I}(s) e^{-r(t-s)}ds.	
	\end{equation} 
	
	Differentiating $x_1(t)$ using the Leibniz integral rule, and then substituting \eqref{eq:x1integral1} yields 
	
	\begin{equation}
	\frac{d}{dt}{x_1}(t) =\; - r x_0 e^{-r\,t} - r\int^t_{0} \mathcal{I}(s) e^{-r(t-s)}ds + \mathcal{I}(t)  \;=\; \mathcal{I}(t) - r x_1(t).
	\end{equation}
	
	Similarly, for $j\geq2$, Lemma \ref{lem:smith711} yields   
	
	\begin{equation} \begin{split}
	\frac{d}{dt}{x_j}(t)=&\; x_0\,\frac1r\,\frac{d}{dt}g_r^j(t) + \int^t_{0} \mathcal{I}(s) \frac{d}{dt} \left(\frac1r\,g_r^j(t-s)\right)\,ds \\
	=&\; x_0\,\left(g_r^{j-1}(t)-g_r^j(t)\right) + \int^t_{0} \mathcal{I}(s) \left(g_r^{j-1}(t-s)-g_r^j(t-s)\right)\,ds \\
	=&\; r\,\bigg( \frac{x_0}r\,g_r^{j-1}(t) + \int^t_{0} \mathcal{I}(s) \frac1r g_r^{j-1}(t-s)\,ds \bigg) -\; r\,\bigg ( \frac{x_0}r\,g_r^{j}(t)\\
	& \;  + \int^t_{0} \mathcal{I}(s) \frac1r g_r^j(t-s)\,ds\bigg)
	=\; r\,x_{j-1}(t) - r\,x_j(t).
	\end{split}
	\end{equation}
	%\qed 
\end{proof}

Note the dwell time distributions for sub-states X$_j$ with $j\geq 1$ are exponential with rate $r$ (i.e., $T_{X_j}\sim$ Erlang($r,1$)). To see why, consider each particle in state $X$ to be following independent homogeneous Poisson processes (rate $r$), waiting for the $k^\text{th}$ event to occur. Then let $\chi_i$(t) (where $1\leq i \leq k$) be the expected number of particles in state X (at time $t$) that have not reached the $i^\text{th}$ event. Then 

\begin{equation}\label{eq:chii}
\chi_i(t) = x_0\,S_r^i(t) + \int_0^t \mathcal{I}(s)\,x_0\,S_r^i(t-s)\,ds
\end{equation} and by eq. \eqref{eq:gamma} we see from eqs. \eqref{eq:x1integral1} and eq. \eqref{eq:chii} that $x_j(t) = \chi_j(t) - \chi_{j-1}(t)$. That is, particles in state X$_j$ are those for which the $(j-1)^\text{th}$ event has occurred, but not the $j^\text{th}$ event. Thus, by properties of Poisson processes the dwell time in state X$_j$ is exponential with rate $r$.\\

Next, we consider a more general statement of Theorem \ref{Th:simple} that better formalizes the standard LCT as used in practice.

\subsection{Standard LCT}\label{sec:base}

The following Theorem and Corollary together provide a formal statement of the standard Linear Chain Trick (LCT). Here we have extended the basic case in the previous section (see Theorem \ref{Th:simple} and compare Figures \ref{fig:simpleX} and \ref{fig:generalXY}) to explicitly include that particles leaving X enter state Y and remain in Y according to an arbitrary distribution with survival function $S$, where $S(t,\tau)$ is the expected proportion remaining at time $t$ that entered at time $\tau<t$. We also assume non-negative, integrable input rates $\mathcal{I}_X(t)$ and $\mathcal{I}_Y(t)$ to X and Y, respectively, to account for movement into these two focal states from other states in the system.

\begin{theorem}[Standard LCT]\label{Th:base}
	Consider a continuous time dynamical system model of mass transitioning among various states, with inflow rate $\mathcal{I}_X(t)$ to a state X and an Erlang($r,k$) distributed delay before entering state Y. Let $x(t)$ and $y(t)$ be the amount in each state, respectively, at time $t$. Further assume an inflow rate $\mathcal{I}_Y(t)$ into state Y from other non-X states, and that the underlying stochastic model assumes that the duration of time spent in state Y is determined by survival function $S(t,\tau)$. Assume $\mathcal{I}_i(t)$  are integrable non-negative functions of $t$, and assume non-negative initial conditions $x(0)=x_0$ and $y(0)=y_0$. 
	
	The mean field integral equations for this scenario are 
	
	\begin{subequations}\label{eq:XY2} \begin{align} 
		x(t) =&\; x_0\,S_r^k(t) + \int^t_{0} \mathcal{I}_X(s)\,S_r^k(t-s) ds \label{eq:X2}\\
		\begin{split} \label{eq:Y2} y(t) =&\; y_0S(t,0) + \int^t_{0} \bigg(\mathcal{I}_Y(\tau) + x_0\,g_r^k(\tau) \\ & \qquad \qquad \qquad +\int^\tau_0 \mathcal{I}_X(s)\,g^k_r(\tau-s)ds\bigg)S(t,\tau)d\tau. \end{split} 
		\end{align} \end{subequations}
	
	Equations \eqref{eq:XY2} are equivalent to 
	
	\begin{subequations}\label{eq:chain2}\begin{align} 
		\frac{d}{dt}{x_1}(t) =&\; \mathcal{I}_X(t) -r x_1(t) \label{eq:chain2a} \\
		\frac{d}{dt}{x_j}(t) =&\; r x_{j-1}(t) - r x_j(t), \quad j=2,\ldots,k  \label{eq:chain2b} \\
		%\frac{d}{dt}{y}(t) =&\; \mathcal{I}_Y(t) + r x_{k} - \mu(t) y(t)
		y(t) =&\; y_0S(t,0) + \int^t_{0} \underbrace{\left(\mathcal{I}_Y(\tau) + r\,x_k(\tau)\right)}_\text{Net input rate at time $\tau$}S(t,\tau)d\tau \label{eq:chain2c} 
		\end{align}\end{subequations}
	
	where $x(t) = \sum_{j=1}^{k} x_j(t)$ with initial conditions $x_1(0)=x_0$, $x_j(0)=0$ for $j\geq2$ and 
	
	\begin{equation}\label{eq:Xj2}
	x_j(t)= x_0\,\frac1r\,g_r^j(t) + \int^t_{0} \mathcal{I}_X(s) \frac1r\,g_r^j(t-s)ds. % substituted in terms of g(t-s)
	\end{equation}
	
\end{theorem}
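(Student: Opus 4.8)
The plan is to reduce the theorem to two essentially independent pieces --- the dynamics within the Erlang-delayed state X and the dynamics in the downstream state Y --- and to observe that the X-piece is already furnished by Theorem \ref{Th:simple}, while the Y-piece collapses to a single algebraic identity. So the strategy is mostly one of bookkeeping plus one key substitution, rather than any fresh analytic argument.

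First I would dispense with the X-equations. The inflow into X is $\mathcal{I}_X(t)$ and the dwell time is Erlang($r,k$), so equation \eqref{eq:X2} together with the sub-state representation \eqref{eq:Xj2} are precisely the content of Theorem \ref{Th:simple} with the generic inflow $\mathcal{I}$ replaced by $\mathcal{I}_X$. Hence \eqref{eq:X2} is equivalent to the linear chain \eqref{eq:chain2a}--\eqref{eq:chain2b}, with $x(t)=\sum_{j=1}^{k}x_j(t)$ and each $x_j$ given in closed form by \eqref{eq:Xj2}, and no further differentiation is required.

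Second, I would turn to the Y-equation. Equations \eqref{eq:Y2} and \eqref{eq:chain2c} share the same outer structure --- both write $y(t)$ as $y_0\,S(t,0)$ plus the convolution of a net input rate against the survival function $S(t,\tau)$ --- so the only thing to verify is that the two bracketed net input rates coincide. That is, I must show
\[
r\,x_k(\tau) = x_0\,g_r^k(\tau) + \int_0^\tau \mathcal{I}_X(s)\,g_r^k(\tau-s)\,ds.
\]
This follows immediately by setting $j=k$ in the closed form \eqref{eq:Xj2} and multiplying through by $r$, which reproduces the right-hand side exactly. Interpreting this, the mean-field outflow from X into Y is the rate at which particles leave the final sub-state X$_k$, namely $r\,x_k(\tau)$, and this equals the Erlang density $g_r^k$ convolved with the entire input history of X --- consistent with the fact that the dwell time in X is Erlang($r,k$) with density $g_r^k$.

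The main obstacle, such as it is, is conceptual rather than technical: one must correctly identify that the outflow feeding Y is $r\,x_k(\tau)$ (the loss rate from the last sub-state) rather than some sum over all sub-states, and then confirm that this matches the delayed-input term $x_0\,g_r^k(\tau)+\int_0^\tau \mathcal{I}_X(s)\,g_r^k(\tau-s)\,ds$ appearing in \eqref{eq:Y2}. Once \eqref{eq:Xj2} is in hand this reduces to a one-line substitution, so no genuine analytic difficulty arises beyond the Leibniz-rule computation already carried out in the proof of Theorem \ref{Th:simple}.
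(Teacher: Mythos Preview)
Your proposal is correct and follows essentially the same route as the paper: invoke Theorem~\ref{Th:simple} for the X-chain (yielding \eqref{eq:chain2a}--\eqref{eq:chain2b} and \eqref{eq:Xj2}), then substitute the $j=k$ case of \eqref{eq:Xj2} into \eqref{eq:Y2} to obtain \eqref{eq:chain2c}. The paper's proof is in fact even terser than yours, so nothing is missing; the only small addition you might make explicit is the one-line check of the initial conditions $x_1(0)=x_0$, $x_j(0)=0$ for $j\geq2$, which follows from $g_r^1(0)=r$ and $g_r^j(0)=0$.
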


\begin{figure}[htb] 
	\hspace*{4em} (a) \\
	\hspace*{6em} \includegraphics[height=7em]{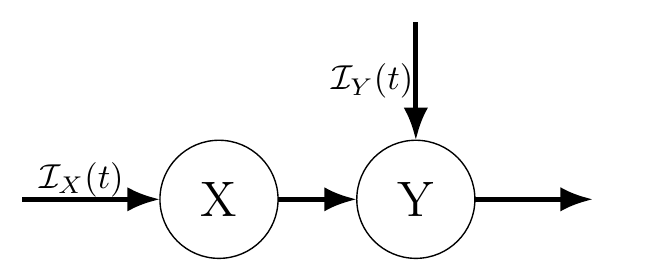}\\
	\hspace*{4em} (b) \\
	\hspace*{6em} \includegraphics[height=7em]{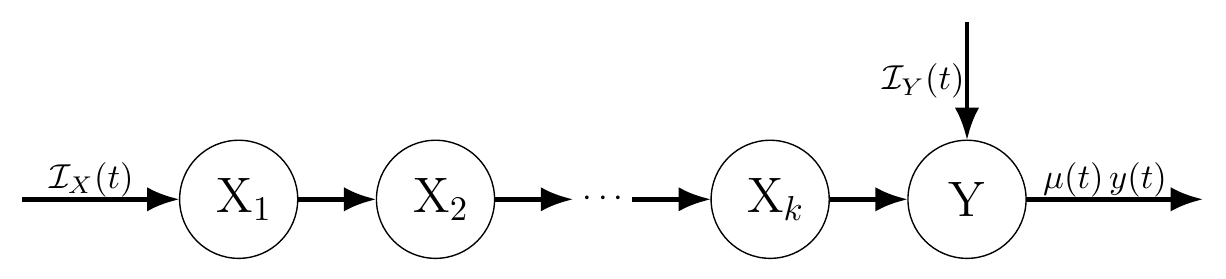}\\
	\hspace*{4em} (c) \\
	\hspace*{6em} \includegraphics[height=7em]{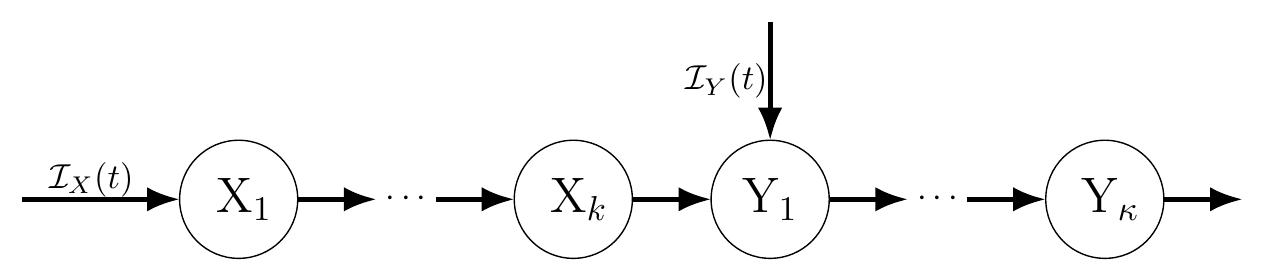}\\
	\caption{\textbf{(Standard LCT results)} This generic case assumes that the dwell times in state X (see panel \textbf{a}) are Erlang($r,k$) distributed with inflow rates $\mathcal{I}_X(t)\geq0$ into state X and $\mathcal{I}_Y(t)\geq0$ into state Y. Panels \textbf{b} and \textbf{c} show sub-states resulting from applying the LCT and Corollary \ref{Th:baseCorr} assuming either (\textbf{b}) dwell times in state Y are determined by per-capita rate function $\mu(t)$, or (\textbf{c}) dwell times in Y follow and Erlang distribution with shape parameter $\kappa$. }
	\label{fig:generalXY} 
\end{figure}

\begin{proof}
	
	Equations \eqref{eq:chain2a}, \eqref{eq:chain2b} and \eqref{eq:Xj2} follow from Theorem \ref{Th:simple}. Equation \eqref{eq:chain2c} follows from substituting \eqref{eq:Xj2} into \eqref{eq:Y2}. The definition of $x_j$ and initial condition $x(0)=x_0$ together imply $x_1(0)=x_0$ and $x_j(0)=0$ for the remaining $j\geq2$.
	%\qed 
\end{proof}

\clearpage
\begin{corollary}\label{Th:baseCorr} Integral equations like eq. \eqref{eq:chain2c} can be represented by equivalent systems of ODEs depending on the assumed Y dwell time distribution (i.e., $S(t,\tau)$), for example: 
	\begin{enumerate}[resume]
		\item If particles leave state Y following the 1$^\text{st}$  event time distribution under a nonhomogeneous Poisson process with rate $\mu(t)$ (i.e., if the per-capita rate of loss from Y is $\mu(t)$), then by Theorem \ref{Th:base}, with $\mathcal{I}(t)=\mathcal{I}_Y(t)+rx_k(t)$, it follows that $S(t,\tau)=\exp(-\int_{\tau}^{t}\mu(u)\,du)$ and 
		
		\begin{equation}\label{eq:Y2nonhomo}
		\frac{d}{dt}{y}(t) =\; \mathcal{I}_Y(t) + r x_{k}(t) - \mu(t) y(t).\end{equation}
		
		\item If particles leave Y after an Erlang($\mu,\kappa$) delay, then $S(t,\tau)=S_\mu^\kappa(t-\tau)$ and according to Theorem \ref{Th:base}, with $\mathcal{I}(t)=\mathcal{I}_Y(t)+rx_k(t)$, it follows that $y=\sum_{i=1}^\kappa y_i$ and 
		
		\begin{subequations}\label{eq:Y2chain}\begin{align}
			\frac{d}{dt}{y_1}(t) =&\; \mathcal{I}_Y(t) + r x_{k}(t) -\mu\, y_1(t)  \\
			\frac{d}{dt}{y_i}(t) =&\; \mu\, y_{i-1}(t) - \mu\, y_i(t), \quad i=2,\ldots,\kappa.
			\end{align}\end{subequations}
		
		\item As implied by parts 1 and 2 above, if the per-capita loss rate $\mu(t)=\mu$ is constant or time spent in Y is otherwise exponentially distributed, $S(t,\tau)=e^{-\mu\,(t-\tau)}$, then 
		
		\begin{equation}\label{eq:Y2exp}
		\frac{d}{dt}{y}(t) =\; \mathcal{I}_Y(t) + r x_{k}(t) - \mu\, y(t).\end{equation}
		
		\item Any of the more general cases considered in the sections below.
		
	\end{enumerate}
\end{corollary}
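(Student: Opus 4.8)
The plan is to treat each of parts 1--3 as a direct application of a previously established integral-to-ODE equivalence, the key observation being that the net input rate appearing in eq. \eqref{eq:chain2c}, namely $\mathcal{I}(t) = \mathcal{I}_Y(t) + r\,x_k(t)$, plays exactly the role of the generic influx function in those earlier results. Since $\mathcal{I}_Y$ is an integrable non-negative function by hypothesis and $x_k(t)$ is a component of the solution to the linear chain \eqref{eq:chain2a}--\eqref{eq:chain2b} (which stays continuous and non-negative given non-negative influx and initial data), this combined rate is itself integrable and non-negative, so the hypotheses of the results being invoked are satisfied. Part 4 is a forward reference and requires no argument here.

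For part 1, I would substitute $S(t,\tau)=\exp\!\big(-\int_\tau^t \mu(u)\,du\big)$ into eq. \eqref{eq:chain2c} and observe that the resulting integral equation is precisely eq. \eqref{eq:Xnhpp} of Example \ref{ex:nhpp} under the identifications $x\mapsto y$, $x_0\mapsto y_0$, $r(t)\mapsto\mu(t)$, and $\mathcal{I}(t)\mapsto \mathcal{I}_Y(t)+r\,x_k(t)$. Invoking the equivalence established there immediately yields eq. \eqref{eq:Y2nonhomo}.

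For part 2, I would instead set $S(t,\tau)=S_\mu^\kappa(t-\tau)$, so that eq. \eqref{eq:chain2c} matches the integral equation \eqref{eq:X1} of Theorem \ref{Th:simple} under the identifications $x\mapsto y$, $x_j\mapsto y_i$, $r\mapsto\mu$, $k\mapsto\kappa$, and the same influx substitution as above. Theorem \ref{Th:simple} then delivers the chain ODE system \eqref{eq:Y2chain} together with $y=\sum_{i=1}^\kappa y_i$. The one point requiring care is the initial-condition bookkeeping: the term $y_0\,S(t,0)$ in eq. \eqref{eq:chain2c} must correspond to placing all of the initial mass in the first sub-stage, i.e.\ $y_1(0)=y_0$ and $y_i(0)=0$ for $i\geq 2$, which is exactly the convention built into Theorem \ref{Th:simple} (where $x_0\,S_r^k(t)$ arises from $x_1(0)=x_0$, $x_j(0)=0$). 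Part 3 then follows as the special case $\kappa=1$ of part 2, or equivalently as $\mu(t)\equiv\mu$ constant in part 1.

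Because every part reduces to a substitution into an already-proved equivalence, I do not anticipate a substantive obstacle. The only thing to be vigilant about is confirming that the net input rate and the initial data transcribe correctly into the hypotheses of Example \ref{ex:nhpp} and Theorem \ref{Th:simple} --- in particular, verifying the placement of the entire initial amount $y_0$ into the first Erlang sub-stage in part 2, and checking that the time-shifted survival function $S(t,\tau)$ in the nonhomogeneous case of part 1 matches the $1^\text{st}$ event-time form assumed in Example \ref{ex:nhpp}.
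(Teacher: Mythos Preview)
Your proposal is correct and follows essentially the same approach as the paper, which offers no separate proof for this corollary beyond the in-line references embedded in its statement. In fact, your citation of Example~\ref{ex:nhpp} for part~1 is arguably more precise than the paper's own reference to Theorem~\ref{Th:base} there, since the nonhomogeneous Poisson case is established in that example rather than in the Standard LCT theorem; your attention to the initial-condition bookkeeping in part~2 is also a welcome clarification the paper leaves implicit.
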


\begin{example} \label{ex:simple} To illustrate how the Standard LCT (Theorem \ref{Th:base} and Corollary \ref{Th:baseCorr}) is used to construct a system of mean field ODEs (with or) without the intermediate steps involving mean field integral equations, consider a large number of particles that begin (at time $t=0$) in state W and then each transitions to state X after an exponentially distributed amount of time (with rate $a$). Particles remain in state X according to a Erlang($r,k$) distributed delay before entering state Y. They then go to state Z after an exponentially distributed time delay with rate $\mu$. The mean field model of such a system can be stated as follows (see Appendix \ref{a:example1} for a derivation of eqs. \eqref{eq:example1}), 
	
	\begin{subequations}\label{eq:example1}\begin{align}
		\frac{d}{dt}w(t)\;=&\; -a\,w(t)  \label{eq:example1w} \\
		x(t)=&\int_0^t \overbrace{a\,w(s)}^{\mathcal{I}(s)}\,\overbrace{S_r^k(t-s)}^{\text{proportion remaining}}\,ds \label{eq:example1x}\\
		y(t)=&\int_0^t\underbrace{\left(\int_0^\tau a\,w(s)\,g_r^k(\tau-s)\,ds\right)}_\text{Net input rate at time $\tau$} S_\mu^1(t-\tau)\,d\tau  \label{eq:example1y}\\
		\frac{d}{dt}z(t)=&\;\mu\,y(t) \label{eq:example1z}
		\end{align}
	\end{subequations} where the state variables $w$, $x$, $y$, and $z$ correspond to the amount in each of the corresponding states, and we assume the initial conditions $w(0)=w_0>0$ and $x(0)=y(0)=z(0)=0$.
	
	Applying Theorem \ref{Th:base} to eqs. \eqref{eq:example1}, or using the results of Theorem \ref{Th:base} directly, given the assumptions spelled out above, yields the equivalent system of mean field ODEs.
	
	\begin{subequations}\label{eq:example1ode}\begin{align}
		\frac{d}{dt}w(t)\;=&\; -a\,w(t) \label{eq:example1odew} \\
		\frac{d}{dt}{x_1}(t) =&\; a\,w(t) - r\,x_1(t)  \label{eq:example1odex1}  \\
		\frac{d}{dt}{x_j}(t) =&\; r\,x_{j-1}(t) - r\,x_j(t), \qquad \text{ for } j=2,\ldots,k \label{eq:example1odexj}\\
		\frac{d}{dt}{y}(t)   =&\; r\,x_{k}(t) - \mu\,y(t)  \label{eq:example1odey} \\
		\frac{d}{dt}z(t)=&\;\mu\,y(t)  \label{eq:example1odez}
		\end{align}
	\end{subequations}
	where $x(t)=\sum_{j=1}^kx_j(t)$. 
	
\end{example}

\begin{example}\label{ex:sirlct} To illustrate how the Standard LCT can be applied to a system of mean field ODEs to substitute an implicit exponential dwell time distribution with an Erlang distribution, consider the SIR example discussed in the Introduction \citep[eqs. \eqref{eq:SIR} and \eqref{eq:SIR2}, see also][]{Anderson1980, Lloyd2001a,Lloyd2001b}. Assume the dwell time distribution for the infected state I is Erlang (still with mean $1/\gamma$) with variance\footnote{Here the variance is assumed to have been chosen so that the resulting shape parameter is integer valued. See Appendix \ref{sec:approxgamma} for related details.} $\sigma^2$, i.e., by eqs. \eqref{eq:meanvarrk}, Erlang with a rate $\gamma k$ and shape $k=\sigma^2/\gamma^2$. 
	
	By Theorem \ref{Th:base} and Corollary \ref{Th:baseCorr},  with $\mathcal{I}_I(t)=\lambda(t)\,S(t)$, the corresponding mean field ODEs are 
	
	\begin{subequations}\label{eq:SIRlct}\begin{align}
		\frac{d}{dt}S(t) =&\; -\lambda(t)\,S(t) \label{eq:Slct}\\
		\frac{d}{dt}I_1(t) =&\; \lambda(t)\,S(t) - \gamma k\,I_1(t) \label{eq:I1lct} \\
		\frac{d}{dt}I_j(t) =&\; \gamma k\,I_{j-1}(t) - \gamma k\,I_j(t), \quad \text{ for } j=2,\ldots,k \label{eq:Iilct} \\
		\frac{d}{dt}R(t) =&\; \gamma k\,I_k(t) \label{eq:Rlct} 
		\end{align} \end{subequations} where $S(t)$, $I(t)=\sum_{j=1}^k I_j(t)$, and $R(t)$ correspond to the number of susceptible, infected, and recovered individuals at time $t$. Notice that if $\sigma^2=\gamma^2$ (i.e. if shape $k=1$), the dwell time in infected state I is exponentially distributed with rate $\gamma$, $I(t)=I_1(t)$, and eqs. \eqref{eq:SIRlct} reduce to eqs. \eqref{eq:SIR}. 
	
	This example nicely illustrates how using Theorem \ref{Th:base} to relax an exponential dwell time assumption implicit in a system of mean field ODEs is much more straightforward than constructing them after first deriving the integral equations, like eqs. \eqref{eq:SIR2}, and then differentiating them using Lemma \ref{lem:smith711}. In the sections below, we present similar theorems intended to be used for constructing mean field ODEs directly from stochastic model assumptions.
	
\end{example}

\subsection{Extended LCT for Poisson process $k^\text{th}$ event time distributed dwell times}\label{sec:extendedlct}

Assuming an Erlang($r,k$) distributed dwell time in a given state as in the Standard LCT tacitly assumes that each particle remains in state X until the $k^\text{th}$ event under a homogeneous Poisson process with rate $r$.  Here we generalize the Standard LCT by assuming the dwell time in X follows the more general $k^\text{th}$ event time distribution under a Poisson process with rate $r(t)$.

First, observe the following Lemma, which is based on recognizing that eqs. \eqref{eq:smith711} in Lemma \ref{lem:smith711} are more practical when written in terms of $\frac1r g^j_r(t)$ (see the proof of Theorem \ref{Th:simple}), i.e, for $j=2,\ldots,k$ 

\begin{subequations} \label{eq:smith711divbyr} \begin{align}
	\frac{d}{dt}\bigg[\frac1r g^1_{r}(t)\bigg] =&\; -r \bigg[\frac1r g^1_{r}(t)\bigg], \\
	\frac{d}{dt}\bigg[\frac1r g^j_{r}(t)\bigg] =&\; r\bigg[\frac1r g^{j-1}_{r}(t)-\frac1r g^{j}_{r}(t)\bigg], 
	\end{align}\end{subequations} where $\frac1r g^1_{r}(0)=1$ and $\frac1r g^j_{r}(0)=0$.

\begin{lemma} \label{L:dh} A similar relationship to eqs. \eqref{eq:smith711divbyr} above (i.e., to Lemma \ref{lem:smith711}) holds true for the Poisson process $j^\text{th}$ event time distribution density functions $h_{r}^j$ given by eq. \eqref{eq:dktheventdist}. Specifically, 
	
	\begin{subequations} \label{eq:smith711NHPP} \begin{align}
		\frac{d}{dt}\bigg[\frac1{r(t)} h^1_{r}(t,\tau)\bigg] =&\; -r(t) \bigg[\frac1{r(t)} h^1_{r}(t,\tau)\bigg],    \label{eq:smith711NHPPa}\\
		\frac{d}{dt}\bigg[\frac1{r(t)} h^j_{r}(t,\tau)\bigg] =&\; r(t)\bigg[\frac1{r(t)} h^{j-1}_{r}(t,\tau)-\frac1{r(t)} h^{j}_{r}(t,\tau)\bigg],   \label{eq:smith711NHPPb}
		\end{align}\end{subequations} where $\frac1{r(\tau)}h^1_{r}(\tau,\tau)=1$ and $\frac1{r(\tau)} h^j_{r}(\tau,\tau)=0$ for $j\geq 2$. Note that, if for some $t$ $r(t)=0$, this relationship can be written in terms of \begin{equation}\label{eq:u}u_r^k(t,\tau)\ \equiv  \frac{m(t,\tau)^{k-1}}{(k-1)!}\,e^{-m(t,\tau)},\end{equation} as shown in the proof below, where $h_r^k(t,\tau)=r(t)\,u_r^k(t,\tau)$, $u^1_{r}(\tau,\tau)=1$, and $u^j_{r}(\tau,\tau)=0$ for $j\geq2$.
\end{lemma}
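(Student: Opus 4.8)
The plan is to recast the entire statement in terms of the auxiliary functions $u_r^k(t,\tau)$ defined in eq.~\eqref{eq:u}, since $\frac{1}{r(t)}h_r^k(t,\tau)=u_r^k(t,\tau)$ wherever $r(t)\neq0$, while the $u_r^k$ formulation stays well-defined even where $r(t)=0$. With this substitution it suffices to establish the two identities
\begin{align*}
\frac{d}{dt}u_r^1(t,\tau)&=-r(t)\,u_r^1(t,\tau),\\
\frac{d}{dt}u_r^j(t,\tau)&=r(t)\left[u_r^{j-1}(t,\tau)-u_r^j(t,\tau)\right]\quad\text{for }j\geq2,
\end{align*}
together with the stated values at $t=\tau$; multiplying through by $r(t)$ then recovers eqs.~\eqref{eq:smith711NHPP}.

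This reduces to a direct computation whose only essential ingredient is $\frac{d}{dt}m(t,\tau)=r(t)$ from eq.~\eqref{eq:m}, so differentiating $u_r^k$ is just the chain and product rules. For $j=1$ I would observe that $u_r^1(t,\tau)=e^{-m(t,\tau)}$ and differentiate to obtain $-r(t)\,e^{-m}$, which is the first identity. For $j\geq2$, differentiating $u_r^j=\frac{m^{j-1}}{(j-1)!}e^{-m}$ yields two terms; factoring out $r(t)$ and collapsing the factorial coefficient via $\frac{(j-1)\,m^{j-2}}{(j-1)!}=\frac{m^{j-2}}{(j-2)!}$ identifies the first term as $r(t)\,u_r^{j-1}$ and the second as $-r(t)\,u_r^j$, giving the recursion. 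The initial conditions then follow by evaluating at $t=\tau$, where $m(\tau,\tau)=\int_\tau^\tau r(s)\,ds=0$, so that $u_r^1(\tau,\tau)=e^0=1$ and $u_r^j(\tau,\tau)=\frac{0^{j-1}}{(j-1)!}=0$ for $j\geq2$.

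Since the calculation is elementary, I do not expect a substantive obstacle; the single point demanding care is precisely the motivation for the $u_r^k$ reformulation, namely instants where $r(t)=0$ and the literal quotient $\frac{1}{r(t)}h_r^k$ is undefined. By proving the identities first in the $u$ form, which holds on the full time domain, and only afterward translating back through $h_r^k(t,\tau)=r(t)\,u_r^k(t,\tau)$ and restricting to $r(\tau)\neq0$ for the initial-condition restatement, the argument remains valid throughout and cleanly mirrors the homogeneous case of Lemma~\ref{lem:smith711} as rewritten in eqs.~\eqref{eq:smith711divbyr}.
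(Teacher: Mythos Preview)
Your proposal is correct and follows essentially the same approach as the paper: both recast the statement in terms of $u_r^k(t,\tau)$ and then differentiate directly using $\frac{d}{dt}m(t,\tau)=r(t)$, handling $j=1$ and $j\geq2$ separately. You even go slightly beyond the paper's written proof by explicitly verifying the initial conditions at $t=\tau$ via $m(\tau,\tau)=0$, which the paper states but does not spell out.
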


\begin{proof} For $j=1$, \begin{equation}\begin{split}
	%	\frac{d}{dt}\bigg[\frac1{r(t)} h^1_{r}(t,\tau)\bigg]=&\; 
	\frac{d}{dt}\bigg[u^1_{r}(t,\tau)\bigg] =&\; \frac{d}{dt}e^{-m(t,\tau)} 
	=\; -r(t)\,e^{-m(t,\tau)} \\
	=&\; -r(t)\,u^1_{r}(t,\tau). % = -r(t) \bigg[\frac1{r(t)} h^1_{r}(t,\tau)\bigg].
	\end{split}\end{equation}
	Likewise, for $j\geq2$, we have 
	
	\begin{equation}\begin{split}
	%\frac{d}{dt}\bigg[\frac1{r(t)} h^j_{r}(t,\tau)\bigg]=&\; 	
	\frac{d}{dt}\bigg[u^j_{r}(t,\tau)\bigg] =&\; \frac{d}{dt} \frac{m(t,\tau)^{k-1}}{(k-1)!}\,e^{-m(t,\tau)} \\
	=&\; r(t)\, \frac{m(t,\tau)^{k-2}}{(k-2)!}\,e^{-m(t,\tau)} - r(t)\, \frac{m(t,\tau)^{k-1}}{(k-1)!}\,e^{-m(t,\tau)}  \\
	=&\; r(t)\bigg[u^{j-1}_{r}(t,\tau)-u^{j}_{r}(t,\tau)\bigg]. %\\
	%=&\; r(t)\bigg[\frac1{r(t)} h^{j-1}_{r}(t,\tau)-\frac1{r(t)} h^{j}_{r}(t,\tau)\bigg].
	\end{split}\end{equation}
	%\qed 
\end{proof}

The above lemma allows us to generalize Erlang-based results like Theorem \ref{Th:base} to their time-varying counterparts, i.e., Poisson process $k^\text{th}$ event time distributions with a time-dependent (or state-dependent) rate $r(t)$, as in the following generalization of the Standard LCT (Theorem \ref{Th:base}).

\begin{theorem}[Extended LCT for dwell times distributed as Poisson process $k^\text{th}$ event times]\label{Th:baseNHPP}
	Consider the Standard LCT in Theorem \ref{Th:base} but where the dwell time distribution is a Poisson process $k^\text{th}$ event time distribution with rate $r(t)$. Denote the survival function for the distribution of time spent in Y as $S_Y$. The corresponding mean field integral equations, written in terms of $h_r^j$ and $\mathcal{S}_r^j$ from eqs. \eqref{eq:ktheventdist}, are
	
	\begin{subequations}\label{eq:XY2NHPP} \begin{align} 
		x(t) =&\; x_0\,\mathcal{S}_r^k(t,0) + \int^t_{0} \mathcal{I}_X(s)\,\mathcal{S}_r^k(t,s) ds \label{eq:X2NHPP}\\
		\begin{split} \label{eq:Y2NHPP} y(t) =&\; y_0S_Y(t,0) + \int^t_{0} \bigg(\mathcal{I}_Y(\tau) + x_0\,h_r^k(\tau,0) \\ & \qquad \qquad \qquad +\int^\tau_0 \mathcal{I}_X(s)\,h^k_r(\tau,s)ds\bigg)S_Y(t,\tau)d\tau. \end{split}
		\end{align} \end{subequations}
	The above eqs. \eqref{eq:XY2NHPP} are equivalent to 
	
	\begin{subequations}\label{eq:chain2NHPP}\begin{align} 
		\frac{d}{dt}{x_1}(t) =&\; \mathcal{I}_X(t) -r(t)\, x_1(t) \label{eq:chain2aNHPP} \\
		\frac{d}{dt}{x_j}(t) =&\; r(t)\, x_{j-1}(t) - r(t)\, x_j(t), \quad j=2,\ldots,k  \label{eq:chain2bNHPP} \\
		y(t) =&\; y_0\,S_Y(t,0) + \int^t_{0} \left(\mathcal{I}_Y(\tau) + r(\tau)\,x_k(\tau)\right)S_Y(t,\tau)d\tau \label{eq:chain2cNHPP} 
		\end{align}\end{subequations}
	where $x(t) = \sum_{j=1}^{k} x_j(t)$ with initial conditions $x_1(0)=x_0$, $x_j(0)=0$ for $j\geq2$ and 
	
	\begin{equation}\label{eq:Xj2NHPP}
	x_j(t)= x_0\,\frac1{r(t)}\,h_r^j(t,0) + \int^t_{0} \mathcal{I}_X(s) \frac1{r(t)}\,h_r^j(t,s)ds. % substituted in terms of g(t-s)
	\end{equation}
	
	As in previous cases, the $y(t)$ equation \eqref{eq:chain2cNHPP} may be further reduced to ODEs, e.g., according to Corollary \ref{Th:baseCorr}.
	
\end{theorem}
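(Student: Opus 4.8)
The plan is to mirror the proofs of Theorems \ref{Th:simple} and \ref{Th:base}, replacing every appeal to Lemma \ref{lem:smith711} (the homogeneous recursion for $\frac1r g_r^j$) by its nonhomogeneous counterpart, Lemma \ref{L:dh} (the recursion for $\frac1{r(t)}h_r^j$). First I would take the definition \eqref{eq:Xj2NHPP} of $x_j$ as given and verify the aggregation identity $x(t)=\sum_{j=1}^k x_j(t)$: summing \eqref{eq:Xj2NHPP} over $j$ and pulling the sum inside the integral, the factor $\sum_{j=1}^k \frac1{r(t)}h_r^j(t,\tau)$ collapses to $\mathcal{S}_r^k(t,\tau)$ by eq. \eqref{eq:sktheventdist}, reproducing \eqref{eq:X2NHPP} exactly. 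This step is purely algebraic and carries over unchanged from the Erlang case.

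The substantive step is to differentiate \eqref{eq:Xj2NHPP} and recover the chain \eqref{eq:chain2aNHPP}--\eqref{eq:chain2bNHPP}. Applying the Leibniz integral rule to the integral term produces a boundary contribution $\mathcal{I}_X(t)\,\frac1{r(t)}h_r^j(t,t)$ together with $\int_0^t \mathcal{I}_X(s)\,\frac{d}{dt}\big[\frac1{r(t)}h_r^j(t,s)\big]\,ds$. The boundary factor is exactly $u_r^j(t,t)$, which by Lemma \ref{L:dh} equals $1$ when $j=1$ and vanishes for $j\geq2$; the derivative inside the integral (and the derivative of the $x_0$-term) is then rewritten using \eqref{eq:smith711NHPPa}--\eqref{eq:smith711NHPPb}. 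For $j=1$ this collapses to $\mathcal{I}_X(t)-r(t)\,x_1(t)$, and for $j\geq2$ to $r(t)\,[x_{j-1}(t)-x_j(t)]$, which are precisely \eqref{eq:chain2aNHPP} and \eqref{eq:chain2bNHPP}. The initial conditions $x_1(0)=x_0$ and $x_j(0)=0$ follow from $u_r^1(0,0)=1$ and $u_r^j(0,0)=0$.

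To close the system I would substitute \eqref{eq:Xj2NHPP} into the bracketed ``net input rate'' of \eqref{eq:Y2NHPP}: multiplying $x_k(\tau)$ by $r(\tau)$ converts each $\frac1{r(\tau)}h_r^k$ back into $h_r^k$, so the bracket $\mathcal{I}_Y(\tau)+x_0\,h_r^k(\tau,0)+\int_0^\tau \mathcal{I}_X(s)\,h_r^k(\tau,s)\,ds$ is identical to $\mathcal{I}_Y(\tau)+r(\tau)\,x_k(\tau)$, yielding \eqref{eq:chain2cNHPP}. The further reduction of \eqref{eq:chain2cNHPP} to ODEs under various choices of $S_Y$ is then deferred to Corollary \ref{Th:baseCorr}, exactly as in Theorem \ref{Th:base}.

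The main obstacle I anticipate is one of regularity rather than of idea: the factor $\frac1{r(t)}$ in \eqref{eq:Xj2NHPP} now depends on $t$ and is singular wherever $r(t)=0$, so the naive differentiation used in the homogeneous proof is not directly valid. The remedy is the alternative form supplied by Lemma \ref{L:dh}: writing $h_r^j(t,\tau)=r(t)\,u_r^j(t,\tau)$ with $u_r^j$ from \eqref{eq:u}, each $x_j$ becomes $x_j(t)=x_0\,u_r^j(t,0)+\int_0^t \mathcal{I}_X(s)\,u_r^j(t,s)\,ds$, which is smooth and well defined even at zeros of $r(t)$, and the recursion \eqref{eq:smith711NHPPa}--\eqref{eq:smith711NHPPb} holds verbatim in terms of $u_r^j$. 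Carrying the entire computation in the $u_r^j$ variables sidesteps the singularity while delivering the same ODEs, so I would phrase the differentiation step in those terms from the outset.
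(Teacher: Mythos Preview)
Your proposal is correct and follows essentially the same approach as the paper's proof: establish $x(t)=\sum_j x_j(t)$ via \eqref{eq:sktheventdist}, differentiate \eqref{eq:Xj2NHPP} using Leibniz and Lemma~\ref{L:dh} to obtain the chain \eqref{eq:chain2aNHPP}--\eqref{eq:chain2bNHPP}, and substitute \eqref{eq:Xj2NHPP} into \eqref{eq:Y2NHPP} to get \eqref{eq:chain2cNHPP}. Your extra care in phrasing the differentiation in terms of $u_r^j$ to avoid the $1/r(t)$ singularity is a refinement the paper's proof does not make explicit (though Lemma~\ref{L:dh} does flag the issue), so if anything your version is slightly more careful on that point.
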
 % END THEOREM

\begin{proof} Substituting eq. \eqref{eq:sktheventdist} into eq. \eqref{eq:X2NHPP} and substituting eq. \eqref{eq:Xj2NHPP} yields $x(t) = \sum_{j=1}^k x_j(t)$. Differentiating eq. \eqref{eq:Xj2NHPP} with $j=1$ using the Liebniz integration rule as well as eq. \eqref{eq:smith711NHPPa} from Lemma \ref{L:dh} yields eq. \eqref{eq:chain2aNHPP}. Likewise, for $j\geq2$, differentiation of eq. \eqref{eq:Xj2NHPP} and Lemma \ref{L:dh} yields
	
	\begin{equation} \begin{split}
	\frac{d}{dt}x_j(t) =&\; x(0)\,r(t)\,\bigg[ \frac1{r(t)} h^{j-1}_{r}(t,0) - \frac1{r(t)} h^{j}_{r}(t,0) \bigg] \\
	& + \int_0^t  \mathcal{I}_X(s)\,r(t)\bigg[ \frac1{r(t)} h^{j-1}_{r}(t,\tau) - \frac1{r(t)} h^{j}_{r}(t,\tau) \bigg] ds \\
	=&\; r(t) \big(x_{j-1}(t) - x_j(t)\big).
	\end{split}
	\end{equation}
	Eq. \eqref{eq:chain2cNHPP} follows from substituting \eqref{eq:Xj2NHPP} into \eqref{eq:Y2NHPP}. The definition of $x_j$ and initial condition $x(0)=x_0$ together imply $x_1(0)=x_0$ and $x_j(0)=0$ for the remaining $j\geq2$.
	%\qed 
\end{proof}

% Note that Theorems \ref{Th:simple} and \ref{Th:base} are special cases of Theorem \ref{Th:baseNHPP}.

Having generalized the Standard LCT (Lemma \ref{lem:smith711} and Theorem \ref{Th:base}) to include Poisson process $k^\text{th}$ event time distributed dwell times (compare Lemmas \ref{lem:smith711} and \ref{L:dh}, and compare eqs. \eqref{eq:chain2} in Theorem \ref{Th:base} to eqs. \eqref{eq:chain2NHPP} in Theorem \ref{Th:baseNHPP}), we may now address more complex assumptions about the underlying stochastic state transition model.

% % % % % % % % % % % % % % % % % % % % % % % % % % % % % % % % % % % % %
\subsection{Transitions to multiple states}\label{sec:tomulti}

\begin{figure}[htb]
	\centerline{\includegraphics[trim=0 15 0 15,clip,width=0.35\textwidth]{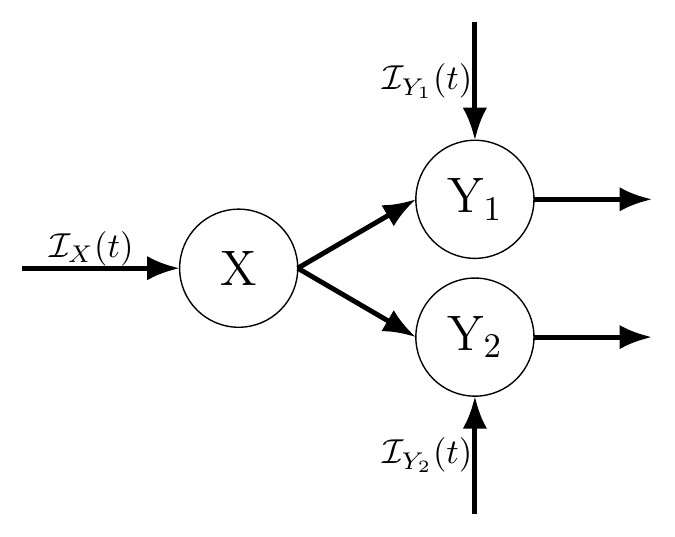}}
	\caption{Example diagram of transitions out of a given state (X) and into multiple states (Y$_1$ and Y$_2$). Different assumptions about (1) the dwell times in X, and (2) rules governing the subsequent transitions to Y$_1$ and/or Y$_2$ will lead to different sub-state partitions of X, and thus different mean field equations, as detailed in \S\ref{sec:tomulti}. Fortunately, different scenarios often encountered in applications can be reduced to ODEs by applying the results in \S\ref{sec:tomulti} as detailed in Theorems \ref{Th:tomultibasic}, \ref{Th:splitfirst}, and \ref{Th:comp} and as illustrated in Figs. \ref{fig:tomultibasic}, \ref{fig:tomultimix}, \ref{fig:mixture}, and \ref{fig:tomultimin}. }
	\label{fig:tomulti}
\end{figure}

Modeling the transition from one state to multiple states following a distributed delay (as illustrated in Fig. \ref{fig:tomulti}) can be done under different sets of assumptions about the underlying stochastic processes, particularly with respect to the rules governing how individuals are distributed across multiple recipient states and how those rules depend on the dwell time distribution(s) for individuals in that state. Importantly, those different sets of assumptions can yield very different mean field models \citep[e.g., see][]{Feng2016} and so care must be taken to make those assumptions appropriately for a given 	application. While modelers have some flexibility to choose appropriate assumptions, in practice modelers sometimes unintentionally make inappropriate assumptions, especially when constructing ODE models using ``rules of thumb" instead of deriving them from first principles. In this section we present results aimed at helping guide (a) the process of picking appropriate dwell time distribution assumptions, and (b) directly constructing corresponding systems of ODEs without deriving them from explicit stochastic models or intermediate integral equations. 

First, in \S \ref{sec:tomultibasic}, we consider the extension of Theorem \ref{Th:baseNHPP} where upon leaving X particles are distributed across $m\geq1$ recipient states according to a generalized Bernoulli distribution with (potentially time varying) probabilities/proportions $p_{j}(t)$, $j=1,\ldots,m$. Here the outcome of which state a particle transitions to is independent of the time spent in the first state.

Second, in \S \ref{sec:tomultisub-states} and \S \ref{sec:mixture}, particles entering the first state (X) do not all follow the same dwell time distribution in X. Instead, upon entering X they are distributed across $n\geq2$ sub-states of X, X$_i$, according to a generalized Bernoulli distribution, and each sub-state X$_i$ has a dwell time given by a Poisson process $k_i^\text{th}$ event time distribution with rate $r_i(t)$. That is, the X dwell time is a finite mixture of Poisson process event time distributions. Particles transition out of X into $m$ subsequent states Y$_j$ according to the probabilities/proportions $p_{ij}(t)$, the probability of going to Y$_j$ from X$_i$, $i=1,\ldots,n$ and $j=1,\ldots,m$. Here the determination of which recipient state Y$_\ell$ a particle transitions to depends on which sub-state of X the particle was assigned to upon entering X (see Fig. \ref{fig:tomultimix}). 

Third, in \S \ref{sec:tomultimin}, the outcome of which recipient state a particle transitions to upon leaving X is not independent of the time spent in the first state (as in \S \ref{sec:tomultibasic}), nor is it pre-determined upon entry into X (as in \S \ref{sec:tomultisub-states} and \S \ref{sec:mixture}). This result is obtained using yet another novel extension of Lemma \ref{lem:smith711} in which the dwell time in state X is the minimum of $n\geq2$ independent Poisson process event time distributions.

Each of these cases represents some of the different underlying stochastic model assumptions that can be made to construct a mean field ODE model for the scenario depicted in Fig. \ref{fig:tomulti}.

Lastly (\S \ref{sec:equivcompmulti}), we describe an equivalence between 1) the more complex case addressed in \S \ref{sec:tomultimin} assuming a dwell time that obeys the minimum of Poisson process $1^\text{st}$ event times, before being distributed across $m$ recipient states, and 2) the conceptually simpler case in \S \ref{sec:tomultibasic} where the dwell time follows a single Poisson process $1^\text{st}$ event time distribution before being distributed among $m$ recipient states. This is key to understanding the scope of the Generalized Linear Chain Trick in \S \ref{sec:glct}.

\subsubsection{Transition to multiple states independent of the X dwell time distribution}\label{sec:tomultibasic}

Here we extend the case in the previous section and assume that, upon leaving state X, particles can transition to one of $m$ states (call them $Y_i$, $i=1,\ldots,m$), and that a particle leaving X at time $t$ enters state $Y_i$ with probability $p_i(t)$, where $\sum_{i=1}^m p_i(t)=1$ (i.e., particles are distributed across all Y$_i$ following a generalized Bernoulli distribution with parameter vector $\mathbf{p}(t)=(p_1(t),\ldots,p_m(t))$). See Fig. \ref{fig:tomultibasic} for a simple example with constant $\mathbf{p}$ and $m=2$. An important assumption in this case is that the determination about which state a particle goes to after leaving X is made once it leaves X, and thus the state it transitions to is determined independent of the dwell time in X. Examples from the literature include Model II in \citet{Feng2016}, where infected individuals (state X) either recovered (Y$_0$) or died (Y$_1$) after an Erlang distributed time delay. 

\begin{theorem}[Extended LCT with proportional output to multiple states]\label{Th:tomultibasic} Consider the case addressed by Theorem \ref{Th:baseNHPP}, and further assume particles go to one of $m$ states (call them Y$_j$) with $p_j(t)$ being the probability of going to Y$_j$. Let $S_j$ be the survival functions for the dwell times in Y$_j$. 
	
	The mean field integral equations for this case, with $x(0)=x_0$ and $y_j(0)=y_{j0}$, are 
	
	\begin{subequations}\label{eq:delaysplit} \begin{align} 
		x(t) =& x_0\,\mathcal{S}_{r}^{k}(t,0) + \int_0^t\mathcal{I}(s)\,\mathcal{S}_{r}^{k}(t,s)\,ds \label{eq:delaysplitx} \\
		\begin{split}  \label{eq:delaysplityj}  y_j(t) =& y_j(0)S_j(t,0) + \int_0^t\bigg(\mathcal{I}_j(\tau)+p_j(t)\bigg( x_0\,h_{r}^{k}(\tau,0)   \\ & \qquad \qquad \qquad \quad + \int_0^\tau\mathcal{I}(s)h_{r}^{k}(\tau,s)\,ds \bigg)\bigg)S_j(t,\tau) d\tau  \end{split}
		\end{align}\end{subequations} These integral equations are equivalent to the following system of equations: 
	
	\begin{subequations}\label{eq:delaysplitODEs}\begin{align} 
		\frac{d}{dt}{x_1}(t) =&\; \mathcal{I}(t) -r(t)\, x_1(t)  \label{eq:delaysplitODEsa}\\
		\frac{d}{dt}{x_i}(t) =&\; r(t)\, x_{i-1}(t) - r(t)\, x_i(t), \quad i=2,\ldots,k  \label{eq:delaysplitODEsb} \\
		y_j(t) =&\; y_j(0)S_j(t,0) + \int_0^t\bigg(\mathcal{I}_j(\tau)+p_j(t)\;r(t)\,x_k(\tau)\bigg)S_j(t,\tau) d\tau  \label{eq:delaysplitODEsc}
		%\frac{d}{dt}{y_j}(t) =&\; p_j\,r x_{k}(t) - \mu_j(t) y_j(t), \quad j=2,\ldots,n   \label{eq:delaysplitODEsc} 
		\end{align}\end{subequations}
	where $x(t)=\sum_{i=1}^k x_i(t)$, $x_1(0)=x_0$, $x_i(0)=0$ for $i\geq 2$, and 
	
	\begin{equation}
	x_i(t)= x_0 \frac1{r(t)} h_{r}^{k}(t,0) + \int_0^t\mathcal{I}(s) \frac1{r(t)} h_{r}^{k}(t,s)\,ds. \label{eq:delaysplitxi} \\
	\end{equation}
	The $m$ integral equations \eqref{eq:delaysplitODEsc} may be further reduced to a system of ODEs, e.g., via Corollary \ref{Th:baseCorr}.
	
\end{theorem}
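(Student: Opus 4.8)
The plan is to reduce this result almost entirely to the Extended LCT of Theorem \ref{Th:baseNHPP}, exploiting the fact that the internal dynamics within state X are unaffected by how the output is subsequently partitioned across recipient states. First I would observe that the X-equations \eqref{eq:delaysplitx} and \eqref{eq:delaysplitxi} are identical in form to \eqref{eq:X2NHPP} and \eqref{eq:Xj2NHPP} in Theorem \ref{Th:baseNHPP}. Thus by exactly the same argument---substituting the survival-function decomposition \eqref{eq:sktheventdist} into \eqref{eq:delaysplitx}, and then invoking Lemma \ref{L:dh} to differentiate \eqref{eq:delaysplitxi}---one obtains $x(t)=\sum_{i=1}^k x_i(t)$ together with the chain of ODEs \eqref{eq:delaysplitODEsa} and \eqref{eq:delaysplitODEsb}, along with the stated initial conditions $x_1(0)=x_0$ and $x_i(0)=0$ for $i\geq2$.

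The second and crucial step is to show that the $m$ integral equations \eqref{eq:delaysplityj} reduce to \eqref{eq:delaysplitODEsc}. The key identity is that the total rate at which mass leaves X at time $\tau$ equals $r(\tau)\,x_k(\tau)$. To establish this, I would multiply the definition \eqref{eq:delaysplitxi} (taken with $i=k$) by $r(\tau)$, yielding
\begin{equation*}
r(\tau)\,x_k(\tau) = x_0\,h_{r}^{k}(\tau,0) + \int_0^\tau \mathcal{I}(s)\,h_{r}^{k}(\tau,s)\,ds,
\end{equation*}
which is precisely the bracketed output term appearing inside the integral in \eqref{eq:delaysplityj}. Substituting this identity into \eqref{eq:delaysplityj}, and leaving the allocation factor $p_j$ untouched, immediately produces \eqref{eq:delaysplitODEsc}, so the integral and mixed integral/ODE formulations coincide.

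The main obstacle here is conceptual rather than technical: one must recognize that distributing the output of X across multiple recipient states does not disturb the sub-state partition of X itself, so the entire X-chain is inherited verbatim from Theorem \ref{Th:baseNHPP}, and that the only genuinely new content is the proportional allocation of the \emph{single} aggregate output rate $r(\tau)\,x_k(\tau)$ among the $Y_j$ via the weights $p_j$. Once this is seen, the proof is a direct substitution requiring no new analytic machinery beyond Lemma \ref{L:dh}. Finally, I would remark that each resulting integral equation \eqref{eq:delaysplitODEsc} has the same form as \eqref{eq:chain2cNHPP}, and hence may be further reduced to a system of ODEs via Corollary \ref{Th:baseCorr} under suitable assumptions on the survival functions $S_j$ (e.g., exponential or Erlang dwell times in each $Y_j$).
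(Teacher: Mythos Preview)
Your proposal is correct and follows essentially the same route as the paper: invoke Theorem~\ref{Th:baseNHPP} to obtain the $x_i$ chain \eqref{eq:delaysplitODEsa}--\eqref{eq:delaysplitODEsb} and the formula \eqref{eq:delaysplitxi}, and then substitute the $i=k$ instance of \eqref{eq:delaysplitxi} (multiplied through by $r(\tau)$) into \eqref{eq:delaysplityj} to collapse the bracketed output term to $r(\tau)\,x_k(\tau)$ and obtain \eqref{eq:delaysplitODEsc}. Your writeup is in fact more explicit than the paper's, which simply states that \eqref{eq:delaysplitODEsc} ``follows from substitution of eq.~\eqref{eq:delaysplitxi} into \eqref{eq:delaysplityj}.''
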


\begin{proof} Equations \eqref{eq:delaysplitODEsa}, \eqref{eq:delaysplitODEsb} and \eqref{eq:delaysplitxi} follow from Theorem \ref{Th:baseNHPP}. Eq. \eqref{eq:delaysplitODEsc} follows from substitution of eq. \eqref{eq:delaysplitxi} into \eqref{eq:delaysplityj}. The derivation of eq. \eqref{eq:delaysplityj} is similar to the derivation in Appendix \ref{a:example1} but accounts for the expected proportion entering each Y$_j$ at time $t$ being equal to $p_j(t)$. 	
	%\qed
\end{proof}

\begin{figure}[htb]
	\centerline{\includegraphics[trim=0 15 0 15,clip,width=0.7\textwidth]{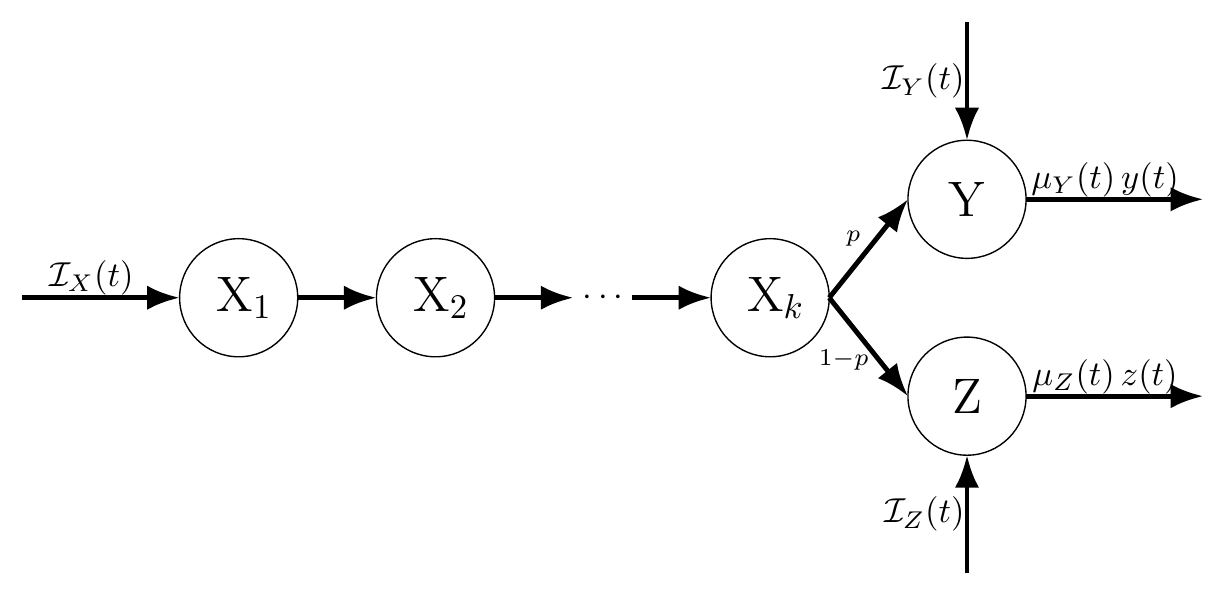}}
	\caption{The special case of Fig. \ref{fig:tomulti} under the assumptions of Theorem \ref{Th:tomultibasic} (Extended LCT with proportional outputs to multiple states; see \S \ref{sec:tomultibasic}). Specifically, this case assumes that (1) the dwell time distribution for X is Erlang($r,k$), and (2) upon exiting X particles are distributed to multiple recipient states, here Y and Z, with probabilities $p$ and $1-p$, respectively. }
	\label{fig:tomultibasic}
\end{figure}

\begin{example} Consider the example shown in Figure \ref{fig:tomultibasic}, where the dwell time distribution for X is Erlang($r,k$) and the dwell times in Y and Z follow 1$^\text{st}$  event times under nonhomogeneous Poisson processes with respective rates $\mu_Y(t)$ and $\mu_Z(t)$. The corresponding mean field ODEs, given by Theorem \ref{Th:tomultibasic}, are 
	
	\begin{subequations}\label{eq:gammaproportionODE}\begin{align} 
		\frac{d}{dt}{x_1}(t) =&\; \mathcal{I}(t) -r\, x_1(t)  \label{eq:gammaproportionODEx1}\\
		\frac{d}{dt}{x_i}(t) =&\; r\, x_{i-1}(t) - r\, x_i(t), \quad i=2,\ldots,k  \label{eq:gammaproportionODExi} \\
		\frac{d}{dt}{y}(t) =&\; \mathcal{I}_Y(t) + p\,r\, x_{k} - \mu_Y(t) y(t)   \label{eq:gammaproportionODEy} \\
		\frac{d}{dt}{z}(t) =&\; \mathcal{I}_Z(t) + (1-p)\,r\, x_{k}(t) - \mu_Z(t) z(t).  \label{eq:gammaproportionODEz} 
		\end{align}\end{subequations}
	
\end{example}

% % % % % % % % % % % % % % % % % % % % % % % % % % % % % % % % % % % % %

\subsubsection{Transition from sub-states of X with differing dwell time distributions and differing output distributions across states Y$_j$}\label{sec:tomultisub-states}

We next consider the case where particles in a given state X can be treated as belonging to a heterogeneous population, where each remains in that state according to one of $N$ possible dwell time distributions, the $i^\text{th}$ of these being the $k_i^\text{th}$ event time distribution under a Poisson process with rate $r_i(t)$). Each particle is assigned one of these $N$ dwell time distributions (i.e., it is assigned to sub-state X$_i$) upon entry into X according to a generalized Bernoulli distribution with a (potentially time varying) probability vector $\rho(t)=(\rho_1(t),\ldots,\rho_N(t))$. In contrast to the previous case, here the outcome of which recipient state a particle transitions to is not necessarily independent of the dwell time distribution. 

Note that the above assumptions imply that the dwell time distribution for state X is a finite mixture of event time distributions under $N$ independent Poisson processes.  If a random variable $T$ is a mixture of Erlang distributions, or more generally a mixture of $N$ independent Poisson process event time distributions, then the corresponding density function ($f$) and survival function ($\Phi$) are

\begin{subequations}\label{eq:mixturedist}\begin{align}
	f_\theta(t,\tau) =&\; \sum_{i=1}^N \rho_i(\tau)\, h_{r_i}^{k_i}(t,\tau) \label{eq:mixturef} \\
	\Phi_\theta(t,\tau) =&\; \sum_{i=1}^N \rho_i(\tau)\, \mathcal{S}_{r_i}^{k_i}(t,\tau) =\; \sum_{i=1}^N \rho_i(\tau) \sum_{j=1}^{k_i} \frac1{r_{i}(t)}\,h_{r_{i}}^j(t,\tau) \label{eq:mixtureS} 
	\end{align}
\end{subequations}
where the (potentially time varying) parameter vector $\theta(t)=$($\rho_1$, $r_1(t)$, $k_1$, $\ldots$, $\rho_N$, $r_N(t)$, $k_N$) is the potentially time varying parameter vector for the $N$ distributions that constitute the mixture distribution, with $\sum_{i=1}^N \rho_i(t) =1$. Note that if all $r_i(t)=r_i$ are constant, this becomes a mixture of independent Erlang distributions, or if additionally all $k_i=1$, a mixture of independent exponentials.

\begin{theorem}[Extended LCT for dwell times given by mixtures of Poisson process event time distributions and outputs to multiple states] \label{Th:splitfirst}
	Consider a continuous time state transition model with inflow rate $\mathcal{I}(t)$ into state X. Assume that the duration of time spent in state X follows a finite mixture of $N$ independent Poisson process event time distributions. That is, X can be partitioned into $N$ sub-states X$_i$, $i=1,\ldots,N$, each with dwell time distributions given by a Poisson process $k_i^\text{th}$ event time distributions with rates $r_i(t)$. Suppose the inflow to state X at time $t$ is distributed among this partition according to a generalized Bernoulli distribution with probabilities $\rho_i(t)$, where $\sum_{i=1}^{N} \rho_i(t) = 1$, so that the input rate to X$_i$ is $\rho_i(t)\mathcal{I}(t)$. Assume that particles leaving sub-state X$_i$ then transition to state Y$_\ell$ with probability $p_{i\ell}(t)$, $\ell=1,\ldots,m$, where the duration of time spent in state Y$_\ell$ follows a delay distribution give by survival function $S_j$. Then we can partition each X$_i$ into X$_{ij}$, $j=1,\ldots,k_i$, according to Theorem \ref{Th:baseNHPP} and let $x(t)$, $x_i(t)$, $x_{ij}(t)$, and $y_\ell(t)$ be the amounts in states X, X$_i$, X$_{ij}$, and Y$_\ell$ at time $t$, respectively. Assume non-negative initial conditions $x(0)=x_0$, $x_i(0)=\rho_i(0)x_0$, $x_{i1}(0)=\rho_i(0)\,x_0$, $x_{ij}(0)=0$ for $j\geq2$, and $y_\ell(0)\geq0$. 
	
	The mean field integral equations for this scenario are 
	
	\begin{subequations}\label{eq:splitXYmix} \begin{align} 
		x(t) =&\; x_0\,\Phi_{\theta}(t,0) + \int^t_{0} \mathcal{I}(s)\,\Phi_{\theta}(t,s) ds  \label{eq:splitXYmixX}\\
		\begin{split} y_\ell(t) =&\; y_\ell(0) S_\ell(t,0)  + \int^t_{0} \bigg(\mathcal{I}_\ell(\tau) +  \sum_{i=1}^N p_{ij}(\tau) \bigg( x_0\, \rho_i(\tau) \,h_{r_i}^{k_i}(\tau,0) \\
		& \qquad \qquad \qquad \quad  +\int^\tau_0 \rho_i(s)\, \mathcal{I}(s)\,h_{r_i}^{k_i}(\tau, s) ds\bigg)\bigg)  S_\ell (t,\tau)d\tau. \label{eq:splitYmix}  
		\end{split}
		\end{align} \end{subequations}
	
	The above system of equations \eqref{eq:splitXYmix} are equivalent to 
	
	\begin{subequations}\label{eq:splitmixchain}\begin{align} 
		\frac{d}{dt}{x_{i1}}(t) =&\;  \rho_i(t)\, \mathcal{I}(t) - r_i(t)\,x_{i1}(t),  \quad i=1,\ldots,N  \label{eq:splitmixchaina} \\
		\frac{d}{dt}{x_{ij}}(t) =&\; r_i(t) \big(x_{i,j-1}(t) - x_{ij}(t)\big), \quad i=1,\ldots,N; \; j=2,\ldots,k_i  \label{eq:splitmixchainb} \\
		%y_i(t) =&\; y_i(0)S_i(t,0) + \int^t_{0} \left(\mathcal{I}_i(\tau) + r_i\,x_{i,k_i}(\tau)\right)S_i(t,\tau)d\tau. \label{eq:splitmixchainc}
		\begin{split} \label{eq:splitmixchainc} y_\ell(t) =&\; y_\ell(0) S_\ell(t,0) + \int^t_{0} \bigg(\mathcal{I}_\ell(\tau) \\ & \qquad \qquad +  \sum_{i=1}^N r_i(t)\,x_{ik_i}(\tau)\,p_{i\ell}(\tau) \bigg)  S_\ell (t,\tau)d\tau \end{split}
		\end{align}\end{subequations} with initial conditions $x_{i1}(0)=\rho_i(0)\,x_0$, $x_{ij}(0)=0$ for $j\geq2$, where $x(t) = \sum_{i=1}^{N} x_i(t)$, and $x_i(t) = \sum_{j=1}^{k_i} x_{ij}(t)$. The amounts in each X$_i$, and in sub-states X$_{ij}$, are given by 
	
	\begin{align} 
	x_i(t) =&\; \rho_i(0)\,x_0\,\mathcal{S}_{r_i}^{k_i}(t,0) + \int^t_{0} \rho_i(s)\,\mathcal{I}(s)\,\mathcal{S}_{r_i}^{k_i}(t,s) ds \label{eq:splitXmix}\\
	x_{ij}(t) =&\; \rho_i(0)\,x_0 \frac{h_{r_i}^j(t,0)}{r_i(t)} + \int^t_{0} \rho_i(s)\mathcal{I}(s)\;\frac{h_{r_i}^j(t,s)}{r_i(t)} \,ds. \label{eq:splitXjmix} % The ith chain, jth state in that chain.
	\end{align} The $m$ integral equations \eqref{eq:splitmixchainc} for $y_\ell(t)$ may be reduced to ODEs, e.g., via Corollary \ref{Th:baseCorr}.
	
\end{theorem}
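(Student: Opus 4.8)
The plan is to exploit the fact that the assumed mixture structure makes state X decouple, at the level of its sub-states, into $N$ independent instances of the Extended LCT already established in Theorem \ref{Th:baseNHPP}. Because each particle is assigned to exactly one sub-state X$_i$ upon entry, independently with probability $\rho_i(t)$, the effective inflow rate into X$_i$ is $\rho_i(t)\,\mathcal{I}(t)$ and the dwell time within X$_i$ is precisely a Poisson process $k_i^\text{th}$ event time distribution with rate $r_i(t)$. Thus each X$_i$ is exactly the setting of Theorem \ref{Th:baseNHPP} with input $\rho_i(t)\,\mathcal{I}(t)$ and initial amount $\rho_i(0)\,x_0$, and the $N$ chains interact only through their shared outflow into the recipient states Y$_\ell$.

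First I would apply Theorem \ref{Th:baseNHPP} separately to each X$_i$. This immediately yields the sub-state amounts \eqref{eq:splitXjmix}, the chain ODEs \eqref{eq:splitmixchaina}--\eqref{eq:splitmixchainb} (obtained by differentiating \eqref{eq:splitXjmix} via the Leibniz rule together with Lemma \ref{L:dh}), and the sub-state initial conditions $x_{i1}(0)=\rho_i(0)\,x_0$, $x_{ij}(0)=0$ for $j\geq2$. Summing $x_{ij}$ over $j$ and invoking eq. \eqref{eq:sktheventdist} produces the per-sub-state total \eqref{eq:splitXmix}. I would then recover the integral equation \eqref{eq:splitXYmixX} for the full state by summing \eqref{eq:splitXmix} over $i$: since $x(t)=\sum_i x_i(t)$, collecting the $\rho_i(0)\,\mathcal{S}_{r_i}^{k_i}(t,0)$ and $\rho_i(s)\,\mathcal{S}_{r_i}^{k_i}(t,s)$ contributions and applying the definition \eqref{eq:mixtureS} of $\Phi_\theta$ reproduces \eqref{eq:splitXYmixX}.

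For the recipient states, I would observe that the per-capita exit rate from the terminal sub-state X$_{ik_i}$ is $r_i(t)$, so the total outflow leaving X$_i$ is $r_i(t)\,x_{ik_i}(t)$, a fraction $p_{i\ell}(t)$ of which enters Y$_\ell$. Hence the net input rate to Y$_\ell$ is $\mathcal{I}_\ell(t)+\sum_{i=1}^N r_i(t)\,x_{ik_i}(t)\,p_{i\ell}(t)$, and feeding this rate through the survival function $S_\ell$ gives \eqref{eq:splitmixchainc}. Equivalently, substituting \eqref{eq:splitXjmix} with $j=k_i$ into \eqref{eq:splitYmix} and using the identity $r_i(\tau)\,x_{ik_i}(\tau)=\rho_i(0)\,x_0\,h_{r_i}^{k_i}(\tau,0)+\int_0^\tau \rho_i(s)\,\mathcal{I}(s)\,h_{r_i}^{k_i}(\tau,s)\,ds$ collapses the bracketed term in \eqref{eq:splitYmix} to this net input rate. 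The final reduction of these $m$ integral equations to ODEs then follows case by case from Corollary \ref{Th:baseCorr}.

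The main obstacle is organizational rather than analytic: the argument is essentially $N$ parallel copies of Theorem \ref{Th:baseNHPP} glued together at the output, so the real work lies in keeping the double index $(i,j)$ bookkeeping consistent and in carefully distinguishing the initial-time weights $\rho_i(0)$, which split the initial population $x_0$ across sub-states, from the entry-time weights $\rho_i(\tau)$, which split the inflow flux. One must also verify that each splitting probability $p_{i\ell}$ multiplies the corresponding sub-state outflow $r_i(t)\,x_{ik_i}(t)$ rather than a pooled outflow, since it is exactly this dependence of the recipient-state assignment on which sub-state of X a particle occupied that distinguishes the present scenario from the dwell-time-independent splitting treated in Theorem \ref{Th:tomultibasic}.
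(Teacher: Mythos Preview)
Your proposal is correct and matches the paper's own proof essentially line for line: both reduce the problem to $N$ parallel applications of Theorem~\ref{Th:baseNHPP} (one per sub-state X$_i$ with inflow $\rho_i(t)\mathcal{I}(t)$ and initial mass $\rho_i(0)x_0$), then obtain \eqref{eq:splitmixchainc} by substituting the resulting expression \eqref{eq:splitXjmix} for $x_{ik_i}$ into \eqref{eq:splitYmix}. The only cosmetic difference is the direction of the first step---the paper starts from \eqref{eq:splitXYmixX}, substitutes \eqref{eq:mixtureS}, and splits into the $x_i$'s before invoking Theorem~\ref{Th:baseNHPP}, whereas you apply Theorem~\ref{Th:baseNHPP} first and then sum back up---but the logical content is identical.
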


\begin{proof} Substituting eq. \eqref{eq:mixtureS} into eq. \eqref{eq:splitXYmixX} and then substituting eq. \eqref{eq:splitXmix} yields $x(t) = \sum_{i=1}^{N} x_{i}(t)$. Applying Theorem \ref{Th:baseNHPP} to each X$_i$ (i.e., to each eq. \eqref{eq:splitXmix}) then yields eqs. \eqref{eq:splitXjmix}, \eqref{eq:splitmixchaina} and \eqref{eq:splitmixchainb}. (Alternatively, one could prove this directly by differentiating eqs. \eqref{eq:splitXjmix} using eqs. \eqref{eq:smith711NHPP} from Lemma \ref{L:dh}).  The $y_\ell(t)$ equations \eqref{eq:splitmixchainc} are obtained from \eqref{eq:splitYmix} by substitution of eqs. \eqref{eq:splitXjmix}.
	%\qed 
\end{proof}

\begin{example}\label{ex:tomultimix} 
	
	\begin{figure}[!htb]
		\centerline{\includegraphics[trim=0 15 0 15,clip,width=0.7\textwidth]{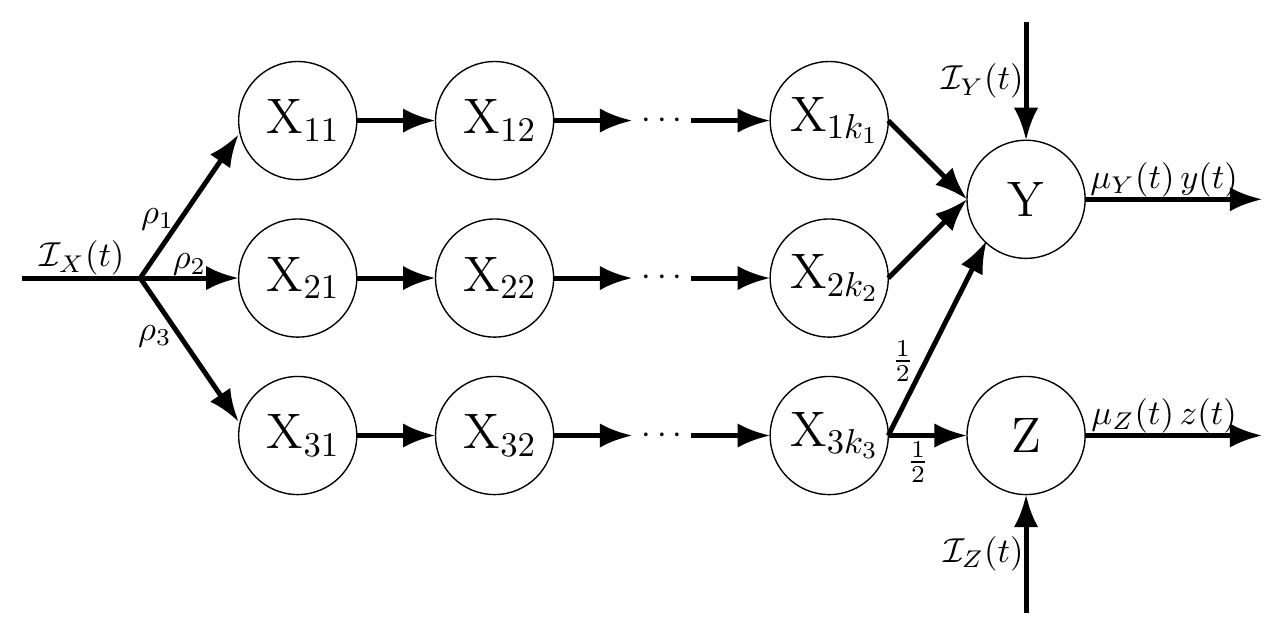}}
		\caption{The sub-state diagram (cf. Fig. \ref{fig:tomulti}) resulting from the application of Theorem \ref{Th:splitfirst} (Extended LCT for finite mixtures of Poisson process event time distributions with output to multiple states) to Example \ref{ex:tomultimix}, where upon entering X particles have an Erlang($r_i,k_i$) dwell time in X with probability $\rho_i$, $i=1,\ldots,3$. Thus, the overall dwell time in X follows an Erlang mixture distribution (see \S\ref{sec:mixture}).}
		\label{fig:tomultimix}
	\end{figure}
	
	Suppose particles entering state X at rate $\mathcal{I}_X(t)$ enter sub-state X$_1$ with probability $\rho_1$, X$_2$ with probability $\rho_2$, and X$_3$ with probability $\rho_3=1-\rho_1-\rho_2$. Further assume particles in state X$_i$ remain there for an Erlang$(r_i,k_i)$ distributed amount of time, and that particles exiting X$_1$ and X$_2$ transition to Y with probability 1, while particles exiting X$_3$ transition either to state Y or Z with equal probability.  Assume particle may also enter states Y and Z from sources other than state X (at rates $\mathcal{I}_X(t)$ and $\mathcal{I}_X(t)$, respectively), and the dwell times in those two states follow the $1^\text{st}$ event times of independent nonhomogeneous Poisson processes with rates $\mu_Y(t)$ and $\mu_Z(t)$, respectively. Then Theorem \ref{Th:splitfirst} yields the following mean field system of ODEs (see Fig. \ref{fig:tomultimix}). 
	
	\begin{subequations}\begin{align} 
		\frac{d}{dt}{x_{i,1}}(t) =&\;  \rho_i\, \mathcal{I}(t) - r_i x_{i,1}(t), \quad i=1,\ldots,3,   \\
		\frac{d}{dt}{x_{i,j}}(t) =&\; r_i \big(x_{i,j-1}(t) - x_{ij}(t)\big), \quad  j=2,\ldots,k_i  \\
		\frac{d}{dt}{y}(t) =&\; r_1 \, x_{1,k_1}(t) + r_2\, x_{2,k_2}(t) + r_3 \frac12 x_{3,k_3}(t) - \mu_Y(t) y(t)  \\
		\frac{d}{dt}{z}(t) =&\; r_3 \frac12 x_{3,k_3}(t) - \mu_Z(t) z(t) .    
		\end{align}\end{subequations}
	
\end{example}

\subsubsection{Extended LCT for dwell times given by finite mixtures of Poisson process event time distributions}\label{sec:mixture}

It's worth noting here that it may be appropriate in some applied contexts to approximate a non-Erlang delay distribution with a mixture of Erlang distributions (see Appendix \ref{sec:approxgamma} for more details on making such approximations). The following corollary to Theorem \ref{Th:splitfirst} above (specifically, the $m=1$ case) details how assuming such a mixture distribution (or more generally, a finite mixture of independent nonhomogeneous Poisson process event times) would be reflected in the structure of the corresponding mean field ODEs (see Fig. \ref{fig:mixture}). 

\begin{figure}[thb]
	\centerline{\includegraphics[clip,width=0.7\textwidth]{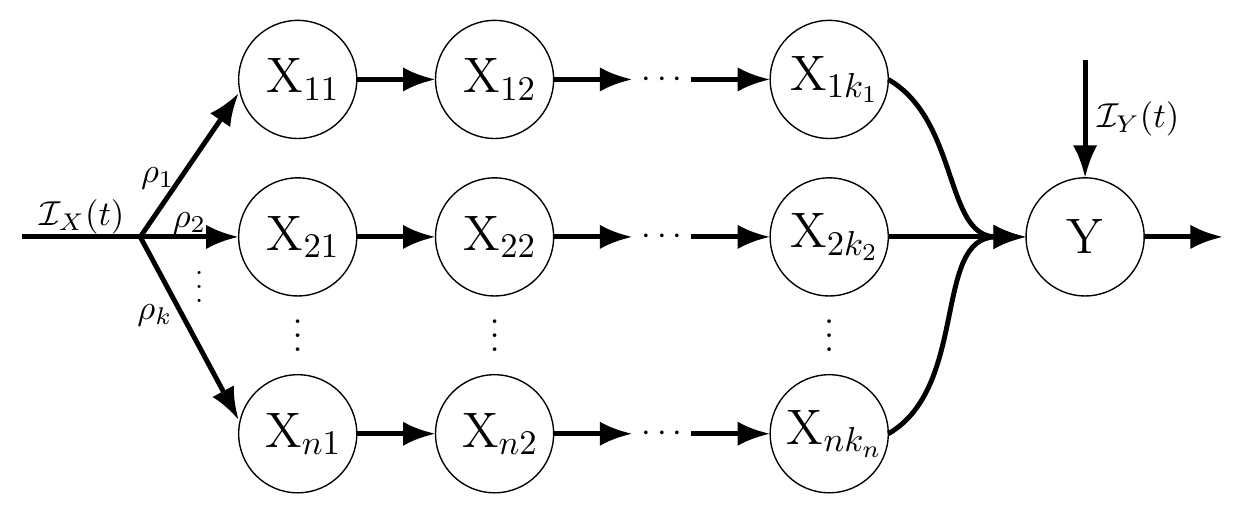}}
	\caption{Example sub-state diagram corresponding to the ODEs provided by Corollary \ref{Th:mixture} (Extended LCT for mixtures of Poisson process event time distributions with output to a single recipient state) where the dwell time in state X follows an Erlang mixture distribution (see section \ref{sec:defns}) as detailed in \S\ref{sec:mixture}.} 
	\label{fig:mixture}
\end{figure}

\begin{corollary}[Extended LCT for Poisson process event time mixture distributions]\label{Th:mixture}
	Consider the case addressed in Theorem \ref{Th:splitfirst} where the distribution of time spent in state X is a finite mixture of event time distributions under independent homogeneous or nonhomogeneous Poisson processes, and that upon leaving X particles enter a single state Y (c.f. Theorem \ref{Th:baseNHPP}). Then corresponding mean field equations are 
	
	\begin{subequations}\label{eq:mixchain}\begin{align} 
		\frac{d}{dt}{x_{i1}}(t) =&\;  \rho_i(t)\, \mathcal{I}_X(t) - r_i(t) x_1(t),  \quad i=1,\ldots,N  \label{eq:mixchaina} \\
		\frac{d}{dt}{x_{ij}}(t) =&\; r_i(t) \,x_{i,j-1}(t) - r_i(t) \,x_{ij}(t), \quad i=1,\ldots,N; \; j=2,\ldots,k_i  \label{eq:mixchainb} \\
		y(t) =&\; y_0\,S(t,0) + \int^t_{0} \left( \mathcal{I}_Y(\tau) + \sum_{i=1}^{N} r_i(\tau) \, x_{ik_i}(\tau) \right)S(t,\tau)d\tau.    \label{eq:mixchainc} 
		\end{align}\end{subequations}
	
	with initial conditions $x_{i1}(0)=\rho_i(t)\,x_0$, $x_{ij}(0)=0$ for $j\geq2$. Here $x(t) = \sum_{i=1}^N \sum_{j=1}^{k_i} x_{ij}(t)$ where $x_{ij}$ is the amount in the $j^\text{th}$ intermediate state in the $i^\text{th}$ linear chain.
\end{corollary}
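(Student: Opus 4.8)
The plan is to recognize that Corollary \ref{Th:mixture} is exactly the single-recipient-state specialization ($m=1$) of Theorem \ref{Th:splitfirst}, so that essentially no new work is required beyond a careful identification of notation. First I would set $m=1$ in Theorem \ref{Th:splitfirst}. With a single downstream state $Y_1 = Y$, the relevant objects specialize as $y_1(t) = y(t)$, $y_1(0) = y_0$, $S_1 = S$, and $\mathcal{I}_1 = \mathcal{I}_Y$, while the inflow to X is written $\mathcal{I}_X$ in place of $\mathcal{I}$.

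Second, I would observe that when there is only one recipient state, every particle exiting any sub-state $X_i$ must enter $Y$, forcing the transition probabilities to satisfy $p_{i1}(t) = 1$ for all $i = 1,\ldots,N$. Third, I would substitute $m=1$ and $p_{i1} \equiv 1$ into the conclusions of Theorem \ref{Th:splitfirst}. The dynamics of the X sub-chains, eqs. \eqref{eq:splitmixchaina} and \eqref{eq:splitmixchainb}, do not depend on $m$ or on the output distribution, so they carry over verbatim to give eqs. \eqref{eq:mixchaina} and \eqref{eq:mixchainb}, together with their initial conditions $x_{i1}(0) = \rho_i(0)\,x_0$ and $x_{ij}(0) = 0$ for $j \geq 2$. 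Likewise eq. \eqref{eq:splitmixchainc} with the single index $\ell = 1$ and $p_{i1} \equiv 1$ collapses the sum $\sum_i r_i(\tau)\,x_{ik_i}(\tau)\,p_{i1}(\tau)$ to $\sum_i r_i(\tau)\,x_{ik_i}(\tau)$, yielding precisely eq. \eqref{eq:mixchainc}.

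I expect no genuine obstacle here, since the result is a direct corollary; the only point requiring care is bookkeeping of notation (matching $\mathcal{I}$ to $\mathcal{I}_X$, recognizing that the net input rate into Y at time $\tau$ is the sum of the exit rates $r_i(\tau)\,x_{ik_i}(\tau)$ from the terminal sub-states $X_{ik_i}$, and that the mixture structure of the X dwell time is already encoded through $\Phi_\theta$ via eq. \eqref{eq:mixtureS}). If one preferred a self-contained argument, one could instead prove the corollary directly by substituting the mixture survival function \eqref{eq:mixtureS} into the integral equation for $x(t)$, applying Theorem \ref{Th:baseNHPP} to each of the $N$ resulting chains, and differentiating the per-chain representations \eqref{eq:splitXjmix} using Lemma \ref{L:dh}; but invoking Theorem \ref{Th:splitfirst} with $m=1$ is the most economical route.
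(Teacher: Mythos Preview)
Your proposal is correct and matches the paper's approach exactly: the paper explicitly introduces this corollary as ``the $m=1$ case'' of Theorem \ref{Th:splitfirst} and provides no separate proof, so your specialization (setting $m=1$, forcing $p_{i1}\equiv 1$, and reading off eqs.~\eqref{eq:splitmixchaina}--\eqref{eq:splitmixchainc}) is precisely the intended argument.
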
 

%%%%%%%%%%%%%%%%%%%%%%%%%%%%%%%%%%%%%%%%%%%%%%%%%%%%%%%%%%%%%%%%%%%%%%%%%%%%%%%%%%%%%%%%%%%%%%%%%%%%%%%%%%%%%%%%%
%\clearpage
\subsubsection{Transition to multiple states following ``competing" Poisson processes}\label{sec:tomultimin}

We now consider the case where $T$, the time a particle spends in a given state X, follows the distribution given by $T=\min_i T_i$, the minimum of $n\geq2$ independent random variables $T_i$, where $T_i$ has either an  Erlang($r_i,k_i$) distribution or, more generally, Poisson process $k_i^\text{th}$ event time distributions with rates $r_i(t)$. Upon leaving state X, particles have the possibility of transitioning to any of $m$ recipient states $Y_\ell$, $\ell=1,...,m$, where the probability of transitioning to state Y$_\ell$ depends on which of the $n$ random variables $T_i$ was the minimum. That is, if a particle leaves X at time $T=T_i=t$, then the probability of entering state Y$_\ell$ is $p_{i\ell(t)}$. 

The distribution associated with $T$ is not itself an Erlang distribution or a Poisson process event time distribution, however its survival function is the product\footnote{It is generally true that the survival function for a minimum of multiple independent random variables is the product of their survival functions.} of such survival functions, i.e., \begin{equation}\label{eq:scomp}\mathscr{S}(t,\tau)\equiv \prod_{i=1}^{n}\mathcal{S}_{r_i}^{k_i}(t,\tau).\end{equation} As detailed below, we can further generalize the recursion relation in Lemma \ref{lem:smith711} for the distributions just described above, which can then be used to produce a mean field system of ODEs based on appropriately partitioning X into sub-states.

Before considering this case in general, it is helpful to first describe the sub-states of X imposed by assuming the dwell time distribution described above, particularly the case where the distribution for each $T_i$ is based on $1^\text{st}$ event times (i.e., all $k_i=1$). Recall that the minimum of $n$ exponential random variables (which we may think of as $1^\text{st}$ event times under a homogeneous Poisson process) is exponential with a rate that is the sum of the individual rates $r=\sum_{i=1}^n r_i$. More generally, it is true that the minimum of $n$ $1^\text{st}$ event times under independent Poisson processes with rates $r_i(t)$ is itself distributed as the $1^\text{st}$ event time under a single Poisson processes with rate $r(t)\equiv\sum_{i=1}^n r_i(t)$, i.e., in this case $\mathscr{S}(t,\tau)= \prod_{i=0}^{n}\mathcal{S}_{r_i}^{1}(t,\tau)=\mathcal{S}_{r}^{1}(t,\tau)$. Additionally, if particles leaving state X are then distributed across the recipient states Y$_\ell$ as described above, then this scenario is equivalent to the proportional outputs case described in Theorem \ref{Th:tomultibasic} with a dwell time that follows a Poisson process 1$^\text{st}$  event time distribution with rate $r(t)\equiv\sum_{i=1}^n r_i(t)$ and a probability vector $p_\ell = \sum_{i=1}^n p_{i\ell}(t)r_i(t)/r(t)$, since $P(T=T_i)=r_i(T)/r(T)$. (This mean field equivalence of these two cases is detailed in \S \ref{sec:equivcompmulti}.) Thus, the natural partitioning of X in this case is into sub-states with dwell times that follow \textit{iid} $1^\text{st}$ event time distributions with rate $r(t)\equiv \sum_{i=1}^{N} r_i(t)$.

We may now describe the mean field ODEs for the general case above using the following notation. To index the sub-states of X, consider the $i^\text{th}$ Poisson process and its $k_i^\text{th}$ event time distribution which defines the distribution of $T_i$. Let $a_i\leq k_i$ denote the event number a particle is awaiting under the $i^\text{th}$ Poisson process. Then we can describe the particle's progress through X according to its progress along each of these $n$ Poisson processes using the index vector $\alpha \in \mathcal{K}$, where \begin{equation}\label{eq:Kset1}\mathcal{K}=\{(a_1,a_2,\ldots,a_n)\;|\;a_j\in\{1,\ldots,k_j\}\}.\end{equation} We will also use the notation $\mathcal{K}_{i}\subset\mathcal{K}$ which are the subset of indices where $a_i=k_i$ (where we think of particles in these sub-states as being poised to reach the $k_i^\text{th}$ event related to the $i^\text{th}$ Poisson process, and thus poised to transition out of state X). 

To extend Lemma \ref{L:dh} for these distributions, let $m_i(t,\tau)=\exp\big(-\int_{\tau}^t r_i(s)ds\big)$ and define 

\begin{equation}
u(t,\tau,\alpha) \equiv \prod_{i=1}^{n} e^{-m_i(t,\tau)} \frac{m_i(t,\tau)^{a_i-1}}{(a_i-1)!}. \label{eq:ucomp}
\end{equation} Note that $\prod_{i=1}^{n} h_{r_i}^{a_i}(t,\tau) = u(t,\tau,\alpha)\prod_{i=1}^n r_i(t)$ (c.f. Lemma \ref{L:dh}) and  $u(\tau,\tau,\alpha)=1$ if $\alpha=(1,\ldots,1)$ and $u(\tau,\tau,\alpha)=0$ otherwise. Then applying eq. \eqref{eq:sktheventdist} to $\mathscr{S}(t,\tau)$ (i.e., eq. \eqref{eq:scomp}) it follows that the survival function for the distribution of time spent in X in this case (c.f. eqs. \ref{eq:u} and \eqref{eq:sktheventdist}) can be written 

\begin{equation} \label{eq:scompsum} \mathscr{S}(t,\tau) = \,\sum_{\alpha\in\mathcal{K}}u(t,\tau,\alpha).\end{equation}  We will also refer to the quantities $u$ and $\mathscr{S}$ with the $j^\text{th}$ element of each product removed using the notation 

\begin{subequations} \label{eq:compsubj} \begin{align}
	u_{\setminus j}(t,\tau,\alpha)\equiv &\; \prod_{i=1, i\neq j}^n e^{-m_i(t,\tau)} \frac{m_i(t,\tau)^{a_i-1}}{(a_i-1)!} \label{eq:compusubj} \\ 
	\mathscr{S}_{\setminus j}(t,\tau) \equiv  &\;  \sum_{\alpha\in\mathcal{K}_j} u_{\setminus j}(t,\tau,\alpha). \label{eq:compSsubj}
	\end{align} \end{subequations}  

This brings us to the following lemma, which generalizes Lemma \ref{lem:smith711} and Lemma \ref{L:dh} to distributions that are the minimum of $n$ different (independent) Poisson process event times. As with the above lemmas, Lemma \ref{L:df} will allow one to partition X into sub-states corresponding to each of the event indices in $\mathcal{K}$ describing the various stages of progress along each Poisson process prior to the first of them reaching the target event number.
\begin{lemma}\label{L:df} For $u$ as defined in eq. \eqref{eq:ucomp}, differentiation with respect to $t$ yields 
	
	\begin{equation} \label{eq:smith711fcomp}
	\frac{d}{dt}u(t,\tau,\alpha) = \; \sum_{j=1}^n r_j(t)\,u(t,\tau,\alpha_{j,-1})\,\1{a_j>1}(\alpha) - \sum_{j=1}^n r_j(t) u(t,\tau,\alpha)
	\end{equation} where the notation $\alpha_{j,-1}$ denotes the index vector generated by decrementing the $j^\text{th}$ element of $\alpha$, $a_j$ (assuming $a_j>1$; for example, $\alpha_{2,-1}=(a_1,a_2-1,\ldots,a_n)$), and the indicator function $\1{a_j>1}(\alpha)$ is 1 if $a_j>1$ and 0 otherwise. \\
\end{lemma}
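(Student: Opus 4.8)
The plan is to exploit the product structure hidden in definition \eqref{eq:ucomp}. Writing $m_i(t,\tau)=\int_\tau^t r_i(s)\,ds$ so that $\frac{d}{dt}m_i(t,\tau)=r_i(t)$, each factor of $u$ is exactly the single-process function $u_{r_i}^{a_i}(t,\tau)$ of eq. \eqref{eq:u}, so that $u(t,\tau,\alpha)=\prod_{i=1}^{n}u_{r_i}^{a_i}(t,\tau)$. This reduces the whole computation to differentiating a product of $n$ factors, each of which already satisfies the recursion established in Lemma \ref{L:dh}; no new analytic input is needed.

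First I would apply the ordinary product rule, $\frac{d}{dt}u=\sum_{j=1}^{n}\big(\tfrac{d}{dt}u_{r_j}^{a_j}\big)\prod_{i\neq j}u_{r_i}^{a_i}$. Next I would record both branches of Lemma \ref{L:dh} as the single identity
\begin{equation}
\frac{d}{dt}u_{r_j}^{a_j}(t,\tau)=r_j(t)\,u_{r_j}^{a_j-1}(t,\tau)\,\1{a_j>1}(\alpha)-r_j(t)\,u_{r_j}^{a_j}(t,\tau),
\end{equation}
where the indicator absorbs the boundary case $a_j=1$, for which eq. \eqref{eq:smith711NHPPa} supplies only the decay term and no lower-index factor exists.

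The key bookkeeping step is then to observe that multiplying the substituted factor $u_{r_j}^{a_j-1}$ back by the untouched factors $\prod_{i\neq j}u_{r_i}^{a_i}$ reproduces precisely $u(t,\tau,\alpha_{j,-1})$, since decrementing the $j^\text{th}$ component of $\alpha$ is exactly the operation of lowering the $j^\text{th}$ shape by one while leaving every other factor fixed; likewise the decay term times the remaining factors equals $u(t,\tau,\alpha)$. Substituting these two identifications into the product-rule expansion and summing over $j$ yields \eqref{eq:smith711fcomp} directly. The only real obstacle is notational: one must verify that the index-decrement operation $\alpha_{j,-1}$ is the correct combinatorial partner of the factor substitution, and that the indicator $\1{a_j>1}(\alpha)$ correctly unifies the $a_j=1$ and $a_j\ge 2$ cases of Lemma \ref{L:dh}. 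Beyond this matching there is no analytic difficulty.
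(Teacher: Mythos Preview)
Your proposal is correct and follows essentially the same approach as the paper: both apply the product rule to the factored form of $u$ and identify the resulting terms with $u(t,\tau,\alpha_{j,-1})$ and $u(t,\tau,\alpha)$. The only cosmetic difference is that you invoke Lemma~\ref{L:dh} for the derivative of each factor, whereas the paper recomputes that derivative explicitly inline; the underlying computation is identical.
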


\begin{proof} Using the definition of $u$ in eq. \eqref{eq:ucomp} above, it follows that  
	
	\begin{equation} \begin{split}
	\frac{d}{dt}&u(t,\tau,\alpha) =\; \frac{d}{dt}\prod_{i=1}^{n} e^{-m_i(t,\tau)} \frac{m_i(t,\tau)^{a_i-1}}{(a_i-1)!}\\
	&=\sum_{j=1}^n\bigg(\prod_{\substack{i=1\\ i\neq j}}^n e^{-m_i(t,\tau)} \frac{m_i(t,\tau)^{a_i-1}}{(a_i-1)!}\bigg)\bigg[-r_j(t)e^{-m_j(t,\tau)} \frac{m_j(t,\tau)^{a_j-1}}{(a_j-1)!}  \\
	&\qquad + \1{a_j>1}(\alpha)\,r_j(t)\,e^{-m_j(t,\tau)}\frac{m_j(t,\tau)^{a_j-2}}{(a_j-2)!}\bigg] \\
	&=\sum_{j=1}^n -r_j(t)\,\prod_{i=1}^{n} e^{-m_i(t,\tau)} \frac{m_i(t,\tau)^{a_i-1}}{(a_i-1)!} \; + \\
	&\quad \sum_{j=1}^n \1{a_j>1}(\alpha)\,r_j(t)\,e^{-m_j(t,\tau)}\frac{m_j(t,\tau)^{a_j-2}}{(a_j-2)!} \prod_{\substack{i=1\\i\neq j}}^n e^{-m_i(t,\tau)} \frac{m_i(t,\tau)^{a_i-1}}{(a_i-1)!} \\
	&= \sum_{j=1}^n r_j(t)\,u(t,\tau,\alpha_{j,-1})\,\1{a_j>1}(\alpha) - \sum_{j=1}^n r_j(t) u(t,\tau,\alpha).
	\end{split}
	\end{equation}
	%\qed 
\end{proof}

The next theorem details the LCT extension that follows from Lemma \ref{L:df}.

\begin{theorem}[Extended LCT for dwell times given by competing Poisson processes]\label{Th:comp}
	Consider a continuous time dynamical system model of mass transitioning among multiple states, with inflow rate $\mathcal{I}_X(t)$ to a state X. The distribution of time spent in state X (call it $T$) is the minimum of $n$ random variables, i.e., $T=\min_{i}(T_i)$, $i=1,\ldots,n$, where $T_i$ are either Erlang($r_i,k_i$) distributed or follow the more general (nonhomogeneous) Poisson process $k_i^\text{th}$ event time distribution with rate $r_i(t)$. Assume particles leaving X can enter one of $m$ states Y$_\ell$, $\ell=1,\ldots,m$. If a particle leaves X at time $T_i$ (i.e., $T_i$ occurred first, so $T=T_i$), and then the particle transitions into state $Y_\ell$ with probability $p_{i\ell}(T)$. Let $x(t)$, and $y_\ell(t)$ be the amount in each state, respectively, at time $t$, and assume non-negative initial conditions. %Further assume inflow rates $\mathcal{I}_i(t)$ to $Y_i$ are determined by integrable non-negative Nonhomogeneous Poisson processes.
	
	The mean field integral equations for this scenario, for $\ell=1,\ldots,m$ and $i=1,\ldots,n$, are 
	
	\begin{subequations}\label{eq:compIE} \begin{align} 
		x(t) =&\; x_0\,\mathscr{S}(t,0) + \int^t_{0} \mathcal{I}_X(s)\,\mathscr{S}(t,s)ds \label{eq:compIEx}\\
		\begin{split}	y_\ell(t) =&\; y_{\ell}(0)S_\ell(t,0) +  \int^t_{0} \bigg(\mathcal{I}_\ell(\tau) + \sum_{i=1}^n p_{i\ell} \bigg( x_0\,\mathscr{S}_{\setminus i}(\tau,0)\,h_{r_i}^{k_i}(\tau,0) \; + \\ 
		& \qquad \int_{0}^{\tau} \mathcal{I}_X(s) \mathscr{S}_{\setminus i}(\tau,s)\,h_{r_i}^{k_i}(\tau,s)  ds\bigg)\bigg)S_\ell(t,\tau)d\tau. \label{eq:compIEy}
		\end{split}
		\end{align} \end{subequations}
	
	Equations \eqref{eq:compIE} above are equivalent to 
	
	\begin{subequations}\label{eq:chain3}\begin{align} 
		\frac{d}{dt}x_{(1,\ldots,1)}(t) =&\; \mathcal{I}_X(t) - r(t)\,x_{(1,\ldots,1)}(t),  \label{eq:chain3a} \\
		\frac{d}{dt}x_\alpha(t) =&\;\sum_{i=1}^{n} r_i(t)\,x_{\alpha_{i,-1}}(t)\,\1{a_i>1}(\alpha) \;-\; r(t) \, x_{\alpha}(t) \label{eq:chain3b} \\
		\begin{split} \label{eq:chain3c} y_\ell(t) =&\; y_{\ell}(0) S_\ell(t,0)  \\ & \; + \int^t_{0} \bigg(\mathcal{I}_\ell(\tau) + \sum_{i=1}^n p_{i\ell}(\tau)\sum_{\alpha\in\mathcal{K}_i} r_i(t)\,x_\alpha(\tau) \bigg)S_\ell(t,\tau)d\tau   \end{split}
		\end{align}\end{subequations}
	for all $\alpha\in\mathcal{K}\setminus(1,\ldots,1)$,  $r(t)=\sum_{i=1}^{n}r_i(t)$, $x(t)=\sum_{\alpha\in\mathcal{K}} x_\alpha(t)$, and 
	
	\begin{equation} \label{eq:xalpha} x_\alpha(t) = x_0\,u(t,0,\alpha) + \int_0^t\mathcal{I}_X(s)\,u(t,s,\alpha)\,ds. 	\end{equation}
	
	The $y_\ell(t)$ equations \eqref{eq:chain3c} may be further reduced to a system of ODEs, e.g., via Corollary \ref{Th:baseCorr}.

\end{theorem}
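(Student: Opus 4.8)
The plan is to follow the same three-part template used in the proofs of Theorems \ref{Th:base}, \ref{Th:baseNHPP}, and \ref{Th:splitfirst}: first verify that the $x$-integral equation \eqref{eq:compIEx} decomposes into the sum of the sub-state amounts $x_\alpha$ defined in \eqref{eq:xalpha}; second, differentiate each $x_\alpha$ to recover the chain of ODEs \eqref{eq:chain3a}--\eqref{eq:chain3b}; and third, rewrite the $y_\ell$ integral equation \eqref{eq:compIEy} in terms of the $x_\alpha$ to obtain \eqref{eq:chain3c}. For the first part, I would substitute the survival-function expansion $\mathscr{S}(t,\tau)=\sum_{\alpha\in\mathcal{K}}u(t,\tau,\alpha)$ from \eqref{eq:scompsum} into \eqref{eq:compIEx} and interchange the finite sum with the integral, so that each summand is exactly $x_0\,u(t,0,\alpha)+\int_0^t\mathcal{I}_X(s)\,u(t,s,\alpha)\,ds = x_\alpha(t)$, giving $x(t)=\sum_{\alpha\in\mathcal{K}}x_\alpha(t)$.

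For the second part, I would differentiate \eqref{eq:xalpha} using the Leibniz rule together with Lemma \ref{L:df}. The boundary term $\mathcal{I}_X(t)\,u(t,t,\alpha)$ contributes $\mathcal{I}_X(t)$ only for the base index $\alpha=(1,\ldots,1)$ and vanishes otherwise, since $u(\tau,\tau,\alpha)=0$ for $\alpha\neq(1,\ldots,1)$. Applying Lemma \ref{L:df} to the $u$-derivatives inside both the $x_0$ term and the integrand, then regrouping, each decremented index $\alpha_{i,-1}$ (present only when $a_i>1$) reassembles into $r_i(t)\,\1{a_i>1}(\alpha)\,x_{\alpha_{i,-1}}(t)$, while the $-r(t)\,u(t,\cdot,\alpha)$ terms reassemble into $-r(t)\,x_\alpha(t)$; this yields \eqref{eq:chain3a} for the base index (where the first sum drops out) and \eqref{eq:chain3b} in general. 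The stated initial conditions are immediate from $x_{(1,\ldots,1)}(0)=x_0$ and $x_\alpha(0)=0$ for $\alpha\neq(1,\ldots,1)$.

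The main obstacle, and the only part requiring genuine combinatorial bookkeeping, is the third part. The key identity to establish is that, for each fixed index $i$, \[ \sum_{\alpha\in\mathcal{K}_i} r_i(\tau)\,u(\tau,s,\alpha) \;=\; h_{r_i}^{k_i}(\tau,s)\,\mathscr{S}_{\setminus i}(\tau,s). \] I would prove this by noting that every $\alpha\in\mathcal{K}_i$ has $a_i=k_i$, so the $i^\text{th}$ factor of the product defining $u(\tau,s,\alpha)$ in \eqref{eq:ucomp} is exactly $e^{-m_i(\tau,s)}\,m_i(\tau,s)^{k_i-1}/(k_i-1)!$, which upon multiplication by $r_i(\tau)$ is precisely $h_{r_i}^{k_i}(\tau,s)$; the remaining $n-1$ factors are by definition $u_{\setminus i}(\tau,s,\alpha)$ from \eqref{eq:compusubj}. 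Summing over $\alpha\in\mathcal{K}_i$ and invoking the definition of $\mathscr{S}_{\setminus i}$ in \eqref{eq:compSsubj} gives the identity. Substituting it into the $\sum_i p_{i\ell}(\cdots)$ term of \eqref{eq:compIEy}, and recognizing $x_0\,u(\tau,0,\alpha)+\int_0^\tau\mathcal{I}_X(s)\,u(\tau,s,\alpha)\,ds=x_\alpha(\tau)$ exactly as in the first part, collapses the net-input-rate integrand into $\mathcal{I}_\ell(\tau)+\sum_{i=1}^n p_{i\ell}(\tau)\sum_{\alpha\in\mathcal{K}_i}r_i(\tau)\,x_\alpha(\tau)$, which is the content of \eqref{eq:chain3c}. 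Finally, the further reduction of \eqref{eq:chain3c} to ODEs follows from Corollary \ref{Th:baseCorr} according to the assumed $Y_\ell$ dwell-time survival functions $S_\ell$.
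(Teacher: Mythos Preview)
Your proposal is correct and follows essentially the same three-part structure as the paper's proof: expand $\mathscr{S}$ via \eqref{eq:scompsum} to get $x(t)=\sum_{\alpha}x_\alpha(t)$, differentiate each $x_\alpha$ using Leibniz and Lemma~\ref{L:df} to obtain \eqref{eq:chain3a}--\eqref{eq:chain3b}, and then use the identity $\mathscr{S}_{\setminus i}(\tau,s)\,h_{r_i}^{k_i}(\tau,s)=\sum_{\alpha\in\mathcal{K}_i}r_i(\tau)\,u(\tau,s,\alpha)$ to collapse \eqref{eq:compIEy} into \eqref{eq:chain3c}. If anything, your justification of that last identity (factoring out the $i^\text{th}$ term of the product and summing the rest via \eqref{eq:compSsubj}) is spelled out more explicitly than in the paper.
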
 % END Th:comp

\begin{proof}	
	Substituting eq. \eqref{eq:scompsum} into eq. \eqref{eq:compIEx} yields 	
	
	\begin{equation}\label{eq:compODE}\begin{split}
	x(t) =&  x_0\,\sum_{\alpha\in\mathcal{K}} u(t,0,\alpha) + \int^t_{0} \mathcal{I}_X(s)\,\sum_{\alpha\in\mathcal{K}}u(t,s,\alpha)ds \\
	=& \sum_{\alpha\in\mathcal{K}} \bigg( x_0\, u(t,0,\alpha) + \int^t_{0} \mathcal{I}_X(s)\, u(t,s,\alpha)ds \bigg) 
	= \sum_{\alpha\in\mathcal{K}} x_\alpha(t).
	\end{split}
	\end{equation}
	
	Differentiating \eqref{eq:xalpha} yields equations eqs. \eqref{eq:chain3a} and \eqref{eq:chain3b} as follows. First, if $\alpha=(1,\ldots,1)$ then by Lemma \ref{L:df} 
	
	\begin{equation} \begin{split}
	\frac{d}{dt}{x_{(1,\ldots,1)}}(t) =&\; -x_0\,\sum_{i=1}^n r_i(t)\,u(t,0,\alpha) \\ & \quad - \sum_{i=1}^n r_i(t)\,\int^t_{0} \mathcal{I}_X(s)\, u(t,s,\alpha)ds + \mathcal{I}_X(t) \\
	=&\; \mathcal{I}_X(t) - \sum_{i=1}^n r_i(t)\, x_{(1,\ldots,1)}(t).
	\end{split}
	\end{equation} Next, if $\alpha$ has any $a_i>1$, differentiating eq. \eqref{eq:xalpha} and applying Lemma \ref{L:df} yields  
	
	\begin{equation} \begin{split}
	\frac{d}{dt}&{x_\alpha}(t) =\; x_0\, \frac{d}{dt}u(t,0,\alpha) + \int^t_{0} \mathcal{I}_X(s)\; \frac{d}{dt}u(t,s,\alpha) \,ds \\
	&=\; x_0\, \bigg( \sum_{i=1}^{n} r_i(t)u(t,0,\alpha_{i,-1})\,\1{a_i>1}(\alpha) - \sum_{i=1}^{n}r_i(t)u(t,\alpha) \bigg) \; + \\
	&\; \int^t_{0} \mathcal{I}_X(s)\bigg( \sum_{i=1}^{n} r_i(t)\,u(t,s,\alpha_{i,-1})\,\1{a_i>1}(\alpha) - \sum_{i=1}^{n}r_i(t)\,u(t,s,\alpha)\bigg)\,ds  \\
	&=\;  \sum_{i=1}^{n} r_i(t)\,x_{\alpha_{i,-1}}(t)\,\1{a_i>1}(\alpha) - \sum_{i=1}^{n} r_i(t)\,x_{\alpha}(t)
	\end{split}
	\end{equation}
	
	Note that, by the definitions of $x_\alpha$ and $u$ that initial condition $x(0)=x_0$ becomes $x_{(1,\ldots,1)}(0)=x_0$ and $x_\alpha(0)=0$ for the remaining $\alpha\in\mathcal{K}$.
	
	The $y_\ell(t)$ equations \eqref{eq:compIEy} become \eqref{eq:chain3c}, where  $\mathcal{K}_i=\{\alpha\;|\;\alpha\in\mathcal{K},\;a_i=k_i\}$, by substituting eqs. \eqref{eq:compsubj}, eq. \eqref{eq:xalpha},  and  $\mathscr{S}_{\setminus i}(t,\tau)\,h_{r_i}^{k_i}(t,\tau) =  \sum_{\alpha \in \mathcal{K}_i} r_i(t)\,u(t,\tau,\alpha)$, which yields 
	
	\begin{equation} \begin{split}
	& x_0\,\mathscr{S}_{\setminus i}(\tau,0)h_{r_i}^{k_i}(\tau,0) +  \int_{0}^{\tau} \mathcal{I}_X(s) \mathscr{S}_{\setminus i}(\tau,s)h_{r_i}^{k_i}(\tau,s)  ds \; = \\
	& \quad x_0 \sum_{\alpha\in\mathcal{K}_i} r_i(t)\,u(\tau,0,\alpha) +  \int_{0}^{\tau} \mathcal{I}_X(s) \sum_{\alpha\in\mathcal{K}_i} r_i(t)\,u(\tau,s,\alpha)  ds \; =  \\
	& \quad r_i(t) \sum_{\alpha\in\mathcal{K}_i} \bigg(x_0\, u(\tau,0,\alpha) +  \int_{0}^{\tau} \mathcal{I}_X(s) u(\tau,s,\alpha)  ds \bigg) =\;  \sum_{\alpha\in\mathcal{K}_i}r_i(t) \, x_\alpha(\tau).  \\
	\end{split}
	\end{equation} %\qed 
\end{proof}

\begin{figure}[!hbt]
	\begin{minipage}[c]{0.43\textwidth}
		\centerline{\includegraphics[trim=0 5 0 15,clip,width=\textwidth]{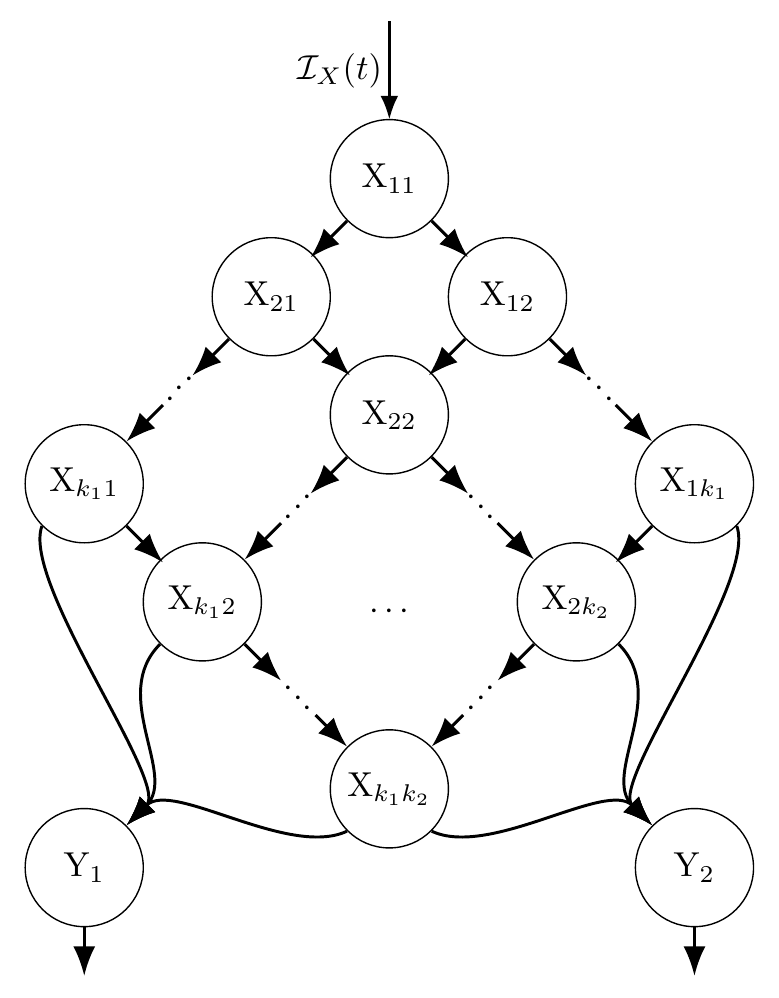}}
	\end{minipage}\hfill
	\begin{minipage}[c]{0.55\textwidth}
		\caption{The sub-state diagram (cf. Fig. \ref{fig:tomulti}) resulting from the application of Theorem \ref{Th:comp} (Extended LCT for dwell times given by competing Poisson processes) to the scenario detailed in Example \ref{ex:multimin}, where the X dwell time distribution is the minimum of two Erlang random variables $T_i\sim$Erlang($r_i,k_i$), $i=1,2$, which can be thought of as event time distribution under two homogeneous Poisson processes as detailed in the main text. We here assume that whichever of these occurs first determines whether particles leaving X transition to Y$_1$ or Y$_2$, respectively. } \label{fig:tomultimin}
	\end{minipage}
	
\end{figure}

\begin{example}\label{ex:multimin} Suppose $T=\min(T_1,T_2)$ where $T_1$ and $T_2$ are the $k_1^\text{th}$ and $k_2^\text{th}$ event time distributions under independent Poisson processes (call these PP1 and PP2) with rates $r_1(t)$ and $r_2(t)$, respectively (see Fig. \ref{fig:tomultimin}). Assume that, upon leaving X, particles transition to Y$_1$ if $T=T_1$ or to Y$_2$ if $T=T_2$. Then by Theorem \ref{Th:comp} above, we can partition X into sub-states defined by which event (under each Poisson process) particles are awaiting. Upon entry into X, all particles enter a sub-state we will denote X$_{1,1}$ where they each await the $1^\text{st}$ events under PP1 or PP2 (recall each particle has its own independent PP1 and PP2 processes governing its transition out of X, and these are \textit{iid} across particles). If the next event to occur for a given particle is from PP1, the particle transitions to X$_{2,1}$ where it awaits either event number 2 from PP1 or event number 1 from PP2 (hence the subscript notation X$_{2,1}$). Likewise, if PP2's first event occurs before PP1's first event, the particle would transition to X$_{1,2}$ where it would await event 1 under PP1, or event 2 under PP2. Particles would leave these two states to either X$_{2,2}$, Y$_1$, or Y$_2$ depending on which event occurs next.  Under these assumptions, and also assuming that $k_1=k_2=2$ and the dwell times in Y$_i$ are exponential with rate $\mu$, then the corresponding mean field equations (using $r(t) = r_1(t)+r_2(t)$) are
	
	\begin{subequations}\label{eq:comp2mf}\begin{align}
		\frac{dx_{11}}{dt} =&\; \mathcal{I}_X(t) - r(t)\,x_{11}(t) \label{eq:comp2mfa}\\
		\frac{dx_{21}}{dt} =&\; r_1(t)\,x_{11}(t) - r(t)\,x_{21}(t) \label{eq:comp2mfb} \\
		\frac{dx_{12}}{dt} =&\; r_2(t)\,x_{11}(t) - r(t)\,x_{12}(t) \label{eq:comp2mfc}\\
		\frac{dx_{22}}{dt} =&\; r_1(t)\,x_{12}(t) + r_2(t)\,x_{21}(t) - r(t)\,x_{12}(t) \label{eq:comp2mfd}\\
		\frac{dy_{1}}{dt} =&\; r_1(t)\,x_{22}(t) - \mu(t)\,y_{1}(t) \label{eq:comp2mfe}\\
		\frac{dy_{2}}{dt} =&\; r_2(t)\,x_{22}(t) - \mu(t)\,y_{2}(t). \label{eq:comp2mff}
		\end{align} \end{subequations}
	
	It's worth pointing out that, in this example, the dwell times for all such sub-states of X are all, in a sense, identically distributed (note the per capita loss rates are all $r(t)$ in eqs. \eqref{eq:comp2mfa}-\eqref{eq:comp2mfd}, and recall the weak memorylessness property of Poisson process $1^\text{st}$ event time distributions discussed in \S \ref{sec:weakmem}). That is, if particles enter one of these sub-states at time $\tau$, it and all other particles in that state at time $\tau$ have a remaining amount of time in that state that follows a 1$^\text{st}$  event time distributions under a Poisson process with rate $r(t)=r_1(t)+r_2(t)$. This is simply a slight generalization of the familiar fact that the minimum of $n$ independent exponentially distributed random variables (with respective rates $r_i$) is itself an exponential random variable (with rate $r\equiv \sum_{i=1}^n r_i$). 
	
	The next section clarifies how this observation about the X sub-state dwell time distributions generalizes to more than two competing Poisson processes, and below (in \S \ref{sec:glct}) we will see how this is a key component of the GLCT. 
	
\end{example}

\subsubsection{Mean field equivalence of proportional outputs \& competing Poisson processes} \label{sec:equivcompmulti}

The scenarios described in \S \ref{sec:tomultibasic} and \S\ref{sec:tomultimin}, which are based on different underlying stochastic assumptions, can lead to equivalent mean field equations when the assumed dwell times all follow $1^\text{st}$ event time distributions. This equivalence is detailed in the following theorem, and is an important aspect of the GLCT detailed in \S \ref{sec:glct}.

\begin{theorem}[Equivalence of proportional outputs \& competing Poisson processes]\label{Th:equivcompmulti} Consider the special case of Theorem \ref{Th:comp} (the Extended LCT for competing Poisson processes) where X has a dwell time given by $T=\min_i T_i$, where each $T_i$ is a Poisson process $1^\text{st}$ event time with rate $r_i(t)$, $i=1,\ldots,n$ and particles transition to Y$_\ell$ with probability $p_{i\ell}(T)$ when $T=T_i$. The corresponding mean field model is equivalent to the special case of Theorem \ref{Th:tomultibasic} (the Extended LCT for multiple outputs) where the X dwell time is a Poisson process 1$^\text{st}$  event time distribution with rate $r(t)=\sum_{i=1}^n r_i(t)$, and the transition probability vector for leaving X and entering state Y$_\ell$ is given by $p_\ell(t)=\sum_{i=1}^n p_{i\ell}(t)\,r_i(t)/r(t)$.
\end{theorem}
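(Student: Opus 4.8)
The plan is to show that the two mean field systems produce identical equations for both the focal state X and all recipient states $Y_\ell$. Since the dwell time in X is $T=\min_i T_i$ where each $T_i$ is a Poisson process $1^\text{st}$ event time (so all $k_i=1$), the index set $\mathcal{K}$ from eq. \eqref{eq:Kset1} collapses to the single element $(1,\ldots,1)$, and there is exactly one sub-state of X. By Theorem \ref{Th:comp} (specialized to all $k_i=1$) the X dynamics are governed by eq. \eqref{eq:chain3a} alone, namely $\frac{d}{dt}x(t)=\mathcal{I}_X(t)-r(t)\,x(t)$ with $r(t)=\sum_{i=1}^n r_i(t)$, and there is no second equation \eqref{eq:chain3b} since no $\alpha$ has any $a_i>1$. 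This is precisely the X dynamics obtained from Theorem \ref{Th:tomultibasic} in the $k=1$ case, where X has a single Poisson process $1^\text{st}$ event time dwell distribution with rate $r(t)$. So the X equations match immediately, and the whole problem reduces to matching the $Y_\ell$ equations.

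For the recipient states, the idea is to compare eq. \eqref{eq:chain3c} from Theorem \ref{Th:comp} against eq. \eqref{eq:delaysplitODEsc} from Theorem \ref{Th:tomultibasic}. In the competing-processes formulation, the net input rate into $Y_\ell$ at time $\tau$ is $\sum_{i=1}^n p_{i\ell}(\tau)\sum_{\alpha\in\mathcal{K}_i}r_i(t)\,x_\alpha(\tau)$. When all $k_i=1$, each set $\mathcal{K}_i$ (indices with $a_i=k_i$) equals the entire singleton $\mathcal{K}=\{(1,\ldots,1)\}$, so the inner sum over $\alpha\in\mathcal{K}_i$ collapses to the single term $r_i(t)\,x_{(1,\ldots,1)}(t)=r_i(t)\,x(t)$. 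Thus the net input into $Y_\ell$ becomes $\sum_{i=1}^n p_{i\ell}(\tau)\,r_i(\tau)\,x(\tau)$. In the proportional-outputs formulation, the corresponding input is $p_\ell(\tau)\,r(\tau)\,x_k(\tau)=p_\ell(\tau)\,r(\tau)\,x(\tau)$ (again $x_k=x$ since $k=1$).

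**The key algebraic identity** is then that these two input rates agree, which follows by substituting the claimed definition $p_\ell(t)=\sum_{i=1}^n p_{i\ell}(t)\,r_i(t)/r(t)$:
\begin{equation}
p_\ell(\tau)\,r(\tau)\,x(\tau) = \left(\sum_{i=1}^n \frac{p_{i\ell}(\tau)\,r_i(\tau)}{r(\tau)}\right) r(\tau)\,x(\tau) = \sum_{i=1}^n p_{i\ell}(\tau)\,r_i(\tau)\,x(\tau).
\end{equation}
Since the net input rates into each $Y_\ell$ coincide and the survival-function convolution structure $S_\ell(t,\tau)$ is identical in both theorems, the integral equations \eqref{eq:chain3c} and \eqref{eq:delaysplitODEsc} are identical, and hence so are their ODE reductions via Corollary \ref{Th:baseCorr}. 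One should also verify that the probabilistic interpretation is consistent, i.e., that $p_\ell(t)$ as defined is a valid generalized Bernoulli probability vector: nonnegativity is immediate, and $\sum_\ell p_\ell(t)=\sum_\ell\sum_i p_{i\ell}(t)r_i(t)/r(t)=\sum_i (r_i(t)/r(t))\sum_\ell p_{i\ell}(t)=\sum_i r_i(t)/r(t)=1$, using $\sum_\ell p_{i\ell}(t)=1$ for each $i$.

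**The main obstacle** is not any hard estimate but rather the bookkeeping of confirming that every structural object in Theorem \ref{Th:comp} degenerates correctly when all $k_i=1$ — in particular that $\mathcal{K}$, each $\mathcal{K}_i$, and the sub-state family all collapse to a single common state, so that the double sum in eq. \eqref{eq:chain3c} genuinely reduces to the weighted single term above. The probabilistic content, already foreshadowed in \S\ref{sec:tomultimin} via the observation $P(T=T_i)=r_i(T)/r(T)$, makes the identity intuitive; the proof's job is to verify that this intuition is exactly what the two ODE systems encode. Once the collapse of the index sets is established, the remainder is the short substitution displayed above.
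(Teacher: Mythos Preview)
Your proposal is correct and follows essentially the same approach as the paper's own proof: specialize Theorem~\ref{Th:comp} to all $k_i=1$, observe that $\mathcal{K}$ (and each $\mathcal{K}_i$) collapses to the singleton $\{(1,\ldots,1)\}$ so that eq.~\eqref{eq:chain3a} alone governs $x(t)$, and then verify that the input rate into each $Y_\ell$ in eq.~\eqref{eq:chain3c} matches that of eq.~\eqref{eq:delaysplitODEsc} via the identity $\sum_i p_{i\ell}(\tau)r_i(\tau)=p_\ell(\tau)r(\tau)$. Your version is, if anything, slightly more careful than the paper's in explicitly checking that $\sum_\ell p_\ell(t)=1$ and in spelling out the collapse of the index sets.
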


\begin{proof} First, in this case $\mathscr{S}(t,\tau) = \prod_{i=0}^{n}\mathcal{S}_{r_i}^{1}(t,\tau)=\mathcal{S}_{r}^{1}(t,\tau)$. Since all $k_i=1$, the probability that $T=T_i$ is $r_i(T)/r(T)$, thus the probability that a particle leaving X at $t$ goes to Y$_\ell$ is $p_\ell(t)=\sum_{i=1}^n \frac{r_i(t)}{r(t)}p_{i\ell}(t)$. Substituting the above equalities into the mean field eqs. \eqref{eq:chain3a} (where there's only one possible index in $\mathcal{K}=\{(1,1,\ldots,1)\}$) and \eqref{eq:chain3c} gives  
	
	\begin{subequations}\begin{align}
		\frac{d}{dt}x(t) =&\; \mathcal{I}_X(t) - r(t)\,x(t)\\
		y_j(t) =&\; y_{j}(0)S_j(t,0)\; +  \int^t_{0} \bigg(\mathcal{I}_j(\tau) + r(t)\,p_j(\tau)\,x(\tau)\bigg) S_j(t,\tau) d\tau 
		\end{align}\end{subequations} which are the mean field equations for the aformentioned special case of Theorem \ref{Th:tomultibasic}. 
	%\qed 
\end{proof}

As we will see in \S\ref{sec:glct}, this equivalence provides some flexibility in simplifying mean field ODEs based on these more complex assumptions about the underlying stochastic state transition models, and allows us to adhere to Poisson process $1^\text{st}$ event time distributions as the building blocks of these generalizations of the LCT.

\subsection{Modeling intermediate state transitions: Reset the clock, or not?}\label{sec:intermediate}

\begin{figure}[tbh]  \begin{minipage}[c]{0.4\textwidth}
		\centerline{\includegraphics[width=\textwidth]{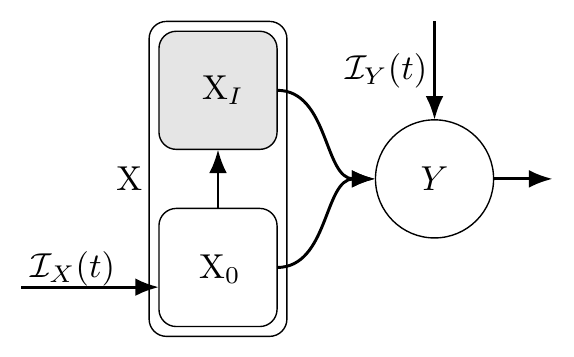}}
	\end{minipage}\hfill
	\begin{minipage}[c]{0.6\textwidth}
		\caption{Should the overall dwell time distribution for state X be ``reset" by the transition from base sub-state X$_0$ to intermediate sub-state X$_I$ (i.e., should the dwell time in state X$_I$ be independent of the time already spent in X$_0$?), or should the X$_I$ dwell time be conditioned on time already spent in X$_0$ so that the X$_0\to$X$_I$ transition does not alter the overall dwell time in state X? How do these different assumptions alter the structure of the corresponding mean field ODEs? We answer these question in \S \ref{sec:intermediate} where we describe how to apply the LCT in scenarios with intermediate states, assuming in \S\ref{sec:intermediatereset} that the dwell time distribution for X$_I$ is independent of the amount of time spent in X$_0$, and assuming in \S\ref{sec:intermediateneutral} that the overall dwell time for X is unaffected by transitions from X$_0$ to X$_I$.}\label{fig:intermediate}
	\end{minipage}
\end{figure}

In this section, we discuss how to apply extensions of the Linear Chain Trick in two similar but distinctly different scenarios where the transition to one or more intermediate sub-states either resets an individual's overall dwell time in state X by assuming the time spent in an intermediate sub-state X$_{I_i}$ is independent of time already spent in X$_0$ (see \S \ref{sec:intermediatereset}), or instead leaves the overall dwell time distribution for X unchanged by conditioning the time spent in intermediate state X$_{I_i}$ is conditioned on time already spent in X$_0$ (see \S \ref{sec:intermediateneutral} and Fig. \ref{fig:intermediate}). 

To illustrate these two cases considered below, consider the simple case illustrated in Fig. \ref{fig:intermediate} where a single intermediate sub-state $X_I$ is being modeled, and particles enter X into sub-state X$_0$ at rate $\mathcal{I}_X(t)$. Let X$=$X$_0\cup $X$_I$. Assume particles subsequently transition out of X$_0$ either to sub-state X$_I$ or they leave state X directly and enter state Y. Assume the distribution of time spent in X$_0$ (in both scenarios) is $T_*=$min($T_0,T_1$) where particles transition to X$_I$ if $T_1<T_0$ (i.e., if $T=T_1$) or to Y if $T_0<T_1$ (where each $T_i$ is the $k_i^\text{th}$ event time under Poisson processes with rates $r_0(t)$ and $r_1(t)$ (see \S\ref{sec:tomultimin} and \S\ref{sec:equivcompmulti}). The distribution of time spent in intermediate state X$_I$, which we'll denote as $T_I$, can either be assumed to be independent of time spent in X$_0$ (i.e., the transition to X$_I$ `resets the clock'; see \S\ref{sec:intermediatereset}) or in the second scenario it is conditional on time already spent in X$_0$, $T_*$, such that the total amount of time spent in X, $T_*+T_I$, is equivalent in distribution to $T_0$ (i.e., the transition to X$_I$ does not change the overall distribution of time spent in X; see \S\ref{sec:intermediateneutral}).

An example of these different assumptions leading to important differences in practice comes from \citet{Feng2016} where individuals infected with Ebola can either leave the infected state (X) directly (either to a recovery or death), or after first transitioning to an intermediate hospitalized state (X$_I$) which needs to be explicitly modeled in order to incorporate a quarantine effect into the rates of disease transmission (i.e., the force of infection should depend on the number of non-quarantined individuals, i.e., X$_0$). As shown in \citet{Feng2016}, the epidemic model output depends strongly upon whether or not it is assumed that moving into the hospitalized sub-state impacts the distribution of time spent in the infected state X. 

In the next two sections, we provide extensions of the LCT that detail the structure of mean field ODEs corresponding to the generalization of these two scenarios, extended to multiple possible intermediate states reached following the outcome of multiple competing Poisson processes, and multiple recipient states.

\subsubsection{Intermediate states that reset dwell time distributions}\label{sec:intermediatereset}

\begin{figure}[htb]
	\centerline{\includegraphics[width=.85\textwidth]{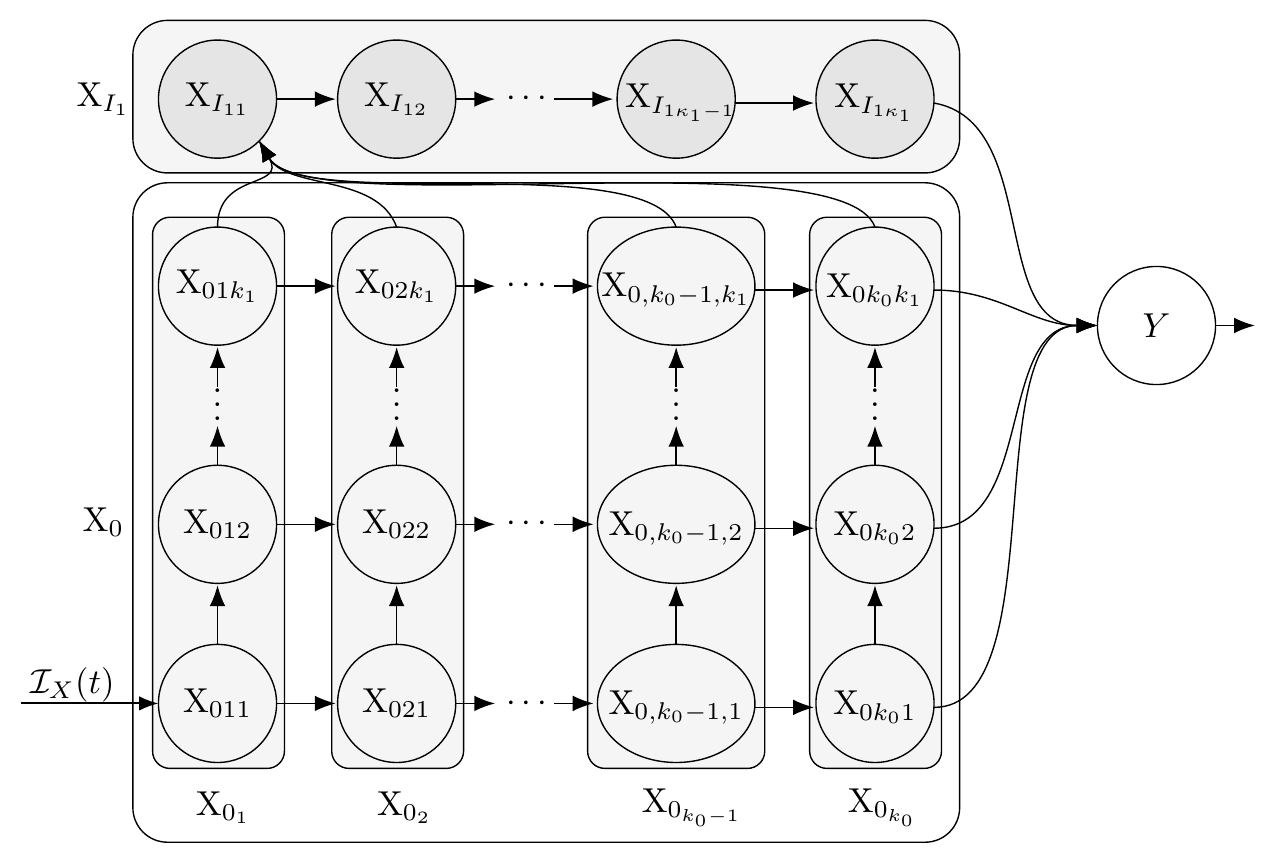}}
	\caption{The linear chain trick (LCT) extension given in Theorem \ref{Th:resettime} applied to the first scenario discussed in Figure \ref{fig:intermediate} in which the dwell time in intermediate state X$_I$ is independent of time already spent in X$_0$, causes the sub-state transitions within X$=$X$_0\cup$X$_{I_1}$ to alter the overall dwell time in state X. Here the dwell time distribution for X$_0$ is the minimum of two independent Erlang distributions. It is assumed that after transitioning to sub-state X$_{I_1}$ the remaining time spent in state X (i.e., the dwell time in state X$_{I_1}$) is Erlang($\varrho_1,\kappa_1$), i.e., independent of time already spent in X$_0$. Sub-states within X$_0$ are the cohorts of particles waiting to advance towards a transition to Y (advance right as events occur under the base Poisson process) or towards X$_{I_1}$ (advance up as events occur along the second Poisson process) as discussed in section \ref{sec:tomultimin}. Sub-states within X$_{I_1}$ represent the usual linear chain of $\kappa_1$ sub-states of X$_{I_1}$, with particles spending an exponentially distributed (rate $\varrho_1$) amount of time in each.  For a detailed treatment of the more general case, see \S\ref{sec:intermediatereset} and Theorem \ref{Th:resettime}. } \label{fig:intermediate1}
\end{figure}

First, we consider the case in which the time spent in the intermediate state X$_I$ is independent of the time already spent in X (i.e., in the base state X$_0$). Note this is arguably the more commonly encountered (implicit) assumption found in ODE models that aren't explicitly derived from a stochastic model and/or mean field integro-differential delay equations. 

The construction of mean field ODEs for this case is a straightforward application of Theorem \ref{Th:comp} from the previous section, combined with the extended LCT with output to multiple states (Theorem \ref{Th:tomultibasic}), as detailed in the following theorem. Here we have extended this scenario to include $M_X$ intermediate sub-states X$_{I_j}$ where the transition to those sub-states from base state X$_0$ is based on the outcome of $N$ competing Poisson process event time distributions ($T_i$), and upon leaving the intermediate states particles transition out of state X into one of $M_Y$ possible recipient states Y$_\ell$.

\begin{theorem}[Extended LCT with dwell time altering intermediate sub-state transitions]\label{Th:resettime} Suppose particles enter X at rate $\mathcal{I}_X(t)$ into a base sub-state X$_0$. Assume particles remain in X$_0$ according to a dwell time distribution given by $T$, the minimum of $N+1$ independent Poisson process $k_i^\text{th}$ event time distributions with rates $r_i(t)$, $i=0,\ldots,N$ (i.e., $T=\min_i(T_i)$). Particles leaving X$_0$ transition to one of $M_X\geq1$ intermediate sub-states X$_{I_i}$ or to one of $M_Y\geq1$ recipient states $Y_\ell$ according to which $T_i=T$.  If $T_0=T$ then the particle leaves X and the probability of transitioning to Y$_\ell$ is $p_{0\ell}(T)$, where $\sum_{\ell=1}^{M_Y} p_{0\ell}(T)=1$. If $T_i=T$ for $i\geq1$ then the particle transitions to X$_{I_j}$ with probability $p_{ij}(T)$, where $\sum_{j=1}^{M_X} p_{ij}(T)=1$. Particles in intermediate state $X_{I_j}$ remain there according to the $\kappa_i^\text{th}$ event times under a Poisson process with rate $\varrho_i(t)$, and then transition to state Y$_\ell$ with probability $q_{j\ell}(t)$, where (for fixed $t$) $\sum_{\ell=1}^{M_Y} q_{j\ell}(t)=1$, and they remain in Y$_\ell$ according to a dwell time with survival function $S_\ell(t,\tau)$. 
	
	In this case the corresponding mean field equations are  
	
	\begin{subequations} \label{eq:reset} \begin{align} 
		\frac{d}{dt}x_{0(1,\ldots,1)}(t) =&\; \mathcal{I}_X(t) - \sum_{i=0}^N r_i(t) \,x_{(1,\ldots,1)}(t) \\ 
		\frac{d}{dt}x_{0\alpha}(t) =&\; \sum_{i=0}^{N} r_i(t) \bigg( x_{0\alpha_{i,-1}}(t)\,\1{a_i>1}(\alpha) - x_{0\alpha}(t)\bigg)   \\
		\frac{d}{dt}x_{I_{j1}}(t)  =&\; \mathcal{I}_{X_{Ij}}(t) + p_{ij}(t)\bigg(\sum_{\alpha\in\mathcal{K}_i} r_i(t)\,x_{0\alpha}(t)\bigg) - \varrho_j(t)\,x_{I_{j1}}(t)  \\
		\frac{d}{dt}x_{I_{jk}}(t)  =&\; \varrho_j(t)\big(x_{I_{j,k-1}}(t) - x_{I_{jk}}(t)\big),  \;\qquad  k=2,\ldots,\kappa_j  \\
		\begin{split}
		y_\ell(t) =&\; y_{\ell}(0)\,S_\ell(t,0) + \int^t_{0} \bigg(\mathcal{I}_{Y_\ell}(\tau) + p_{0\ell}(\tau) \sum_{\alpha\in\mathcal{K}_0} r_0(\tau)\,x_{0\alpha}(\tau) \\
		& \qquad \qquad \quad \; \; + \sum_{j=1}^{M_X} \varrho_j(\tau)\,x_{I_{j\kappa_j}}(\tau)\, q_{j\ell}(\tau) \bigg) S_Y(t,\tau)\,d\tau. \label{eq:resety}
		\end{split}
		\end{align} \end{subequations} where $\mathcal{K}=\{(a_0,a_1,\ldots,a_N)\;|\;a_j\in\{1,\ldots,k_j\}\}$, $\alpha=(a_0,\ldots,a_N) \in \mathcal{K}\setminus(1,\ldots,1)$, $j=1,\ldots,N$, $\ell = 1,\ldots,M_Y$, the amount in base sub-state X$_0$ is $x_0(t)=\sum_{\alpha\in\mathcal{K}}x_{0\alpha}(t)$, and the amount in the $j^\text{th}$ intermediate state X$_{I_j}$ is $x_{Ij}(t)=\sum_{k=1}^{\kappa_j} x_{I_{jk}}(t)$ (see Theorem \ref{Th:comp} for notation).  Note that the $y(t)$ equation \eqref{eq:resety} may be further reduced to a system of ODEs, e.g, via Corollary \ref{Th:baseCorr}, and that more complicated distributions for dwell times in intermediate states X$_{I_i}$ (e.g., an Erlang mixture) could be similarly modeled according to other cases addressed in this manuscript. \\
\end{theorem}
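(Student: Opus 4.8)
The plan is to exploit the modularity of the LCT: the three groups of states---the base sub-state X$_0$, the intermediate states X$_{I_j}$, and the recipient states Y$_\ell$---can each be handled by a theorem already established in the excerpt, with the output flux of one group serving as the input rate of the next. The conceptual point that makes this decomposition legitimate is the ``reset the clock'' assumption: because the dwell time in each X$_{I_j}$ is assumed independent of the time already spent in X$_0$, a particle leaving X$_0$ and entering X$_{I_j}$ is effectively exiting one state and entering a fresh one governed by its own Poisson process. No phase information needs to be carried across the transition, so each block can be analyzed in isolation and then glued to the others purely through net input/output rates.

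First I would apply Theorem \ref{Th:comp} to the base sub-state X$_0$, whose dwell time is $T=\min_i(T_i)$ over the $N+1$ competing Poisson processes $i=0,\ldots,N$. This yields the sub-state partition indexed by $\alpha\in\mathcal{K}$ and produces verbatim the first two equations of \eqref{eq:reset} (the $x_{0(1,\ldots,1)}$ and $x_{0\alpha}$ chain equations), together with the key identification of the flux leaving X$_0$ along the $i^\text{th}$ process as $\sum_{\alpha\in\mathcal{K}_i} r_i(t)\,x_{0\alpha}(t)$, exactly the output term appearing in \eqref{eq:chain3c}.

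Next I would route these fluxes using the proportional-output structure of Theorem \ref{Th:tomultibasic}. The flux leaving X$_0$ via process $0$ exits X directly and is distributed to Y$_\ell$ with probability $p_{0\ell}(t)$; the flux leaving via process $i\geq1$ is distributed to the intermediate states with probability $p_{ij}(t)$, so the net input rate to X$_{I_j}$ is $\mathcal{I}_{X_{Ij}}(t)+\sum_{i=1}^{N} p_{ij}(t)\sum_{\alpha\in\mathcal{K}_i} r_i(t)\,x_{0\alpha}(t)$. Treating this as the inflow to a state with a $\kappa_j^\text{th}$ event time dwell distribution and invoking the extended LCT (Theorem \ref{Th:baseNHPP}) produces the linear chain equations for $x_{I_{j1}},\ldots,x_{I_{j\kappa_j}}$. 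Finally, each Y$_\ell$ receives the direct flux from process $0$, the completion flux $\varrho_j(t)\,x_{I_{j\kappa_j}}(t)$ from each intermediate chain (routed by $q_{j\ell}(t)$), and any external input $\mathcal{I}_{Y_\ell}(t)$; collecting these as the net input rate and applying the Y-equation of Theorem \ref{Th:base}, which holds for an arbitrary survival function $S_\ell$ and is reducible to ODEs via Corollary \ref{Th:baseCorr}, gives the integral equation \eqref{eq:resety}.

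The hard part will not be analytical but combinatorial bookkeeping: one must verify that the probabilistic routing composes correctly with the competing-processes structure, i.e., that the rate at which particles leave X$_0$ via process $i$ and are sent to a particular destination is exactly the probability-weighted version of the output term supplied by Theorem \ref{Th:comp}. This is justified by the same reasoning as Theorem \ref{Th:equivcompmulti}: conditional on $T=T_i$, destinations are assigned with the stated probabilities, and since the total rate along process $i$ is $\sum_{\alpha\in\mathcal{K}_i} r_i(t)\,x_{0\alpha}(t)$, the contributions to distinct destinations simply partition this flux. I would close by checking that the initial conditions propagate consistently---all mass begins in $x_{0(1,\ldots,1)}$ and every intermediate and recipient chain starts empty---which follows directly from the initial conditions already established in Theorems \ref{Th:comp} and \ref{Th:baseNHPP}.
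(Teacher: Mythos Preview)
Your proposal is correct and follows essentially the same approach as the paper: apply Theorem~\ref{Th:comp} to the base sub-state X$_0$ (treating the intermediate states as recipient states), then apply the extended LCT with proportional outputs to each intermediate state X$_{I_j}$. The only cosmetic difference is that the paper cites Theorem~\ref{Th:tomultibasic} rather than Theorem~\ref{Th:baseNHPP} for the intermediate chains, since each X$_{I_j}$ outputs to multiple Y$_\ell$ via $q_{j\ell}(t)$; you handle this routing correctly anyway, so the distinction is purely one of which theorem to name.
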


\begin{proof} This result follows from applying Theorem \ref{Th:comp} to X$_0$ and treating the intermediate states X$_{I_j}$ as recipient states, then applying Theorem \ref{Th:tomultibasic} to each intermediate state to partition each X$_{I_j}$ into X$_{I_{jk}}$, $k=1,\ldots,\kappa_j$, yielding eqs. \eqref{eq:reset}. 	
	%\qed 
\end{proof} 

\begin{example}\label{ex:ressettime} To illustrate the application of Theorem \ref{Th:resettime}, consider the case in Fig. \ref{fig:intermediate} but with 1 intermediate state (i.e., $N=1$), with $T_0\sim$Erlang($r_0,k_0=2$), $T_1\sim$Erlang($\varrho_1,k_2=2$), $T_{I_1}\sim$Erlang($\varrho_1,\kappa_1=3$) and an exponential (rate $\mu$) dwell time in Y. Also assume the only inputs into X are into X$_0$ at rate $\mathcal{I}_X(t)$. Then the corresponding mean field ODEs are given by eqs. \eqref{eq:resettimeex} below, where  $x_0(t)=x_{0(1,1)}(t)+x_{0(2,1)}(t)+x_{0(1,2)}(t)+x_{0(2,2)}(t)$ and $x_{I_1}(t)=x_{I_{11}}(t)+x_{I_{12}}(t)+x_{I_{13}}(t)$.
	
	\begin{subequations} \label{eq:resettimeex} \begin{align} 
		\frac{d}{dt}x_{0(1,1)}(t) =&\; \mathcal{I}_X(t) - (r_0+r_1)\,x_{0(1,1)}(t)  \label{eq:resettimeexa} \\ 
		\frac{d}{dt}x_{0(2,1)}(t) =&\; r_0 x_{0(1,1)}(t) - (r_0+r_1) x_{0(2,1)}(t)  \label{eq:resettimeexb} \\ 
		\frac{d}{dt}x_{0(1,2)}(t) =&\; r_1 x_{0(1,1)}(t) - (r_0+r_1) x_{0(1,2)}(t)  \label{eq:resettimeexc} \\ 
		\frac{d}{dt}x_{0(2,2)}(t) =&\; r_0 x_{0(1,2)}(t) + r_1 x_{0(2,1)}(t) - (r_0+r_1) x_{0(2,2)}(t)  \label{eq:resettimeexd} \\ 
		\frac{d}{dt}x_{I_{11}}(t)  =&\; r_1\,x_{0(1,2)}(t) + r_1\,x_{0(2,2)}(t) - \varrho\,x_{I_{11}}(t) \label{eq:resettimeexe} \\ 
		\frac{d}{dt}x_{I_{12}}(t)  =&\;\varrho\,x_{I_{11}}(t) - \varrho\,x_{I_{12}}(t) \label{eq:resettimeexf} \\ 
		\frac{d}{dt}x_{I_{13}}(t)  =&\;\varrho\,x_{I_{12}}(t) - \varrho\,x_{I_{13}}(t) \label{eq:resettimeexg} \\ 
		\begin{split}\label{eq:resettimeexh}\frac{d}{dt}y(t) =&\; \mathcal{I}_Y(t) + r_0\,x_{0(2,1)}(t) + r_1\,x_{0(1,2)}(t)\\ & \qquad \;\;\, + r_0\,x_{0(2,2)}(t) + \varrho\,x_{I_{13}}(t) - r\,y(t).  \end{split}
		\end{align} \end{subequations} 
	
\end{example}

In the next section, we show how one can modify eqs. \eqref{eq:resettimeex} above to implement an alternative assumption: that the overall dwell time in state X is independent of any transitions to intermediate sub-states X$_{I_i}$, which is achieved by conditioning the intermediate sub-state dwell times on time already spent in X$_0$. 

%\clearpage
\subsubsection{Intermediate states that preserve dwell time distributions}\label{sec:intermediateneutral}

In this section we address how to construct mean field ODE models that incorporate `dwell time neutral' sub-state transitions, i.e., where the distribution of time spent in X is the same regardless of whether or not particles transition (within X) from some base sub-state X$_0$ to one or more intermediate sub-states X$_{I_j}$. This is done by conditioning the dwell time distributions in X$_{I_i}$ on time spent in X$_0$ in a way that leverages the weak memorylessness property discussed in \S\ref{sec:weakmem}. 

In applications, this case (in contrast to the previous case) is perhaps the more commonly desired assumption, since modelers often seek to partition states into sub-states where key characteristics (e.g., the overall dwell time distribution) remain unchanged, but where the different sub-states have functional differences elsewhere in the model. For example, consider an SIR type infectious disease model in which a goal is to incorporate reduced disease transmission from quarantined individuals, but where (in the absence of effective treatment) the transition to the quarantined state does not alter the overall distribution of the infectious period duration.

One approach to deriving such a model is to condition the dwell time distribution for an intermediate state X$_{I_i}$ on the time already spent in X$_0$ (as in \citet{Feng2016}).  We take a slightly different approach and exploit the weak memoryless property of Poisson process 1$^\text{st}$  event time distributions (see Theorem \ref{Th:wm} in \S \ref{sec:weakmem}, and the notation used in the previous section) to instead condition the dwell time distribution for intermediate states X$_{I_j}$ on how many of the $k_0$ events have already occurred when a particle transitions from X$_0$ to X$_{I_j}$ (rather than conditioning on the exact elapsed time spent in X$_0$). In this case, since each sub-state of X$_0$ has \textit{iid} dwell time distributions that are Poisson process 1$^\text{st}$  event times with rate $r(t)=\sum_{i=0}^N r_i(t)$, if $i$ of the $k_0$ events had occurred prior to the transition out of X$_0$, then the weak memoryless property of Poisson process 1$^\text{st}$  event time distributions implies that the remaining time spent in X$_{I_j}$ should follow a $(k_0-i)^\text{th}$ event time distribution under an Poisson process with rate $r_0(t)$, thus ensuring that the total time spent in X follows a $k_0^\text{th}$ event time distribution with rate $r_0(t)$. With this realization in hand, one can then apply Theorem \ref{Th:comp} and Theorem \ref{Th:tomultibasic} as in the previous section to obtain the desired mean field ODEs, as detailed in the following Theorem, and as illustrated in Fig. \ref{fig:intermediate2}.

\begin{figure}
	\centerline{\includegraphics[width=0.85\textwidth]{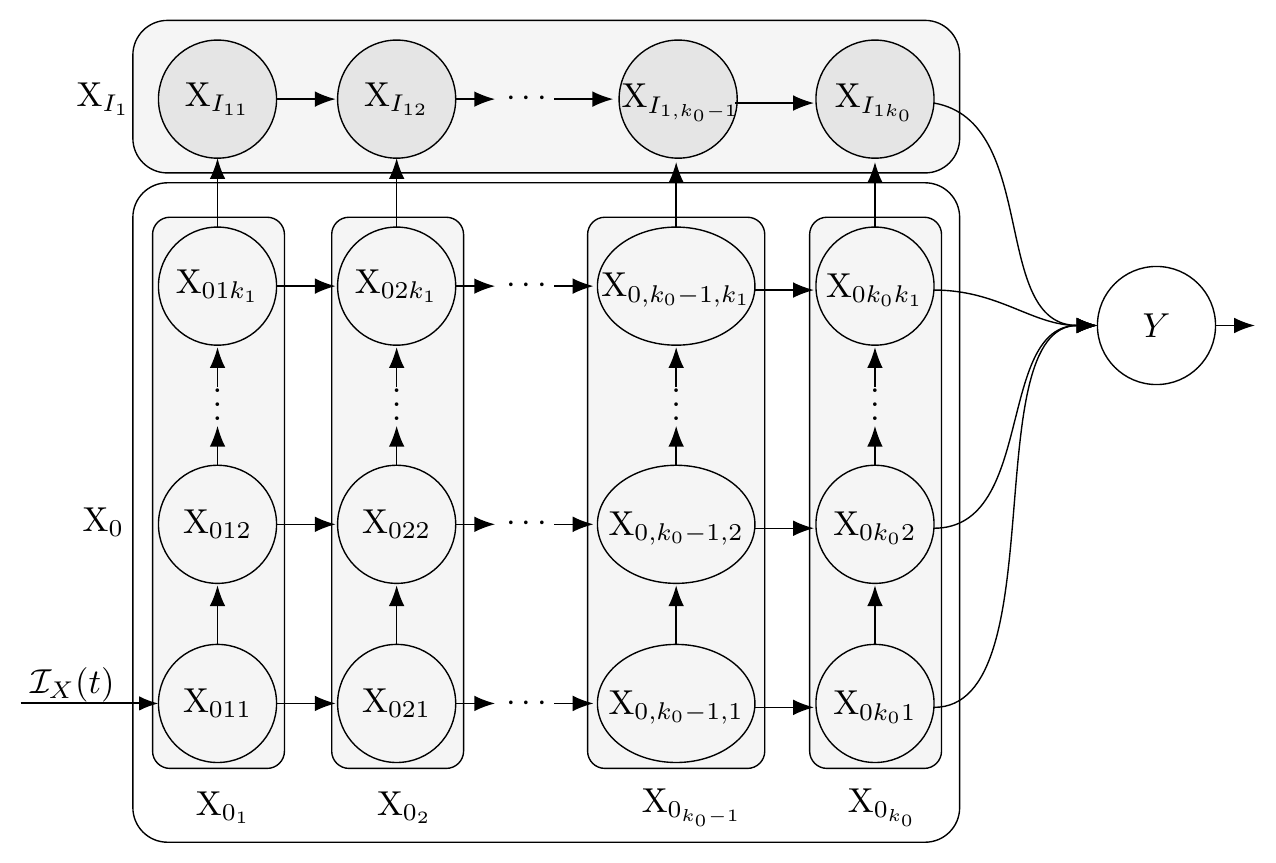}}
	\caption{In contrast to Fig. \ref{fig:intermediate1} and Theorem \ref{Th:resettime}, this example illustrates an application of Theorem \ref{Th:timepreserving} (Extended LCT with dwell time preserving intermediate states) which assumes an Erlang distributed dwell time X with a single (dwell time neutral) intermediate sub-state X$_I$. Compare the transitions from X$_0$ to X$_{I_1}$ in this sub-state diagram to the corresponding transitions out of X$_0$ in Fig. \ref{fig:intermediate1}. In this case, the overall dwell time in X is determined by $T_0\sim$Erlang($r_0,k_0$), and is independent of whether or not a transition within X (from X$_0$ to X$_{I_1}$) occurs as discussed in the main text in \S\ref{sec:intermediate}. Note that we have graphically arranged these sub-states as in Fig. \ref{fig:intermediate1}, so that events under the Poisson process that defines $T_0$ drive transitions to the right, and events under the Poisson process that defines $T_1$ drive vertical transitions.  }\label{fig:intermediate2}
\end{figure}

\begin{theorem}[Extended LCT with dwell time preserving intermediate states]\label{Th:timepreserving} Consider the mean field equations for a system of particles entering state X (into sub-state X$_0$) at rate $\mathcal{I}_X(t)$. As in the previous case, assume the time spent in X$_0$ follows the minimum of $N+1$ independent Poisson process $k_i^\text{th}$ event time distributions with respective rates $r_i(t)$, $i=0,\ldots,N$ (i.e., $T=\min_i(T_i)$). Particles leaving X$_0$ at time $T$ transition to recipient state Y$_\ell$ with probability $p_{0\ell}(T)$ if $T=T_0$, or if $T=T_i$ ($i=1,\ldots,N$) to the $j^\text{th}$ of $M_X$ intermediate sub-states, X$_{I_j}$, with probability $p_{ij}(T)$. If $T<T_0$, we may define a random variable $K\in\{0,\ldots,k_0-1\}$ indicating how many events had occurred under the Poisson process associated with $T_0$ at the time of the transition out of X$_0$ (at time $T$). In order to ensure the overall time spent in X follows a Poisson process $k_0^\text{th}$ event time distribution with rate $r_0(t)$, it follows that particles entering state, X$_{I_j}$ will remain there for a duration of time that is conditioned on $K=k$ such that the conditional dwell time for that particle in X$_{I_j}$ will be given by a Poisson process $(k_0-k)^\text{th}$ event time with rate $r_0(t)$. Finally, assume that particles leaving X via intermediate sub-state X$_{I_j}$ at time $t$ transition to Y$_\ell$ with probability $q_{j\ell}$, where they remain according to a dwell time determined by survival function $S_\ell(t,\tau)$.
	
	The corresponding mean field equations are  
	
	\begin{subequations}\label{eq:nintermediate}\begin{align} 
		&\frac{d}{dt}x_{0(1,\ldots,1)}(t) =\; \mathcal{I}_X(t) -\sum_{i=0}^{N}r_i(t)\,x_{0(1,\ldots,1)}(t)   \label{eq:nintermediatea} \\
		&\frac{d}{dt}x_{0\alpha}(t) =\; \sum_{i=0}^{N} r_i(t)\, x_{0\alpha_{i,-1}}(t)\,\1{a_i>1}(\alpha) - \sum_{i=0}^{N} r_i(t)\, x_{0\alpha}(t) \label{eq:nintermediateb} \\
		&\frac{d}{dt}x_{I_{jk}}(t) =\; r_0(t)\,\big( x_{I_{j,k-1}}(t)\,\1{k>1} - x_{I_{jk}}(t) \big) + \sum_{\alpha\in\mathcal{K}_{ij}} r_i(t)\,x_\alpha(t)\,p_{ij}(t)  \\
		\begin{split}	&y_\ell(t) =\; y_\ell(0)S_\ell(t,0) +  \int^t_{0} \bigg(\mathcal{I}_{Y_\ell}(\tau) + \sum_{\alpha\in\mathcal{K}_0} r_0(\tau)\,x_\alpha(\tau) \\
		& \qquad \qquad \qquad \qquad \quad + \sum_{j=1}^{M_X} r_0(\tau)\,x_{I_{jk_0}}(\tau)\,q_{j\ell}(\tau) \bigg) S_\ell(t,\tau)d\tau \end{split}  \label{eq:nintermediated}
		\end{align}\end{subequations}
	where $\mathcal{K}=\{(a_0,a_1,\ldots,a_N)\;|\;a_j\in\{1,\ldots,k_j\}\}$, $\alpha \in\mathcal{K}\setminus(1,\ldots,1)$, $j=1,\ldots,M_X$, $k=1,\ldots,k_0$, $\ell=1,\ldots,M_Y$, $\mathcal{K}_i\subset\mathcal{K}$ are the subset of indices where $a_i=k_i$, $\mathcal{K}_{ij}\subset\mathcal{K}_i$ are the subset of indices where $a_i=k_i$ and $a_0=j$, $x_0(t)=\sum_{\alpha\in\mathcal{K}} x_{0\alpha}(t)$, $x_{iI}(t)=\sum_{j=1}^{k_0} x_{iIj}(t)$, and $x(t)=x_0(t)+\sum_{i=1}^{n} x_{iI}(t)$. The $y_\ell(t)$ equations \eqref{eq:nintermediated} may be further reduced to a system of ODEs, e.g., via Corollary \ref{Th:baseCorr}.
	
\end{theorem}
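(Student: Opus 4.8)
The plan is to mirror the short proof of the reset case (Theorem~\ref{Th:resettime}) --- applying Theorem~\ref{Th:comp} to the base sub-state X$_0$ and treating each intermediate state X$_{I_j}$ as a recipient --- but to replace the ``fresh'' linear chain attached to each X$_{I_j}$ with one whose \emph{entry point} is dictated by the weak memorylessness property of Definition~\ref{Th:wm}. First I would apply Theorem~\ref{Th:comp} verbatim to X$_0$: with dwell time $T=\min_i T_i$ over the $N+1$ competing processes, X$_0$ partitions into sub-states indexed by $\alpha=(a_0,\ldots,a_N)\in\mathcal{K}$, yielding eqs.~\eqref{eq:nintermediatea}--\eqref{eq:nintermediateb} directly, together with the out-fluxes: the flux leaving X$_0$ from a sub-state $\alpha\in\mathcal{K}_i$ (so $a_i=k_i$) is $r_i(t)\,x_{0\alpha}(t)$, routed to Y$_\ell$ with probability $p_{0\ell}$ when $i=0$ and to intermediate state X$_{I_j}$ with probability $p_{ij}$ when $i\geq1$.

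The crux is the treatment of the intermediate states. When a particle leaves X$_0$ because some $T_i$ ($i\ge1$) fired, its position index $a_0$ records that exactly $a_0-1$ of the $k_0$ events under the base process PP$_0$ have already occurred, i.e. the variable $K$ in the statement equals $a_0-1$. Since PP$_0$ is a Poisson process run in absolute time and the X$_0$ sub-state dwell times are \emph{iid} $1^\text{st}$ event times with rate $r_0(t)$ at each event, the weak memorylessness property (Definition~\ref{Th:wm}) guarantees that the residual time to complete the remaining $k_0-(a_0-1)$ events of PP$_0$ is itself a Poisson process $(k_0-a_0+1)^\text{th}$ event time with rate $r_0(t)$, independent of the elapsed time in X$_0$. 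This is exactly the conditional dwell time the statement assigns to X$_{I_j}$, and it is what makes the overall X dwell time neutral, since the events completed in X$_0$ plus those completed in X$_{I_j}$ always total $k_0$ and reconstruct $T_0$ in distribution.

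With this identification in hand, I would partition each intermediate state X$_{I_j}$ into a length-$k_0$ linear chain X$_{I_{j1}}$, \ldots, X$_{I_{jk_0}}$, each stage an exponential $1^\text{st}$-event sub-state with rate $r_0(t)$, applying the chain construction of Theorem~\ref{Th:tomultibasic} (equivalently the partition built from Lemma~\ref{L:dh}). The only deviation from the reset case is the \emph{position-dependent influx}: a particle arriving from an X$_0$ sub-state with $a_0=k$ needs $k_0-k+1$ further events and so must enter at stage $k$ (leaving stages $k,\ldots,k_0$ ahead of it). Collecting the flux from all such source sub-states --- those with $a_i=k_i$ and base-process count $a_0=k$, the set appearing in the $x_{I_{jk}}$ equation --- gives the influx term $\sum_{\alpha} r_i(t)\,x_{0\alpha}(t)\,p_{ij}(t)$, which, added to the within-chain advection $r_0(t)\big(x_{I_{j,k-1}}(t)\1{k>1}-x_{I_{jk}}(t)\big)$, produces the $x_{I_{jk}}$ equation. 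Finally, since every particle in X$_{I_j}$ ultimately exits from the terminal stage X$_{I_{jk_0}}$ at rate $r_0(t)$, assembling the two routes into each recipient state --- direct exits from the $\mathcal{K}_0$ sub-states (routed to Y$_\ell$ by $p_{0\ell}$) and exits from the intermediate terminal stages weighted by $q_{j\ell}$ --- yields the $y_\ell$ integral equation~\eqref{eq:nintermediated} by the same substitution used in Theorem~\ref{Th:comp}, which may then be reduced to ODEs via Corollary~\ref{Th:baseCorr}.

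I expect the main obstacle to be rigorously justifying the position-dependent entry --- namely, verifying via weak memorylessness that freezing PP$_0$ at count $a_0-1$ during the X$_0$ phase and resuming it (at rate $r_0(t)$) during the X$_{I_j}$ phase reproduces a single $k_0^\text{th}$ event time for the combined X dwell time, and that the bookkeeping ``stages completed in X$_0$ plus stages completed in X$_{I_j}$ equals $k_0$'' holds for every entry point $k$. The within-chain and out-flux algebra is then routine given Theorems~\ref{Th:comp} and~\ref{Th:tomultibasic}.
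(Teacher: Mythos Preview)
Your proposal is correct and follows essentially the same approach as the paper's own proof: both parallel the reset case (Theorem~\ref{Th:resettime}) by applying Theorem~\ref{Th:comp} to X$_0$, then replace the fresh intermediate chain with a length-$k_0$ chain at rate $r_0(t)$ whose entry point is position-dependent (determined by the base-process count $a_0$), justified by the weak memorylessness property of Definition~\ref{Th:wm}. Your write-up is in fact more detailed than the paper's terse proof, and your explicit bookkeeping --- that a particle with $a_0=k$ has seen $k-1$ of the $k_0$ base events and so enters at stage $k$ to complete the remaining $k_0-k+1$ --- is exactly the mechanism the paper invokes.
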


\begin{proof}  The proof of Theorem \ref{Th:timepreserving} parallels the proof of Theorem \ref{Th:resettime}, but with the following modifications. First, each sub-state of X$_{I_j}$ (for all $j$) has the same dwell time distribution, namely, they are all $1^\text{st}$ event time distributions under a Poisson process with rate $r_0(t)$. Second, upon leaving X$_0$ where $T=T_i$ and $K(T)=k < k_0$ (i.e., when only $k<k_0$ events have occurred under the $0^\text{th}$ Poisson process; see the definition of $K$ in the text above) particles will enter (with probability $p_{ij}(T)$) the $j^\text{th}$ intermediate state X$_{I_j}$ by entering sub-state X$_{I_{jk}}$ which (due to the weak memorylessness property described in Theorem \ref{Th:wm}) ensures that, upon leaving X$_{I_j}$ particles will have spent a duration of time that follows the Poisson process $k_0^\text{th}$ event time distribution with rate $r_0(t)$.  
	%\qed 
\end{proof}

\begin{example}\label{ex:timepreserving} Consider Example \ref{ex:ressettime} in the previous section, but now instead assume that the transition to the intermediate state does not impact the overall time spent in state X as detailed above.  Then by Theorem \ref{Th:timepreserving} the corresponding mean field ODEs are given by eqs. \eqref{eq:timepreservingex} below (compare eqs. \eqref{eq:timepreservingexe}-\eqref{eq:timepreservingexg} to eqs. \eqref{eq:resettimeexe}-\eqref{eq:resettimeexh}).\\ 
	
	\begin{subequations} \label{eq:timepreservingex} \begin{align} 
		\frac{d}{dt}x_{0(1,1)}(t) =&\; \mathcal{I}_X(t) - (r_0+r_1)\,x_{0(1,1)}(t) \label{eq:timepreservingexa} \\ 
		\frac{d}{dt}x_{0(2,1)}(t) =&\; r_0 x_{0(1,1)}(t) - (r_0+r_1) x_{0(2,1)}(t) \label{eq:timepreservingexb} \\
		\frac{d}{dt}x_{0(1,2)}(t) =&\; r_1 x_{0(1,1)}(t) - (r_0+r_1) x_{0(1,2)}(t) \label{eq:timepreservingexc} \\
		\frac{d}{dt}x_{0(2,2)}(t) =&\; r_0 x_{0(1,2)}(t) + r_1 x_{0(2,1)}(t) - (r_0+r_1) x_{0(2,2)}(t) \label{eq:timepreservingexd} \\
		\frac{d}{dt}x_{I_{11}}(t)  =&\; r_1\,x_{0(1,2)}(t) - r_0\,x_{I_{11}}(t) \label{eq:timepreservingexe} \\
		\frac{d}{dt}x_{I_{12}}(t)  =&\; r_1\,x_{0(2,2)}(t) + r_0\,x_{I_{11}}(t) - r_0\,x_{I_{12}}(t) \label{eq:timepreservingexf} \\
		\begin{split}\label{eq:timepreservingexg} \frac{d}{dt}y(t) =&\; \mathcal{I}_Y(t) + r_0\,x_{0(2,1)}(t) + r_1\,x_{0(1,2)}(t) \\ & \qquad \; \; \, + r_0\,x_{0(2,2)}(t) + r_0\,x_{I_{12}}(t) - r\,y(t). \end{split} 
		\end{align} \end{subequations}

\end{example}

\subsection{Generalized Linear Chain Trick (GLCT)}\label{sec:glct}

In the preceding sections we have provided various extensions of the Linear Chain Trick (LCT) that describe how the structure of mean field ODE models reflects the assumptions that define corresponding continuous time stochastic state transition models. Each case above can be viewed as a special case of the following more general framework for constructing mean field ODEs, which we refer to as the Generalized Linear Chain Trick (GLCT). 

The cases we have addressed thus far share the following stochastic model assumptions, which constitute the major assumptions of the GLCT stated in Theorem \ref{Th:glct} below: \begin{enumerate}
	\item[A1.] A focal state (which we call state X) can be partitioned into a finite number of sub-states (e.g, X$_1,\ldots,$X$_n$), each with independent (across states and particles) dwell time distributions that are either exponentially distributed with rates $r_i$ or, more generally, are distributed as independent $1^\text{st}$ event times under nonhomogeneous Poisson processes with rates $r_i(t)$, $i=1,\ldots,n$. Recall the equivalence relation in \S \ref{sec:equivcompmulti}.
	\item[A2.] Inflow rates into the focal state can be described by non-negative, integrable inflow rates into each of these sub-states (e.g., $\mathcal{I}_{X_1}(t),\ldots,\mathcal{I}_{X_n}(t)$), some or all of which may be zero. This includes a single inflow rate $\mathcal{I}_X(t)$ and a vector of probabilities/proportions $\rho(t) = [\rho_1(t),\ldots,\rho_n(t)]^\text{\textit{T}}$ describing how incoming particles are distributed across sub-states X$_i$ (i.e., we let $\mathcal{I}_{X_i}(t)\equiv \rho_i(t)\,\mathcal{I}_X(t)$).
	\item[A3.] Particles that transition out of a sub-state X$_i$ at time $t$ transition into either a different sub-state X$_j$ with probability $p_{ij}(t)$, or enter one of a finite number of recipient states Y$_\ell$, $\ell=1,\ldots,m$, with probability $p_{i,n+\ell}$. That is, let $p_{ij}(t)$ denote the probability that a particle leaving state X$_i$ at time $t$ enters either X$_j$ if $j\leq n$ or Y$_{j-n}$ if $j>n$, where $i=1,\ldots,n$, $j=1,\ldots,n,n+1,\ldots,n+m$. 
	\item[A4.] Recipient states Y$_\ell$, $\ell=1,\ldots,m$, also have dwell time distributions defined by survival functions $S_{Y_\ell}(t,\tau)$ and integrable, non-negative inflow rates $\mathcal{I}_{Y_\ell}(t)$ that describe inputs from all other non-X sources.
\end{enumerate}

The GLCT (Theorem \ref{Th:glct}) below describes how to construct mean field ODEs for the broad class of state transition models that satisfy the above assumptions.%\footnote{What if any implicit assumptions are being made here? These should be made explicit!}

\begin{theorem}[Generalized Linear Chain Trick]\label{Th:glct} Consider a stochastic, continuous time state transition model of particles entering state X and transitioning to states Y$_\ell$, $\ell=1,\ldots,m$, according to the above assumptions A1-A4. Then the corresponding mean field model is given by the following system of equations. 
	
	\begin{subequations}\label{eq:glct}\begin{align}
		\frac{d}{dt}x_i(t) =&\; \mathcal{I}_{X_i}(t) + \sum_{j=1}^n p_{ji}(t)\,r_j(t)\,x_j(t) - r_i(t)\,x_i(t), \quad i=1,\ldots,n, \label{eq:glctx} \\
		\begin{split} \label{eq:glcty} y_\ell(t) =&\; y_\ell(0)S_{Y_\ell}(t,0) + \int_{0}^{t} \bigg ( \mathcal{I}_{Y_\ell}(\tau) + \\ & \qquad \qquad \qquad \qquad \quad \sum_{j=1}^{n} r_j(t)\,x_j(\tau)\,p_{j,n+\ell}(t) \bigg ) S_{Y_\ell}(t,\tau)\,d\tau \end{split} 
		\end{align} \end{subequations} where $x(t)=\sum_{i=1}^n x_i(t)$, and we assume non-negative initial conditions $x_i(0)=x_{i0}$, $y_\ell(0)=y_{\ell0}$. Note that the $y_\ell(t)$ equations might be reducible to ODEs, e.g., via Corollary \ref{Th:baseCorr} or other results presented above.
	
	Furthermore, eqs. \eqref{eq:glctx} may be written in vector form where $P_X(t)=(p_{ij}(t))$ ($i,j\in\{1,\ldots,n\}$) is the $n\times n$ matrix of (potentially time-varying) probabilities describing which transitions out of X$_i$ at time $t$ go to X$_j$ (likewise, one can define $P_Y(t)=(p_{ij}(t))$, $i\in\{1,\ldots,n\}$, $j\in\{n+1,\ldots,n+m\}$, which is the $n\times m$ matrix of probabilities describing which transitions from X$_i$ at time $t$ go to Y$_{j-n}$), $\mathbf{\mathcal{I}_X}(t) = [\mathcal{I}_{X_1}, \ldots, \mathcal{I}_{X_n}]^\text{T}$,  $R(t)=[r_1(t),\ldots,r_n(t)]^\text{T}$, and $\mathbf{x}(t)=[x_1(t),\ldots,x_n(t)]^\text{T}$ which yields 
	
	\begin{equation}
	\frac{d}{dt}\mathbf{x}(t) =\; \mathbf{\mathcal{I}_X}(t) + P_X(t)^\text{T}\,(R(t)\circ\mathbf{x}(t)) - R(t)\circ\mathbf{x}(t).\label{eq:GLCTX}
	\end{equation} where $\circ$ indicates the Hadamard (element-wise) product.% and $^\text{T}$ denotes matrix transposition.
\end{theorem}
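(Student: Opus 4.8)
The plan is to reduce the GLCT to the building blocks already established earlier: the weak memorylessness property (Theorem/Definition \ref{Th:wm}), the equivalence between Poisson-process rates and per-capita rates in mean field ODEs (Example \ref{ex:nhpp}), and the recipient-state integral-equation structure from Theorem \ref{Th:base}. The key conceptual step is to recognize that assumption A1, together with the equivalence established in \S\ref{sec:equivcompmulti}, lets us treat each sub-state X$_i$ as a state whose dwell time is a single Poisson process $1^\text{st}$ event time with rate $r_i(t)$, so that its survival function is $\mathcal{S}_{r_i}^1(t,\tau)=\exp(-\int_\tau^t r_i(s)\,ds)$.

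First I would assemble the net inflow rate into each sub-state X$_i$. By A2 this includes the external input $\mathcal{I}_{X_i}(t)$, and by A3 it also includes the contributions routed in from every other sub-state X$_j$. The crucial observation — the one that makes an ODE description possible at all — is that, by the weak memorylessness property (Theorem \ref{Th:wm}), every particle currently in X$_j$ has an identically distributed remaining dwell time regardless of when it entered, so the aggregate per-capita exit rate from X$_j$ at time $t$ is exactly $r_j(t)$ and the instantaneous outflow is $r_j(t)\,x_j(t)$. A fraction $p_{ji}(t)$ of that outflow enters X$_i$, giving the total inflow rate $\mathcal{I}^{\text{tot}}_{X_i}(\tau) \equiv \mathcal{I}_{X_i}(\tau) + \sum_{j=1}^n p_{ji}(\tau)\,r_j(\tau)\,x_j(\tau)$.

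Next I would write the mean field integral equation for each X$_i$ exactly as in Example \ref{ex:nhpp}, namely
\begin{equation}
x_i(t) = x_{i0}\,\mathcal{S}_{r_i}^1(t,0) + \int_0^t \mathcal{I}^{\text{tot}}_{X_i}(\tau)\,\mathcal{S}_{r_i}^1(t,\tau)\,d\tau,
\end{equation}
and differentiate it by invoking Example \ref{ex:nhpp} directly with influx $\mathcal{I}^{\text{tot}}_{X_i}$ (using the Leibniz rule and $\frac{d}{dt}\mathcal{S}_{r_i}^1(t,\tau) = -r_i(t)\,\mathcal{S}_{r_i}^1(t,\tau)$). This yields $\frac{d}{dt}x_i(t) = \mathcal{I}^{\text{tot}}_{X_i}(t) - r_i(t)\,x_i(t)$, which upon substituting the definition of $\mathcal{I}^{\text{tot}}_{X_i}$ is precisely eq. \eqref{eq:glctx}. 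The recipient-state equation \eqref{eq:glcty} then follows immediately: by A4 the net input rate into Y$_\ell$ at time $\tau$ is $\mathcal{I}_{Y_\ell}(\tau) + \sum_{j=1}^n r_j(\tau)\,x_j(\tau)\,p_{j,n+\ell}(\tau)$, and convolving this against the arbitrary survival function $S_{Y_\ell}(t,\tau)$ gives the stated integral equation, exactly as eq. \eqref{eq:chain2c} arises in Theorem \ref{Th:base}.

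Finally, the vector form \eqref{eq:GLCTX} is obtained by inspection: the sum $\sum_{j=1}^n p_{ji}(t)\,r_j(t)\,x_j(t)$ is the $i^\text{th}$ component of $P_X(t)^\text{T}(R(t)\circ\mathbf{x}(t))$, while $-r_i(t)\,x_i(t)$ is the $i^\text{th}$ component of $-R(t)\circ\mathbf{x}(t)$. I expect the main obstacle to be not the differentiation (routine once the integral equations are in place) but the justification of the linear flow structure in the first step: one must argue carefully from the weak memorylessness property that aggregating particles which entered X$_i$ at different times still yields a well-defined per-capita exit rate $r_i(t)$, so that the inter-sub-state coupling is genuinely linear in the $x_j$ and the system closes as a finite set of ODEs rather than requiring age structure. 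The coupling among the $x_j$ (including a possible self-loop when $p_{ii}\neq 0$) is harmless for the differentiation, since $\mathcal{I}^{\text{tot}}_{X_i}(\tau)$ is treated as a prescribed net input rate at each stage.
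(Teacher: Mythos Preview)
Your proposal is correct and follows essentially the same approach as the paper, which gives a one-line proof invoking Theorem~\ref{Th:tomultibasic} applied to each sub-state; you unpack that same argument by citing the more elementary Example~\ref{ex:nhpp} (which is just the $k=1$ case of Theorem~\ref{Th:tomultibasic}) and making the weak-memorylessness justification for the linear flow structure explicit, but the underlying reasoning is identical.
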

\begin{proof} The proof of the theorem above follows directly from applying Theorem \ref{Th:tomultibasic} to each sub-state.
	%\qed 
\end{proof}

\begin{corollary}[LCT for phase-type distributions] If $R(t)=R$, $P_X(t)=P_X$, and $P_Y(t)=P_Y$ are all constant, then the X dwell time distribution follows the hitting time distribution for a Continuous Time Markov Chain (CTMC) with absorbing states Y$_\ell$ and an ($n+m$)$\times$($n+m$) transition probability matrix 
	
	\begin{equation}
	P=\begin{bmatrix} P_X & P_Y\\ 0 & \mathbf{I} \\ \end{bmatrix}.
	\end{equation} These CTMC hitting time distributions include the hypoexponential distribution, hyper-exponential and hyper-Erlang distributions, generalized Coxian distribution, and other continuous phase-type distributions \citep{Reinecke2012a,Horvath2016}.
\end{corollary}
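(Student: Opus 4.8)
The plan is to argue that, once all the rates and routing probabilities are constant, the trajectory of a single particle through the sub-states of X until it is absorbed into some recipient state Y$_\ell$ is literally a sample path of a time-homogeneous CTMC, so that the X dwell time is that chain's absorption (hitting) time, which is by definition a continuous phase-type distribution. First I would reduce to the single-particle picture: because assumptions A1--A4 treat particles as independent, the dwell time distribution in X is the distribution of the absorption time of one particle, so it suffices to characterize that single-particle process.

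Next I would establish the Markov property and read off the generator. With $r_i(t)=r_i$ constant, the sojourn time of a particle in sub-state X$_i$ is the $1^\text{st}$ event time of a homogeneous Poisson process with rate $r_i$, i.e. exponential with rate $r_i$; the constant-rate case of the weak memorylessness property (Definition~\ref{Th:wm}) specializes to ordinary memorylessness, so the remaining evolution depends only on the particle's current sub-state and not on its history. Upon an event in X$_i$ the particle jumps to state $k$ (another sub-state or a recipient state) with the constant probability $p_{ik}$, which is exactly an embedded jump chain with transition matrix $P$. Writing $D=\mathrm{diag}(r_1,\ldots,r_n)$ and $D_{\mathrm{full}}=\mathrm{diag}(r_1,\ldots,r_n,0,\ldots,0)$ on the full state space $\{$X$_1,\ldots,$X$_n,$Y$_1,\ldots,$Y$_m\}$, the infinitesimal generator is $Q=D_{\mathrm{full}}(P-I)$: its transient block is the sub-generator $A=D(P_X-I)$, its transient-to-absorbing block is $DP_Y$, and the rows corresponding to the Y$_\ell$ are zero (from the $0$ and $\mathbf{I}$ blocks of $P$), confirming that the Y$_\ell$ are absorbing. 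One checks $Q$ is a bona fide generator, with nonnegative off-diagonal entries and rows summing to zero.

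With the generator in hand, the X dwell time is the time for this CTMC, started from the initial distribution $\rho$ over the transient states, to hit the absorbing set $\{$Y$_1,\ldots,$Y$_m\}$, which is precisely the definition of a continuous phase-type random variable with representation $(\rho,A)$. As a consistency check, specializing the GLCT equation \eqref{eq:GLCTX} to constant rates gives $\frac{d}{dt}\mathbf{x}=\mathbf{\mathcal{I}_X}+(P_X^\text{T}D-D)\mathbf{x}=\mathbf{\mathcal{I}_X}+A^\text{T}\mathbf{x}$, the Kolmogorov forward equation for the expected transient occupancies; with no influx and $\mathbf{x}(0)=\rho$ one recovers survival function $\rho^\text{T}e^{At}\mathbf{1}$ and density $\rho^\text{T}e^{At}(-A\mathbf{1})=\rho^\text{T}e^{At}\mathbf{a}$, the standard phase-type density for $(\rho,A)$. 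Finally I would note that the named families arise from particular choices of $P_X$, $P_Y$, $R$: a linear chain with distinct rates gives the hypoexponential distribution, parallel chains selected at entry give the hyper-exponential and hyper-Erlang mixtures, and an acyclic forward structure with a single exit per stage gives the generalized Coxian distribution.

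The step I expect to be the main obstacle is the careful bookkeeping in identifying $Q$ and verifying it is a legitimate generator, while correctly reconciling two distinct objects: the embedded jump-chain matrix $P$ appearing in the statement versus the rate generator $Q$ that actually governs the holding times. In particular one must handle possible self-transitions $p_{ii}>0$ (which contribute $r_i\,p_{ii}$ to the diagonal and leave the sub-state unchanged) and be explicit that ``CTMC with transition probability matrix $P$'' means the chain whose embedded jumps follow $P$ and whose holding rates are the $r_i$, so that the block structure of $P$ together with the zero holding rates on the absorbing states yields the correct $Q=D_{\mathrm{full}}(P-I)$.
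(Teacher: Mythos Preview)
The paper states this corollary without proof, treating it as an immediate observation following the GLCT. Your proposal supplies a complete and correct argument where the paper gives none: you reduce to a single particle via independence (A1--A4), verify that constant rates make each sub-state sojourn exponential and memoryless, correctly identify the generator as $Q=D_{\mathrm{full}}(P-I)$ with transient block $A=D(P_X-I)$, and recognize the absorption time as phase-type $(\rho,A)$ by definition. The consistency check against eq.~\eqref{eq:GLCTX}, recovering the Kolmogorov forward equation $\frac{d}{dt}\mathbf{x}=\mathbf{\mathcal{I}_X}+A^\text{T}\mathbf{x}$ and hence the standard phase-type survival function $\rho^\text{T}e^{At}\mathbf{1}$, is a nice touch that the paper omits entirely.

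Your anticipated obstacle---reconciling the embedded jump matrix $P$ with the rate generator $Q$ and handling potential self-loops $p_{ii}>0$---is the right place to be careful, and your resolution is correct: $Q=D_{\mathrm{full}}(P-I)$ automatically yields row sums zero and nonnegative off-diagonals, with self-loops simply reducing the effective exit rate to $r_i(1-p_{ii})$. One minor point worth noting is that assumption A3 in the paper actually says particles transition to a \emph{different} sub-state X$_j$, which would force $p_{ii}=0$ and make the self-loop discussion moot; but your treatment is more general and does no harm.
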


\begin{figure}[!htb]  \begin{minipage}[c]{0.45\textwidth}
		\centerline{\includegraphics[width=\textwidth]{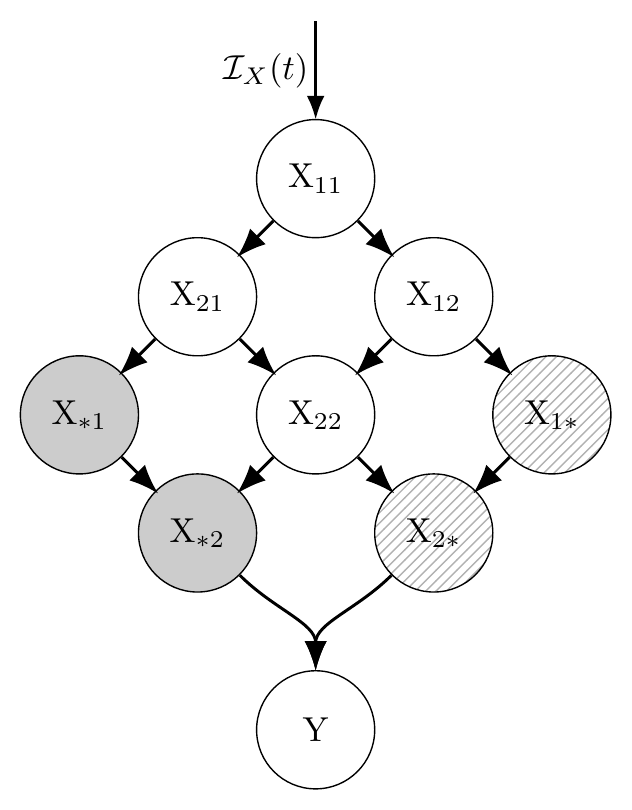}}
	\end{minipage}\hfill
	\begin{minipage}[c]{0.55\textwidth}
		\caption{The X sub-state structure for Example \ref{ex:tomultimax} 
			where X dwell time distribution follows the maximum of two Erlang random variables with rates $r_1$ and $r_2$, respectively, and shape parameters $k_i=2$. As in \S\ref{sec:tomultimin}, where the minimum is assumed instead of the maximum, X can be partitioned using indices based on organizing particles by which events they are awaiting under each Poisson process associated with each Erlang distribution. Upon reaching the target event (here, the 2$^\text{nd}$ event) under any given Poisson process, particles transition to sub-states with an asterisk in the corresponding index position (e.g., see figure). In general, these sub-states all have a dwell times given by the $1^\text{st}$ even time under a Poisson process, but with differing rates (see Example \ref{ex:tomultimax}): here they follow exponential distributions with either rate $r=r_1+r_2$ (white backgrounds), rate $r_2$ (gray backgrounds), or rate $r_1$ (lined backgrounds). \label{fig:tomultimax} } \end{minipage} 
\end{figure}

\begin{example}[Serial LCT \& hypoexponential distributions] Assume the dwell time in state X is given by the sum of independent (not identically distributed) Erlang distributions or, more generally, Poisson process $k_i^\text{th}$ event time distributions with rates $r_i(t)$, i.e., $T=\sum_i T_i$, $i=1,\ldots,N$ (note the special case where all $k_i=1$ and $r_i(t)=r_i$ are constant, which yields that $T$ follows a hypoexponential distribution). Let $n=\sum_i k_i$ and further assume particles go to $Y_\ell$ with probability $p_{\ell}$ upon leaving X, $\ell=1,\ldots,m$. Using the GLCT framework above, this corresponds to partitioning X into sub-states X$_j$, where $j=1,\ldots,n$, and   
	
	\begin{equation}
	R(t)=[r_1(t),r_1(t),\ldots,r_2(t),\ldots,r_n(t)]^\text{T} 
	\end{equation} where the first $k_1$ elements of $R(t)$ are $r_1(t)$, the next $k_2$ are $r_2(t)$, etc., and   \begin{equation}  
	P_X=\begin{bmatrix} 0 & 1 & 0 & \cdots & 0 & 0 \\
	0 & 0 & 1 & \cdots & 0 & 0  \\
	\vdots & \vdots & \ddots & \ddots & \vdots & \vdots  \\
	0 & 0 & 0 & \ddots & 1 & 0 \\
	0 & 0 & 0 & \cdots & 0 & 1 \\
	0 & 0 & 0 & \cdots & 0 & 0 \\
	\end{bmatrix}_{n\times n} \hspace{-1.9em}, \qquad 
	P_Y =\begin{bmatrix} 0 & 0 & \cdots & 0 \\
	\vdots & \vdots & \ddots & \vdots \\
	0 & 0 & \cdots & 0 \\
	p_1 & p_2 & \cdots & p_m 
	\end{bmatrix}_{n\times \ell}\hspace{-1.75em}.
	\end{equation} 
	
	\clearpage
	By the GLCT (Theorem \ref{Th:glct}), using $r_{(j)}(t)$ to denote the $j^\text{th}$ element of $R(t)$, the corresponding mean field equations are 
	
	\begin{subequations} \begin{align}
		\frac{d}{dt}x_1(t) =&\; \mathcal{I}_X(t) - r_1(t)\,x_1(t) \\
		\frac{d}{dt}x_j(t) =&\;  r_{(j-1)}(t)\,x_{j-1}(t) - r_{(j)}(t)\,x_j(t), \text{ for } j\geq 1,\\
		\begin{split} y_\ell(t) =&\; y_\ell(0)S_{Y_\ell}(t,0) + \int_{0}^{t} \bigg ( \mathcal{I}_{Y_\ell}(\tau) + \\ & \qquad \qquad \qquad \qquad \quad  \sum_{j=1}^{m} r_{(j)}(t)\,x_j(\tau)\,p_{j}(t) \bigg ) S_{Y_\ell}(t,\tau)\,d\tau. \end{split}
		\end{align}
	\end{subequations}   
\end{example}

\begin{example}[Dwell time given by the maximum of independent Erlang random variables] \label{ex:tomultimax}
	Lastly, we consider an example that illustrates how the GLCT can provide a conceptually simpler framework for deriving ODEs relative to derivation from mean field integral equations. Here we assume the X dwell time obeys the maximum of multiple Erlang distributions. 
	
	Recall in \S\ref{sec:tomultimin} we considered a dwell time given by the minimum of $N$ Erlang distributions. Here we instead consider the case where the dwell time distribution is given by the maximum of multiple Erlang distributions, $T=\max(T_1,T_2)$ where $T_i\sim$Erlang($r_i,2$). For simplicity, assume the dwell time in a single recipient state Y is exponential with rate $\mu$. We again partition X according to which events (under the two independent homogeneous Poisson processes associated with each of $T_1$ and $T_2$) particles are awaiting, and index those sub-states accordingly (see Fig. \ref{fig:tomultimax}). These sub-states are X$_{11}$, X$_{21}$, X$_{12}$, X$_{*1}$, X$_{22}$, X$_{1*}$, X$_{*2}$, and X$_{2*}$, where a `$*$' in the $i^\text{th}$ index position indicates that particles in that sub-state have already had the $i^\text{th}$ Poisson process reach the $k_i^\text{th}$ event (in this case, the $2^\text{nd}$ event). Each such sub-state has exponentially distributed dwell times, but rates for these dwell time distributions differ (unlike the cases in \S \ref{sec:tomultimin} where all sub-states had the same rate):  the Poisson process rates for sub-states X$_{11}$, X$_{21}$, X$_{12}$, and X$_{22}$ are $r=r_1+r_2$ (see Fig. \ref{fig:tomultimax} and compare to Theorem \ref{Th:comp} and Fig. \ref{fig:tomultimin}), but the rate for the states X$_{1*}$ and X$_{2*}$ (striped circles in Fig. \ref{fig:tomultimax}) are $r_1$ , and for X$_{*1}$ and X$_{*2}$ (shaded circles in Fig. \ref{fig:tomultimax})) are $r_2$.

	In the context of the GLCT, let $\mathbf{x}(t) = $[$x_{11}(t)$, $x_{21}(t)$, $x_{12}(t)$, $x_{*1}(t)$, $x_{22}(t)$, $x_{1*}(t)$, $x_{*2}(t)$, $x_{2*}(t)]^\text{T}$  then by the assumptions above $R(t)=$[$r$, $r$, $r$, $r_2$, $r$, $r_1$, $r_2$, $r_1]^\text{T}$, $\mathcal{I}_\mathbf{X}(t)=[\mathcal{I}_X(t),0,\ldots,0]^\text{T}$, and denoting  $p_1\equiv r_1/r$ and $p_2\equiv r_2/r$ (\textit{\`a la} Theorem \ref{Th:equivcompmulti} in \S \ref{sec:equivcompmulti})    
	
	\begin{equation} 
	\left[ P_X  \; \middle\vert \; P_Y \right] = \left[\begin{matrix} 
	0 & p_1 & p_2 & 0 & 0 & 0 & 0 & 0 \\
	0 & 0 & 0 & p_1 & p_2 & 0 & 0 & 0 \\
	0 & 0 & 0 & 0 & p_1 & p_2 & 0 & 0 \\
	0 & 0 & 0 & 0 & 0 & 0 & 1 & 0 \\
	0 & 0 & 0 & 0 & 0 & 0 & p_1 & p_2 \\
	0 & 0 & 0 & 0 & 0 & 0 & 0 & 1 \\
	0 & 0 & 0 & 0 & 0 & 0 & 0 & 0 \\
	0 & 0 & 0 & 0 & 0 & 0 & 0 & 0 \\   \end{matrix} \, \middle \vert \, \begin{matrix} 
	0 \\
	0 \\
	0 \\
	0 \\
	0 \\
	0 \\
	1 \\
	1 \\	
	\end{matrix} \right].
	\end{equation} 
	
	Then by the GLCT (Theorem \ref{Th:glct}), the corresponding mean field ODEs are 
	
	\begin{subequations}	\begin{align}
		\frac{d}{dt}x_{11}(t) =&\; \mathcal{I}_X(t) - r\,x_{11}(t)  \\ % CHECK THESE BY ENTERING THEM INTO THE GLCT FORMULA IN MAXIMA
		\frac{d}{dt}x_{21}(t) =&\; r_1\,x_{11}(t) - r\,x_{21}(t)  \\
		\frac{d}{dt}x_{12}(t) =&\; r_2\,x_{11}(t) - r\,x_{12}(t)  \\
		\frac{d}{dt}x_{*1}(t) =&\; r_1\,x_{21}(t) - r_2\,x_{*1}(t) \\
		\frac{d}{dt}x_{22}(t) =&\; r_2\,x_{21}(t) + r_1\,x_{12}(t) - r\,x_{22}(t)  \\
		\frac{d}{dt}x_{1*}(t) =&\; r_2\,x_{12}(t) - r_1\,x_{1*}(t)  \\
		\frac{d}{dt}x_{*2}(t) =&\; r_2\,x_{*1}(t) + r_1\,x_{22}(t) - r_2\,x_{*2}(t)    \\
		\frac{d}{dt}x_{2*}(t) =&\; r_1\,x_{1*}(t) + r_2\,x_{22}(t) - r_1\,x_{2*}(t)  \\
		\frac{d}{dt}y(t)      =&\; r_1\,x_{2*}(t) + r_2\,x_{*2}(t) - \mu\,y(t).
		\end{align} \end{subequations}
	
\end{example}

\section{Discussion}\label{sec:discussion}

The above results generalize the Linear Chain Trick (LCT), and detail how to construct mean field ODE models for a broad range of scenarios found in applications. Our hope is that these contributions improve the speed and efficiency of constructing mean field ODE models, increase the flexibility to make more appropriate dwell time assumptions, and help clarify (for both modelers and those reading the results of their work) how individual-level stochastic assumptions are reflected in the structure of mean field ODE model equations. We have provided multiple novel theorems that describe how to construct such ODEs directly from underlying stochastic model assumptions, without formally deriving them from an explicit stochastic model or from intermediate integral equations. The Erlang distribution recursion relation (Lemma \ref{lem:smith711}) that drives the LCT has been generalized to include the time-varying analogues of Erlang distributions, i.e., $k^\text{th}$ event time distributions under nonhomogeneous Poisson processes (Lemma \ref{L:dh}), and distributions that reflect ``competing Poisson process even times" defined as the minimum of a finite number of independent Poisson process event times (Lemma \ref{L:df}). These new lemmas, and our generalization of the memorylessness property of the exponential distribution (which we refer to as the \textit{weak memorylessness} property of nonhomogeneous Poisson process 1$^{st}$ event time distributions) together allow a much broader class of dwell time distributions to be incorporated into mean field ODE models, including the phase-type family of distributions and their time-varying analogues. We have also introduced a novel generalized linear chain trick (GLCT; Theorem \ref{Th:glct} in \S\ref{sec:glct}) which complements previous extensions of the LCT \citep[e.g.,][]{Jacquez2002,Diekmann2017} and allows one to construct mean field ODE models for a broad class of dwell time distributions and sub-state configurations (e.g., conditional dwell time distributions for intermediate sub-state transitions). The GLCT also provides a framework for considering other scenarios not specifically addressed by  the above results, as illustrated by example \ref{ex:tomultimax} which assumes the dwell time distribution follows the maximum of multiple Erlang distributions. 

These results not only provide a framework to incorporate more accurate dwell time distributions into ODE models, but also hopefully encourage more comparative studies, such as \citet{Feng2016}, that explore the dynamic and application-specific consequences of incorporating non-Erlang dwell time distributions, and conditional dwell time distributions, into ODE models. The flexible phase-type family of distributions can be thought of as the hitting-time distributions for Continuous Time Markov Chains, and includes mixtures of Erlang distributions (a.k.a. hyper-Erlang distributions), the minimum or maximum of multiple Erlang distributions, the hypoexponential distributions, generalized Coxian distributions, and others \citep{Reinecke2012a,Horvath2016}. While the phase-type distributions are currently mostly unknown to mathematical biologists, they have received some attention in other fields and modelers can take advantage of existing methods that have been developed to fit phase-type distributions to other distributions on $\mathbb{R}^+$ and to data \citep{Asmussen1996, jPhase, Osogami2006, Thummler2006, Reinecke2012b, Mapfit, BuTools2}. These results provide a flexible framework for approximating dwell time distributions, and incorporating those empirically or analytically derived dwell time distributions into ODE models. That increased flexibility augments our capacity to investigate the dynamic and application-specific consequences of incorporating non-exponential and non-Erlang dwell time distributions into ODE models. 

There are some additional considerations, and potential challenges to implementing these results in applications, that are worth addressing. First, the increase in the number of state variables may lead to both computational and analytical challenges, however we have a growing number of tools at our disposal for tackling high dimensional systems. Second, it is tempting to assume that the sub-states resulting from the above theorems correspond to some sort of sub-state structure in the actual system being modeled. This is not necessarily the case, and we should be cautious about interpreting these sub-states as evidence of, e.g., cryptic population structure. Third, some of the above theorems make a simplifying assumption that, upon entry into X, the initial distribution of particles is only into the first sub-state. This may not be the appropriate assumption to make in some applications, but it is fairly straight forward to modify these these initial condition assumptions within the context of the GLCT. Fourth, in certain applications it may be more appropriate to avoid mean field models all together, and instead analyze the stochastic model dynamics directly \citep[e.g., see ][and references therin]{Allen2010, Allen2017}. Lastly, the history of successful attempts to garner scientific insights from mean field ODE models (i.e., those that assume only exponential and Erlang dwell time distributions) seems to suggest that such distributional refinements are unnecessary. However, this is clearly not always the case, as evidenced by studies that compare the results of models using simpler versus more realistic dwell time distributions (either via the LCT or through the use of integral or integrodifferential equations), and as evidenced by the many instances in which modelers have abandoned ODEs and instead opted to use integral equations to model systems with non-Erlang dwell time distributions. At a minimum, these results will allow a more rigorous comparison of such detailed models and their simplified counterparts to determine if using the simpler model is in fact warranted, e.g., as in \citet{Feng2016} and \citet{Piotrowska2018}.

In closing, these results introduce novel extensions of the LCT, and provide a means for incorporating more flexible dwell time distributions into mean field ODE models directly from first principles, without a need to derive ODEs from stochastic models or intermediate mean field integral equations. The Generalized Linear Chain Trick (GLCT) provides both a conceptual framework for understanding how individual-level stochastic assumptions are reflected in the structure of mean field model equations, and a practical framework for incorporating exact, empirically derived, or approximated dwell time distributions into mean field ODE models.

~\\
\textbf{Acknowledgments:} The authors thank Michael H. Cortez, Jim Cushing, Marisa E. Eisenberg, Jace Gilbert, Zoe Haskell, Tomasz Kozubowski, Catalina Medina, Amy Robards, Deena R. Schmidt, Joe Tien, and Narae Wadsworth for conversations, comments, and suggestions that improved this manuscript. This work was conducted while PJH was supported by start-up funds provided by the University of Nevada, Reno (UNR) Office of Research and Innovation.

\begin{appendices} %% Equations were starting at (1), so this next line changes them to (A1), ...
	\numberwithin{equation}{section}
	%\numberwithin{table}{section}
	%\numberwithin{figure}{section}
	\renewcommand{\theequation}{\Alph{section}\arabic{equation}}\setcounter{equation}{0} 
	%\renewcommand{\thetable}{\Alph{section}\arabic{table}}\setcounter{table}{0} 
	%\renewcommand{\thefigure}{\Alph{section}\arabic{figure}}\setcounter{figure}{0} 
	
	%\clearpage
	
	\section{Deterministic Models as Mean Field Equations}\label{sec:meanfield}

	To give some intuition for how mean field equations arise from stochastic state transition models, we here give a brief description of the process of deriving deterministic mean field equations from stochastic first principles. 
	
	Intuition for incorporating Erlang-distributed delays (or, equivalently, Erlang distributed dwell times) in ODE models begins by considering a stochastic model of discrete particles (e.g., individual organisms in a population, molecules in a solution, etc.) that transition among a finite number of $n$ states in continuous time. Building upon this, we let the state variables of interest be the amount of particles in each state, and then derive from the individual-level stochastic model gives a model for how these counts change over time in the mean field limit. This set of counts in each state can be thought of as a \textit{state vector} in the \textit{state space} $\mathbb{N}^n\subset\mathbb{R}^n$, and our model describes the (stochastic) rules governing transitions from one state vector to the next (i.e., from a given state there is some probability distribution across the state space describing how the system will proceed). Mean field models essentially average that distribution, and thus describe the mean state transitions from any given state of the system. That is, for a given state in $\mathbb{R}^n$, we can think of a probability distribution that describes where the system would move from that point in $\mathbb{R}^n$, and find the mean transition direction in state space, which then defines a deterministic dynamical system on $\mathbb{R}^n$ which we refer to as a mean field model for the given stochastic process. 
	
	More formally, let the $n$ state variables $\widetilde{x_i}(t)\in\mathbb{N}$ be the numbers of particles in the $i^\text{th}$ state at time $t\geq 0$. Assume $t$ takes on discrete time values that are integer multiples of the time step size $\Delta t$. The goal in deriving a mean field model of this stochastic process is to find the expected value of $\widetilde{x_i}(t)$, which we'll denote as $x_i(t)\equiv E(\widetilde{x_i}(t))$ (or in deriving a differential equation model, the expected change $\widetilde{x_i}(t)\mapsto \widetilde{x_i}(t+\Delta t)$). To derive a continuous time mean field model, we do this for an arbitrary step size $\Delta t$ so we can then take the limit as $\Delta t \to 0$. Note that the expected values $x_i(t)$ are real numbers, despite $\widetilde{x_i}(t)$ being integer-valued.
	
	In section \ref{a:example1} below, we derive integral equations for Example \ref{ex:simple}. Integral equations like eq. \eqref{eq:example1x} should be thought of as the $\Delta t \to 0$ limit of a Reimann sum that gives the expected number of particles entering a given state (X) in each of $M$ small time intervals over [$0,t$] (where $M\,\Delta t=t$), multiplied by the expected proportion remaining in X at time $t$. More specifically, the expected rate of particles entering state X during time interval $[s_j,s_j+\Delta t]$ (where $\tau$ is some integer multiple  of $\Delta t$) is given by the instantaneous input rate $\mathcal{I}_X(t)$ times $\Delta t$. The expected proportion of a cohort that enters X during that time interval and remains in state X at time $t$ is given by the survival function for the dwell time distribution over [$\tau,t$], give or take small error on the order of $\Delta t$.  Other more systematic approaches exist, e.g., see \citet{Kurtz1970,Kurtz1971}.

	\subsection{Derivation of mean field equations in Example \ref{ex:simple}}\label{a:example1}
	
	Here we derive the mean field equations \eqref{eq:example1} from Example \ref{ex:simple} in \S \ref{sec:base} starting from an explicit stochastic model. To do this, we begin by describing a discrete time approximation (with arbitrarily small time step $\Delta t$) of a system of particles transitioning among the various states, then derive the corresponding discrete-time mean field model which then yields the desired mean field integral equations and ODEs by taking the limit as $\Delta t \to 0$.
	
	In addition to the assumptions spelled out in the text above eqs.\eqref{eq:example1}, assume there are $w_0$ particles in state W at time $t=0$ (where $w_0\gg1$)), which independently transition from state W to state X after an exponentially distributed duration of time with rate $a$, then (again, independently) from X to state Y after an Erlang($r,k$) distributed duration of time, and then finally to state Z after an exponentially distributed amount of time with rate $\mu$. Let $\widetilde w(t)$, $\widetilde x(t)$, $\widetilde y(t)$, and $\widetilde z(t)$ be the amount in each of the corresponding states at time $t\geq0$, with $w(0)=N_0$, and $x(0)=y(0)=z(0)=0$. 
	
	First, to derive the linear ODEs \eqref{eq:example1w} and \eqref{eq:example1z}, note that the number of particles that transition from state W to state X in a short time interval $(t,t+\Delta t)$ is binomially distributed: if we think of a transition from W to X as a ``success" then the number of ``trials" $n=\widetilde w(t)$ and the probability of success $p$ is given by the exponential CDF value $p=1-\exp(-a\Delta t)$. For sufficiently small $\Delta t$ this implies $p=a\Delta t+O(\Delta t^2)$. Let $w(t+s) = E(\widetilde w(t+s)|\widetilde w(t))$ for $s\geq0$. Since the expected value of a binomial random variable is $np$ it follows that 
	
	\begin{equation}\begin{split}\label{eq:dw}
	w(t+\Delta t) - w(t) \equiv& E(\widetilde w(t+\Delta t) - \widetilde w(t)|\widetilde w(t)) \\
	=& - a\,w(t)\,\Delta t + O(\Delta t^2).
	\end{split}
	\end{equation}
	
	Dividing both sides of \eqref{eq:dw} by $\Delta t$ and then taking the limit as $\Delta t \to 0$ yields 
	
	\begin{equation} \frac{d}{dt} w(t) = -a\,w(t). \end{equation}
	
	Similarly, define $z(t)$ in terms of $\widetilde z(t)$ then it follows that 
	
	\begin{equation} \frac{d}{dt}z(t)= \;\mu\,y(t). \end{equation} 
	
	Next, we derive the integral equations \eqref{eq:example1x} and \eqref{eq:example1y} by similarly deriving a discrete time mean field model and then taking its limit as bin width $\Delta t \to 0$ (i.e., as the number of bins $M\to\infty$). 
	
	\begin{figure}[b]
		%\centerline{\includegraphics[trim={0 355 0 360},clip,width=0.5\textwidth]{discrete-time.pdf}}
		\centerline{\includegraphics[trim={180 250 185 185},clip,width=0.6\textwidth]{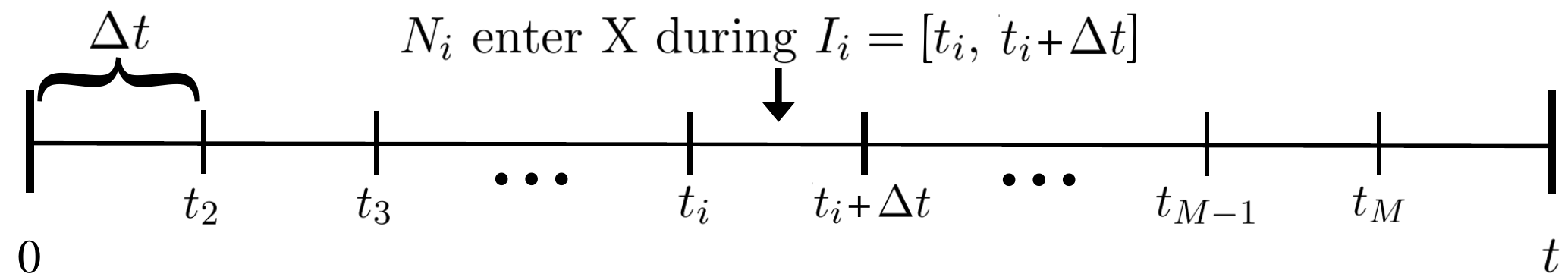}}
		\caption{Illustration of discretizing the time interval [$0,t$] into $M$ bins (each $\Delta t = t/M$ wide) in order to derive the integral equations \eqref{eq:example1x} and \eqref{eq:example1y} in Example \ref{ex:simple} (\S \ref{sec:base}) as detailed in Appendix \ref{a:example1}.}
		\label{fig:discretetime}
	\end{figure}
	
	Partition the interval $[0,t]$ into $M\gg1$ equally wide intervals of width $\Delta t\equiv t/M$ (see Fig. \ref{fig:discretetime}). Let $I_i=\big((i-1)\Delta t,i\Delta t\big]$ denote the $i^\text{th}$ such time interval ($i=1,\ldots,M$) and let $t_i=(i-1)\Delta t$ denote the start time of the $i^\text{th}$ interval. We may now account for the number transitioning into and out of state X during $I_i$, and sum across these values to compute $x(t)$.
	
	The number in state X at time $t$ ($\widetilde x(t)$) is the number that entered state X between time 0 and $t$, less the number that transitioned out of X before time $t$. A particle that enters state X at time $s\in (0,t)$ will still be in state X according to a Bernoulli random variable with $p=S_{r}^k(t-s)$ (the expected proportion under the given gamma distribution). Therefore, to compute $\widetilde x(t)$ we can sum over our $M$ intervals and add up the number that entered state X during interval $I_i$ and, from each of those $M$ cohorts, count how many remain in X at time $t$. Specifically, the number entering X during the $i^\text{th}$ interval [$t,\;t+\Delta t$] is given by $N_i\equiv \widetilde w(t_i+\Delta t) - \widetilde w(t_i)$ (see Fig. \ref{fig:discretetime}), and thus the number remaining in X at time $t$ is the sum of the number remaining at time $t$ from each such cohort (i.e., the sum over $i=1$ to $M$) where the number remaining in X at $t$ from each cohort follows a compound binomial distribution given by the sum of $N_i$ Bernoulli random variables $B_X$ each with probability $p=S_{r}^k(t-t_i)+O(\Delta t)$. This defines our stochastic state transition model, which yields a mean field model as follows.
	
	The expected amount entering X during [$t_i,t_i+\Delta t]$ is $E(N_i)=E(\widetilde w(t_i) - \widetilde w(t_i+\Delta t))=a\,w(t_i)\Delta t$, and the expected proportion of the $i^\text{th}$ cohort remaining at time $t$ is $E(B_X)=S_{r}^k(t-t_i) + O(\Delta t)$. Thus, the expected number from the $i^\text{th}$ cohort remaining in X at time $t$ is $E(N_i)E(B_X)=a\,w(t_i)\,S_{r}^k(t-t_i)\,\Delta t + O(\Delta t^2)$. Summing these expected values over all intervals yields
	\begin{equation}\begin{split}
	x(t) = E(\widetilde x(t)) = & \lim_{M\to \infty} \sum_{i=1}^M a\,w(t_i)\,S_{r}^k(t-t_i)\,\Delta t +O(\Delta t^2) \\  =&\; \int_0^t a\,w(s)\,S_{r}^k(t-s)\,ds.
	\end{split}\end{equation}
	To calculate $y(t)$, let $K_j$ be the number entering state Y during interval $I_j$ that are still in state Y at time $t$. As above, $K_j$ can be calculated by summing (over $I_1$ to $I_{j-1}$) the number in each cohort that entered state X during $I_i$ then transitioned to state Y during time interval $I_{j}$ and are still in state Y at $t$. Therefore $K_j$ can be written as the sum of $j-1$ compound distributions given by counting how many of the $N_i$ particles that entered state X during $I_i$ then transitioned to state Y during interval $I_j$ and then persisted until time $t$ without transitioning to Z.  To count these, notice that each such particle entering state X during $I_i$, state Y during $I_j$ and persisting in Y at time $t$ follows a Bernoulli random variable $B_{ij}$ with probability  
	
	\begin{equation} 
	p_{ij}=\underbrace{(S_r^k(t_j-t_i)-S_r^k(t_{j+1}-t_i)+O(\Delta t))}_\text{P(X$\to$Y in $I_j|$W$\to$X in $I_i$)}\underbrace{(S_\mu^1(t-t_j)+O(\Delta t))}_\text{P(still in Y at $t$)}.
	\end{equation} Therefore, the number of particles that entered Y at $I_j$ and remain in state Y at time $t$, $K_j$, can be written as a compound random variable $K_j= \sum_{i=1}^{j-1}  \sum_{k=1}^{N_i} B(p_{ij})$. The expected $K_j$ is thus 
	
	\begin{equation}\begin{split}
	E&(K_j) =	 \sum_{i=1}^{j-1}  E(N_i) E(B_{ij}) \\
	=& \sum_{i=1}^{j-1}  (a\,w(t_i)\Delta t) [(S_r^k(t_j-t_i)-S_r^k(t_{j+1}-t_i))S_\mu^1(t-t_j)+O(\Delta t)] \\
	%=& \sum_{i=1}^{j-1}  a\,w(t_i) [(S_r^k(t_j-t_i)-S_r^k(t_{j+1}-t_i))S_\mu^1(t-t_j)\Delta t+O(\Delta t^2)] \\
	=& \sum_{i=1}^{j-1}   a\,w(t_i) \frac{(S_r^k(t_j-t_i)-S_r^k(t_j-t_i+\Delta t))}{\Delta t}\,S_\mu^1(t-t_j)\,\Delta t^2 +O(\Delta t^2) \\
	=& \sum_{i=1}^{j-1}   a\,w(t_i) \frac{(G_r^k(t_j-t_i+\Delta t)-G_r^k(t_j-t_i))}{\Delta t}\,S_\mu^1(t-t_j)\,\Delta t^2 +O(\Delta t^2).
	\end{split}\end{equation} 
	
	Summing over all intervals and letting $\Delta t \to 0$ ($M \to \infty$) gives 
	
	\begin{equation} \begin{split}
	y(t)=& \lim_{M\to \infty}  E(\widetilde y(t)) = \lim_{M\to \infty}  E\bigg(\sum_{j=1}^M K_j\bigg) = \lim_{M\to \infty}  \sum_{j=1}^M E(K_j) \\
	=&\lim_{M\to \infty} \sum_{j=1}^M\sum_{i=1}^{j-1}  a\,w(t_i) \frac{(G_r^k(t_j-t_i+\Delta t)-G_r^k(t_j-t_i))}{\Delta t}\,\Delta t\\ 
	& \qquad \qquad \qquad \qquad \qquad \qquad \qquad \qquad \cdot S_\mu^1(t-t_j)\,\Delta t +O(\Delta t^2) \\
	=&\int_0^t\left(\int_0^\tau a\,w(s)\,g_r^k(\tau-s)\,ds\right) S_\mu^1(t-\tau)\,d\tau.
	\end{split}\end{equation}
	%\hfill $\qed$

	\section{Erlang mixture approximation of Gamma($\alpha,\beta$)}\label{sec:approxgamma}
	
	There is a growing body of literature on methods for approximating empirical and named distributions with mixtures of Erlang random variables or other phase-type distributions \citep{Asmussen1996,jPhase,Osogami2006,Thummler2006,Reinecke2012a,Mapfit,BuTools2}. Here we give a simple example of analytically approximating a gamma distribution with a mixture of two Erlang distributions by matching moments.
	
	Suppose random variable $T$ follows a gamma($\alpha,\beta$) distribution, which has mean $\mu=\beta/\alpha$ and variance $\sigma^2=\mu/\alpha$, and shape $\beta$ is not an integer. One can approximate this gamma distribution with a mixture of two Erlang distributions that yields the same mean and variance. 
	
	These Erlang distributions are T$_{\downarrow}\sim$Erlang($r_{\downarrow},k_{\downarrow}$) and T$_{\uparrow}\sim$Erlang($r_{\uparrow},k_{\uparrow}$) where the shape parameters are obtained by rounding $\beta$ down and up, respectively, to the nearest integer ($k_\downarrow \equiv \lfloor\beta\rfloor$ and $k_\uparrow \equiv \lceil\beta\rceil$) and the rate parameters are given by $r_\downarrow=\alpha\frac{\lfloor \beta \rfloor}{\beta}$ ($r_{\downarrow} = \frac{k_\downarrow}{\mu}$) and $r_\uparrow=\alpha\frac{\lceil \beta \rceil}{\beta}$ ($r_{\uparrow} \equiv \frac{k_\uparrow}{\mu}$).  This ensures that T$_{\downarrow}$ and T$_{\uparrow}$ have mean $\mu$. 
	
	To calculate their variance, let $p\equiv \lceil\beta\rceil-\beta$ and $q\equiv \beta-\lfloor\beta\rfloor$ (note $p+q=1$). By rounding shape $\beta$ down(up) the resulting Erlang distribution has higher(lower) variance, i.e.,  \begin{equation} \sigma^2_\downarrow\equiv\frac{\mu}{r_\downarrow}=\sigma^2 \bigg( 1+\frac{q}{k_\downarrow} \bigg) \quad \text{ and } \quad  \sigma^2_\uparrow\equiv\frac{\mu}{r_\uparrow}=\sigma^2 \bigg( 1-\frac{p}{k_\uparrow} \bigg). \end{equation} 
	
	To calculate the mixing proportion, let the mixture distribution $T_\rho = B_\rho\,T_\downarrow + (1-B_\rho)\,T_\uparrow$, where $B_\rho$ is a Bernoulli random variable with $P(B_p=1)=\rho$ and \begin{equation}\rho=\frac{p/k_\uparrow}{p/k_\uparrow+q/k_\downarrow}.
	\end{equation} This Erlang mixture has the desired mean $E(T_\rho)=\rho\,\mu + (1-\rho)\,\mu=\mu$ and variance $\sigma^2$, since \begin{equation} \begin{split}
	\text{Var}(T_\rho)=&\; E\big((B_\rho\,T_\downarrow + (1-B_\rho)\,T_\uparrow)^2\big))-\mu^2\\
	=&\;E\big((B_\rho\,T_\downarrow)^2\big)+E\big(((1-B_\rho)\,T_\uparrow)^2\big)-\mu^2\\
	=&\;E\big(B_\rho^2\big)E\big(T_\downarrow^2\big)+E\big((1-B_\rho)^2\big)E\big(T_\uparrow^2\big)-\mu^2\\
	=&\;E\big(B_\rho\big)E\big(T_\downarrow^2\big)+E\big(1-B_\rho\big)E\big(T_\uparrow^2\big)-\mu^2\\
	=&\;\rho\,E\big(T_\downarrow^2\big)+(1-\rho)\,E\big(T_\uparrow^2\big)-\mu^2\\
	=&\;\rho\,\sigma_\downarrow^2+(1-\rho)\,\sigma_\uparrow^2\\
	=&\;\rho\,\sigma^2\bigg(1+\frac{q}{k_\downarrow}\bigg)+(1-\rho)\,\sigma^2\bigg(1-\frac{p}{k_\uparrow}\bigg)\\
	=&\;\sigma^2 + \sigma^2\bigg(\frac{\rho\,q}{k_\downarrow}-\frac{(1-\rho)\,p}{k_\uparrow}\bigg)\\
	=&\;\sigma^2 + \sigma^2\bigg(\frac{p/k_\uparrow\cdot q/k_\downarrow}{p/k_\uparrow+q/k_\downarrow}-\frac{q/k_\downarrow\cdot p/k_\uparrow }{p/k_\uparrow+q/k_\downarrow}\bigg) = \;  \sigma^2.
	\end{split} \end{equation} Numerical comparisons suggest this mixture is a very good approximation of the target gamma($\alpha,\beta$) distribution for shape $\beta$ values larger than roughly 3 to 5, depending (to a lesser extent) on $\alpha$.
	
	Alternatively, to approximate a gamma($\alpha,\beta$) distribution with a mixture of Erlang distributions as described above, one could also select the mixing probabilities by, for example, using an alternative metric such as a distance in probability space \citep[e.g., see][]{Rachev1991}, e.g., the $L^\infty$-norm on their CDFs, or information-theoretic quantities such as KL or Jensen-Shannon divergence.
	
\end{appendices}

\clearpage
\bibliography{./bibreferences}

\end{document}